\newtheorem{Def}{Definition}
\newtheorem{Thm}{Theorem}
\newtheorem{Lem}{Lemma}
\newtheorem{Cor}{Corollary}
\newcommand{\mb}{\mathbf}
\newcommand{\mat}[2][ccccccccccccccccccccccccccc]{\left( \begin{array}
{#1}#2 \\ \end{array} \right)}
\begin{document}

\title{\textbf{Rate Region of the Vector Gaussian One-Helper Source-Coding Problem}}

\date{}

\author{Md. Saifur Rahman and Aaron B. Wagner$^{\footnote{Both authors are with the School of Electrical and Computer Engineering, Cornell University, Ithaca, NY 14853 USA. (Email: mr534@cornell.edu, wagner@ece.cornell.edu.)}}$}

\maketitle

\begin{abstract}
We determine the rate region of the vector Gaussian one-helper source-coding problem under a covariance matrix distortion constraint.  The rate region is achieved by a simple scheme that separates the lossy vector quantization from the lossless spatial compression. The converse is established by extending and combining three analysis techniques that have been employed in the past to obtain partial results for the problem.
\end{abstract}

\textbf{Keywords:} multiterminal source coding, one-helper problem, covariance matrix distortion constraint, vector Gaussian sources, vector quantization, distortion projection, source enhancement.

\section{Introduction} \label{sec:Intro}
We study the vector Gaussian one-helper source-coding problem\footnote{The material in this paper was presented in part at the 49th Annual Allerton Conference on Communications, Control, and Computing, University of Illinois, Urbana-Champaign, Sept. 2011.}, depicted in Fig. \ref{fig:Fig1}. Here $\mb{X}$ and $\mb{Y}$ are two jointly vector Gaussian sources. Encoders 1 and 2 observe two i.i.d. strings distributed according to $\mb{X}$ and $\mb{Y}$, respectively, and separately send messages to the decoder at rates $R_1$ and $R_2$ bits per observation, respectively, using noiseless channels. The decoder uses both messages to estimate $\mb{X}$ such that a given distortion constraint on the average error covariance matrix is satisfied. The goal is to determine the rate region of the problem, which is the set of all rate pairs $(R_1,R_2)$ that allow us to satisfy the distortion constraint for some design of the encoders and the decoder.
\begin{figure}[htp]
\centering
 \includegraphics[width=3.0in]{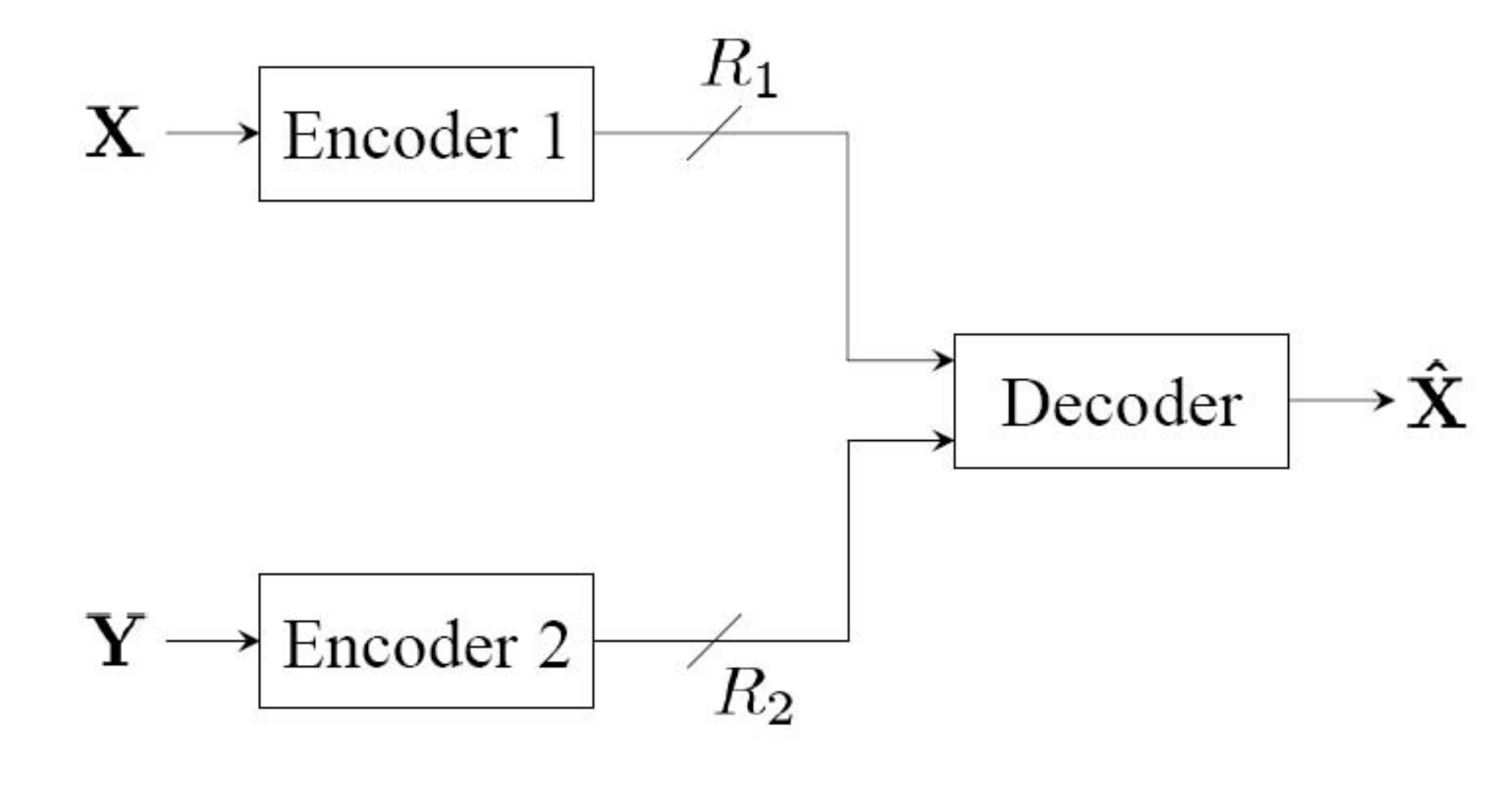}
\caption{Vector Gaussian one-helper source-coding problem.}\label{fig:Fig1}
\end{figure}

Oohama~\cite{Oohama} gave a complete characterization of the rate region
for the case in which both sources are scalar. His
achievability proof is a Gaussian scheme that is described in
more detail below. The converse argument uses the entropy-maximizing
property of the
Gaussian distribution and the entropy power inequality (EPI), and it
bears a certain resemblance
to Bergmans' earlier converse for the scalar Gaussian broadcast
channel~\cite{Bergmans}. As such, one might hope that the
\emph{channel enhancement} technique introduced by Weingarten \emph{et al.} \cite{Wein}
to solve the MIMO Gaussian broadcast channel would be sufficient
to solve the problem considered here. This turns out not to be
the case, however. Among other contributions, Liu and
Viswanath~\cite{Liu} showed that \emph{channel enhancement} yields an
outer bound for the vector one-helper problem that is not tight
in general. This was later improved slightly by the present
authors to show that the Gaussian scheme achieves a portion
of the boundary of the rate region \cite{Rahman}.
Liu and Viswanath's approach was later subsumed by
Zhang~\cite{Guo}, who applied enhacement in a different
way and called it \emph{source enhancement},
but this also yielded an outer bound that is not always tight.

\begin{figure}[htp]
\centering
\includegraphics[width=3.0in]{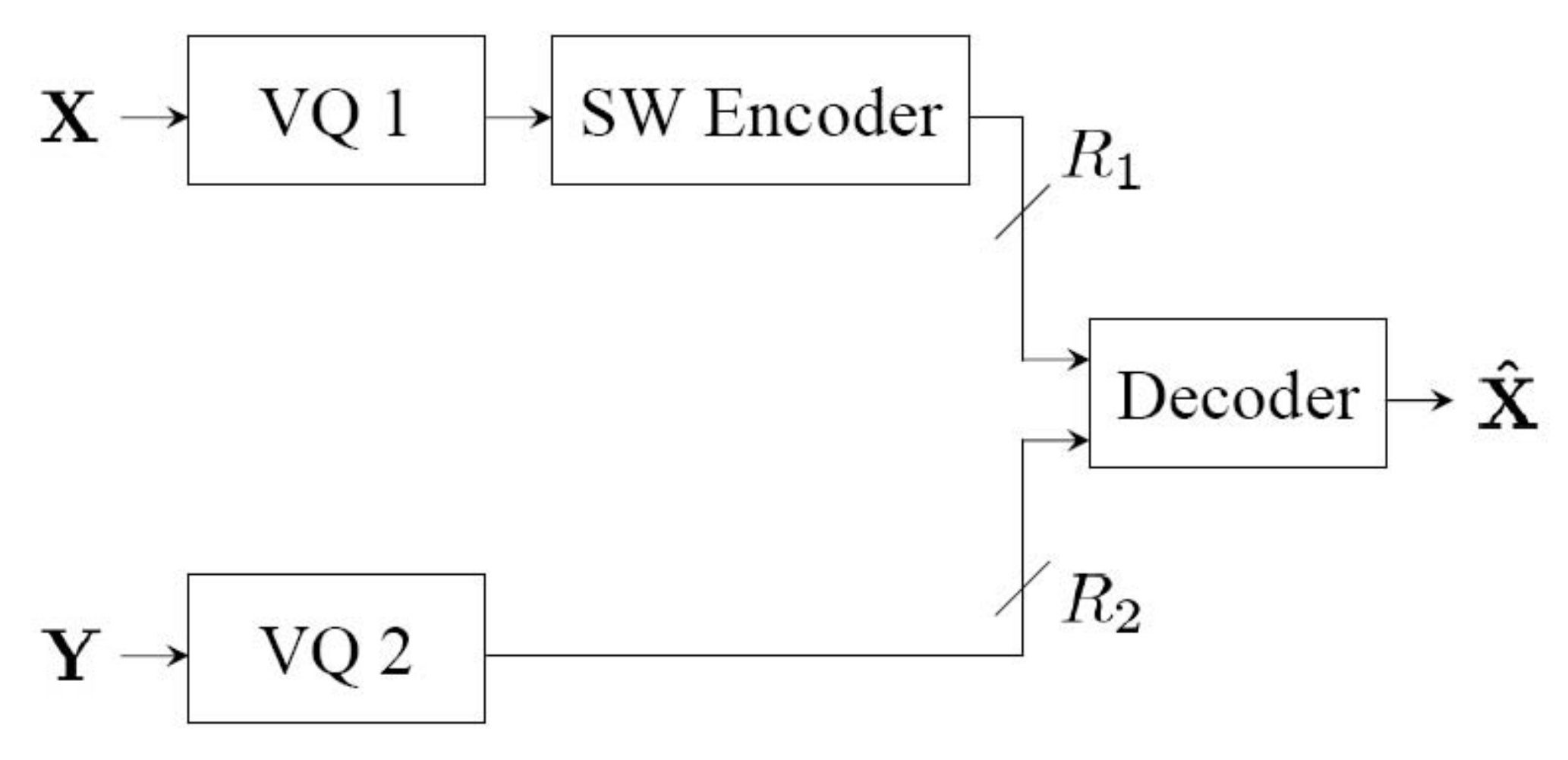}
\caption{A Gaussian achievable scheme.}\label{fig:Fig2}
\end{figure}

The case in which $Y$ is a scalar and $\mb{X}$ is a vector
was recently solved by the authors~\cite{Rahman1}. The proof did not use
enhancement, but it did require a novel technique that
we call \emph{distortion projection.} Here we shall show that
\emph{distortion projection}, \emph{source enhancement}, and
Oohama's converse technique together are sufficient to solve the
general problem in which both $\mb{X}$ and $\mb{Y}$ are vectors.
In particular, we shall determine the rate region exactly
and show that a vector extension of the Gaussian scheme
used by Oohama is optimal. In this scheme, as depicted in Fig. \ref{fig:Fig2}, encoder 1 vector quantizes (VQ) its observations using a Gaussian test channel as in point-to-point rate-distortion theory. It then compresses the quantized values using Slepian-Wolf (SW) encoding \cite{Slepian}. Encoder 2 just vector quantizes its observations using another Gaussian test channel. The decoder decodes the quantized values and estimates the observations of encoder 1 using a minimum mean-squared error (MMSE) estimator.

The rest of the paper is organized as follows. Section \ref{sec:Notation} explains the notation used in the paper. In Section \ref{sec:ProblemFormulationResults}, we present the mathematical formulation of the problem, a description of the scheme, and the statement of our main result. Section \ref{sec:ConverseOverview} gives an outline of the converse argument. Since the proof of the converse is somewhat involved, it is divided into Sections \ref{sec:PropOptGaussSol}
through \ref{sec:Extension}.

\section{Notation}\label{sec:Notation}
We use uppercase to denote random variables and vectors. Boldface is used to distinguish vectors from scalars. Arbitrary realizations of random variables and vectors are denoted in lowercase. For a random vector $\mb{X}$, $\mb{X}^n$ denotes an i.i.d. vector of length $n$, $\mb{X}^n(i)$ denotes its \emph{i}th component, and $\mb{X}^n(i : j)$ denotes the \emph{i}th through \emph{j}th components. The superscript $T$ denotes matrix transpose. The covariance matrix of $\mb{X}$ is denoted by $\mb{K_X}$. The conditional covariance matrix of $\mb{X}$ given $\mb{Y}$ is denoted by $\mb{K}_{\mb{X}|\mb{Y}}$ and is defined as
\[
\mb{K}_{\mb{X}|\mb{Y}} \triangleq E \left [ \left (\mb{X} - E(\mb{X}|\mb{Y}) \right ) \left (\mb{X} - E(\mb{X}|\mb{Y}) \right )^T \right ].
\]
All vectors are column vectors and are $m$-dimensional, unless otherwise stated. We use $\mb{I}_m$ to denote an $m \times m$ identity matrix. With a little abuse of notation, $\mb{0}$ is used to denote both zero vectors and zero matrices of appropriate dimensions. We use $\textrm{Diag}  (d_1,d_2,\dots,d_p)$ to denote a diagonal matrix with diagonal entries $d_1,d_2,\dots,d_p$. The trace of a matrix $\mb{A}$ is denoted by $\textrm{Tr}(\mb{A})$. For two real symmetric matrices $\mb{A}$ and $\mb{B}$, $\mb{A} \succcurlyeq \mb{B}$ ($\mb{A} \succ \mb{B}$) means that $\mb{A-B}$ is positive semidefinite (definite). Similarly, $\mb{A} \preccurlyeq \mb{B}$ ($\mb{A} \prec \mb{B}$) means that $\mb{B-A}$ is positive semidefinite (definite). All logarithms in this paper are to the base 2. The determinant of a matrix $\mb{K}$ is denoted by $|\mb{K}|$. The notation $X \leftrightarrow Y \leftrightarrow Z$ means that $X, Y,$ and $Z$ form a Markov chain in this order. We use $\textrm{span}\{\mb{c}_i\}_{i=1}^l$ to denote the subspace spanned by $\{\mb{c}_i\}_{i=1}^l$. 

\section{Problem Formulation and Main Results}\label{sec:ProblemFormulationResults}

Let $\mb{X}$ and $\mb{Y}$ be two generic zero-mean jointly Gaussian random vectors with covariance matrices $\mb{K_X}$ and $\mb{K_Y}$, respectively.
Initially, we shall assume that $\mb{X}$ is $m$-dimensional and
$\mb{Y}$ is $k$-dimensional.
Let $\left \{\left(\mb{X}^n(i),\mb{Y}^n(i)\right) \right  \}_{i=1}^{n}$ be a sequence of i.i.d. random vectors with the distribution at a single stage being the same as that of the generic pair $(\mb{X},\mb{Y})$. As depicted in Fig. \ref{fig:Fig1}, encoder 1 observes $\mb{X}^n$ and sends a message to the decoder using an encoding function
\begin{align*}
f_1^{(n)} : \mathbb{R}^{mn} \mapsto \left \{1,\dots,M_1^{(n)} \right \}.
\end{align*}
Analogously, encoder 2 observes $\mb{Y}^n$ and sends a message to the decoder using another encoding function
\begin{align*}
f_2^{(n)}  : \mathbb{R}^{kn} \mapsto \left \{1,\dots,M_2^{(n)} \right \}.
\end{align*}
The decoder uses both received messages to estimate $\mb{X}^n$ using a decoding function
\begin{align*}
g^{(n)}  : \left \{1,\dots,M_1^{(n)} \right \} \times \left \{1,\dots,M_2^{(n)} \right \} \mapsto \mathbb{R}^{mn}.
\end{align*}
\begin{Def}
A rate-distortion vector $\left (R_1,R_2,\mb{D} \right )$ is \emph{achievable} for the vector Gaussian one-helper source-coding problem if there exist a block length $n$, encoding functions $f_1^{(n)} $ and $f_2^{(n)} $, and a decoding function $g^{(n)} $ such that
\begin{align*}
R_i &\ge \frac{1}{n} \log M_i^{(n)}  \hspace {0.15 cm} \textrm{for all} \hspace {0.15 cm} i \in \{1,2\}, \hspace {0.15 cm} \textrm{and}\nonumber\\
\mb{D} &\succcurlyeq \frac{1}{n} \sum_{i=1}^n E \left [ \left (\mb{X}^n(i) - \hat {\mb{X}}^n(i) \right ) \left (\mb{X}^n(i) - \hat {\mb{X}}^n(i) \right )^T \right ],
\end{align*}
where
\begin{align*}
\hat {\mb{X}}^n \triangleq g^{(n)} \left (f_1^{(n)} \left (\mb{X}^n \right ), f_2^{(n)}\left (\mb{Y}^n \right ) \right ).
\end{align*}
Let ${\mathcal{RD}}$ be the set of all achievable rate-distortion vectors and $\overline {\mathcal{RD}}$ be its closure. Define
\[
\mathcal{R}\left (\mb{D} \right) \triangleq \left \{(R_1,R_2): (R_1,R_2,\mb{D}) \in \overline {\mathcal{RD}} \right \}.
\]
We call $\mathcal{R} (\mb{D})$ the \emph{rate region} for the vector Gaussian one-helper source-coding problem.
\end{Def}
Our goal is to characterize the rate region $\mathcal{R} (\mb{D})$. Note that the matrix distortion constraint is more general in the sense that it subsumes other natural distortion constraints such as a finite number of upper bounds on the mean square error of reproductions of linear functions of the source. In particular, it subsumes the case in which the distortion constraint is on the mean square error of reproductions of the components of $\mb{X}$.

Since we are interested in a quadratic distortion constraint, without loss of generality we can restrict the decoding function to be the MMSE estimate of $\mb{X}^n$ based on the received messages. Therefore, $\hat {\mb{X}}^n$ can be written as
\begin{align*}
\hat {\mb{X}}^n = E \left [\mb{X}^n \bigr| f_1^{(n)} \left (\mb{X}^n \right ), f_2^{(n)}\left (\mb{Y}^n \right )\right ].
\end{align*}
We can assume without loss of generality\footnote{Since $\mb{X}$ and $\mb{Y}$ are jointly Gaussian, we can write $$\mb{X} = \mb{A}\mb{Y}+\mb{N},$$ where $\mb{A}$ is an $m \times k$ matrix and $\mb{N}$ is an $m$-dimensional zero-mean Gaussian random vector that is independent of $\mb{Y}$. Since there is no distortion constraint on $\mb{Y},$ and $\mb{AY}$ is a sufficient statistic for $\mb{X}$ given $\mb{Y}$ (i.e., $\mb{X} \leftrightarrow \mb{Y} \leftrightarrow \mb{AY}$ and $\mb{X} \leftrightarrow \mb{AY} \leftrightarrow \mb{Y}$), we can relabel $\mb{AY}$ as $\mb{Y}$ and write $$\mb{X} = \mb{Y}+\mb{N}.$$} that
\begin{align*}
\mb{X}= \mb{Y}+\mb{N},
\end{align*}
where $\mb{N}$ is a zero-mean Gaussian random vector with the covariance matrix $\mb{K_N}$ and is independent of $\mb{Y}$. The case in which $\mb{K_X} \preccurlyeq \mb{D}$ has a trivial solution. In this case, the rate region is the entire nonnegative quadrant. So, we assume that $\mb{K_X} \preccurlyeq \mb{D}$ does not hold in the rest of the paper. This means that there exists a direction $\mb{z} \neq \mb{0}$ such that
\begin{equation}
\label{eq:assumption}
\mb{z}^T \mb{K_Xz} > \mb{z}^T \mb{Dz}.
\end{equation}
For now, we assume that $\mb{K_X}, \mb{K_Y},$ and $\mb{D}$ are positive definite. The general case of the problem will be addressed in Section \ref{sec:Extension}.

\subsection{Rate Region} \label{subsec:rateregion}
The rate region $\mathcal{R}(\mb{D})$ is a closed convex set in the nonnegative quadrant. It is closed by definition and is convex because any convex combination of two points in the rate region is in the rate region as it can be achieved by time-sharing between the encoding and decoding strategies of the two points. Therefore, we can characterize it completely by its supporting hyperplanes, which can be expressed as the following optimization problem
\[
\mathcal{R}(\mb{D}, \mu) \triangleq \inf_{(R_1,R_2) \in \mathcal{R}(\mb{D})} \mu R_1+R_2,
\]
where $\mu$ is a nonnegative real number. Let us define
\[
\mathcal{R}^{*}(\mb{D}, \mu) \triangleq \left\{
\begin{array}{l l}
  v\left (P_{pt-pt} \right) & \quad \mbox{if $0 \le \mu \le 1$}\\
  v\left (P_{G1} \right) & \quad \mbox{if $\mu > 1$,}\\ \end{array} \right.
\]
where $v\left (P_{pt-pt} \right)$ and $v\left (P_{G1} \right)$ are the optimal values of the optimization problems $\left (P_{pt-pt} \right)$ and $\left (P_{G1} \right)$, respectively, which are defined as
\begin{align*}
\left (P_{pt-pt} \right)\hspace {0.2in} \triangleq \hspace {0.2in} \min_{\mb{K}_{\mb{X}|\mb{U}}} \hspace {0.3in} &\frac{\mu}{2} \log \frac{ \left |\mb{K_X} \right |}{ \left|\mb{K}_{\mb{X}|\mb{U}}\right|} \nonumber\\
\textrm{subject to} \hspace{0.2in} &\mb{K}_{\mb{X}} \succcurlyeq \mb{K}_{\mb{X}|\mb{U}} \succcurlyeq \mb{0} \hspace{0.1in} \textrm{and}\nonumber\\
&\mb{D} \succcurlyeq \mb{K}_{\mb{X}|\mb{U}},\nonumber
\end{align*}
and 
\begin{align*}
\left (P_{G1} \right)\hspace {0.2in} \triangleq \hspace {0.2in} \min_{\mb{K}_{\mb{Y}|\mb{V}},\mb{K}_{\mb{X}|\mb{U,V}}} \hspace {0.2in} &\frac{\mu}{2} \log \frac{ \left |\mb{K}_{\mb{Y}|\mb{V}}+\mb{K_N} \right |}{ \left|\mb{K}_{\mb{X}|\mb{U,V}}\right|} + \frac{1}{2} \log \frac{ \left |\mb{K_Y} \right |}{ \left|\mb{K}_{\mb{Y}|\mb{V}}\right|}\nonumber\\
\textrm{subject to} \hspace{0.3in} &\mb{K}_{\mb{Y}} \succcurlyeq \mb{K}_{\mb{Y}|\mb{V}}  \succcurlyeq \mb{0},\\
&\mb{K}_{\mb{Y}|\mb{V}}+\mb{K_N} \succcurlyeq \mb{K}_{\mb{X}|\mb{U,V}} \succcurlyeq \mb{0}, \hspace{0.1in} \textrm{and}\nonumber\\
&\mb{D} \succcurlyeq \mb{K}_{\mb{X}|\mb{U,V}}.\nonumber
\end{align*}
We use similar notation to denote other optimization problems and their optimal values throughout the paper. The main result of this paper is the following theorem.
\begin{Thm} \label{thm:MainThm}
The minimum weighted sum rate for the vector Gaussian one-helper source
coding problem is given by the solution to the above matrix optimization
problem
\begin{align*}
\mathcal{R}(\mb{D}, \mu) = \mathcal{R}^{*}(\mb{D},\mu).
\end{align*}
\end{Thm}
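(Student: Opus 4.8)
The plan is to establish the achievability bound $\mathcal{R}(\mb{D},\mu)\le\mathcal{R}^{*}(\mb{D},\mu)$ and the matching converse separately. For achievability I would analyze the scheme of Fig.~\ref{fig:Fig2}: let encoder~2 quantize with a Gaussian test channel $\mb{Y}\to\mb{V}$ and encoder~1 quantize with $\mb{X}\to\mb{U}$ and then Slepian--Wolf bin against $\mb{V}$ (legitimate since $\mb{U}\leftrightarrow\mb{X}\leftrightarrow\mb{Y}\leftrightarrow\mb{V}$). Standard random-coding arguments give an achievable triple with $R_{2}=\frac12\log(|\mb{K_Y}|/|\mb{K}_{\mb{Y}|\mb{V}}|)$, $R_{1}=\frac12\log(|\mb{K}_{\mb{X}|\mb{V}}|/|\mb{K}_{\mb{X}|\mb{U},\mb{V}}|)$, and decoder error covariance $\mb{K}_{\mb{X}|\mb{U},\mb{V}}\preccurlyeq\mb{D}$; since $\mb{N}$ is independent of $(\mb{Y},\mb{V})$ we have $\mb{K}_{\mb{X}|\mb{V}}=\mb{K}_{\mb{Y}|\mb{V}}+\mb{K_N}$, so the weighted sum rate of this family is exactly the objective of $(P_{G1})$, and optimizing over the test channels gives $\mathcal{R}(\mb{D},\mu)\le v(P_{G1})$. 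Taking $\mb{V}$ trivial (equivalently, point-to-point coding at encoder~1 with $R_{2}=0$) reduces $(P_{G1})$ to $(P_{pt-pt})$ and gives $\mathcal{R}(\mb{D},\mu)\le v(P_{pt-pt})$. Since $v(P_{G1})\le v(P_{pt-pt})$, this together with the converse below pins down $\mathcal{R}(\mb{D},\mu)=v(P_{pt-pt})$ for $\mu\le1$ and $=v(P_{G1})$ for $\mu>1$.

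For the converse I would split on $\mu$. When $0\le\mu\le1$ a classical single-letter argument suffices: writing $\mb{W}_{i}$ for the message of encoder~$i$, the Markov chain $\mb{W}_{2}\leftrightarrow\mb{Y}^{n}\leftrightarrow\mb{X}^{n}$ and data processing give $nR_{2}\ge I(\mb{X}^{n};\mb{W}_{2})$ and $nR_{1}\ge I(\mb{X}^{n};\mb{W}_{1}|\mb{W}_{2})\ge I(\mb{X}^{n};\mb{W}_{1},\mb{W}_{2})-I(\mb{X}^{n};\mb{W}_{2})$, so $\mu n R_{1}+nR_{2}\ge\mu I(\mb{X}^{n};\mb{W}_{1},\mb{W}_{2})+(1-\mu)I(\mb{X}^{n};\mb{W}_{2})\ge\mu I(\mb{X}^{n};\mb{W}_{1},\mb{W}_{2})$; the point-to-point rate-distortion converse for $\mb{X}$, using Gaussian maximum entropy, concavity of $\log\det$, and the realized error covariance $\bar{\mb{D}}\preccurlyeq\mb{D}$, lower bounds the last term by $n\,v(P_{pt-pt})$.

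The case $\mu>1$ is the heart of the argument and is where the three techniques combine. I would first establish, from the KKT conditions of $(P_{G1})$ (Section~\ref{sec:PropOptGaussSol}), the structural properties of an optimal pair $(\mb{K}_{\mb{Y}|\mb{V}}^{*},\mb{K}_{\mb{X}|\mb{U},\mb{V}}^{*})$: existence, and that the constraint $\mb{D}\succcurlyeq\mb{K}_{\mb{X}|\mb{U},\mb{V}}^{*}$ holds with equality exactly on the subspace singled out by its Lagrange multiplier. \emph{Distortion projection} then lets me assume without loss of generality that $\mb{D}$ is active in every direction: I project $\mb{X},\mb{Y},\mb{N},\mb{D}$ onto that subspace and dispatch the orthogonal complement, on which the distortion constraint is slack, by a separate and easier bound. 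Next, \emph{source enhancement} constructs an enhanced noise covariance $\tilde{\mb{K}}_{\mb{N}}\preccurlyeq\mb{K_N}$ (equivalently, a cleaner source $\tilde{\mb{X}}=\mb{Y}+\tilde{\mb{N}}$) such that the optimal Gaussian solution remains feasible with the same objective value, while at that operating point the covariance matrices that enter the vector entropy power inequality are aligned, so the inequality is tight. Finally I run \emph{Oohama's converse} on the enhanced problem: with auxiliaries $V_{i}$ and $U_{i}$ built from $(\mb{W}_{2},\mb{Y}^{n}(1\!:\!i-1))$ and $(\mb{W}_{1},\mb{X}^{n}(1\!:\!i-1))$, one gets $nR_{2}\ge\sum_{i}I(\mb{Y}^{n}(i);V_{i})$ and $nR_{1}\ge\sum_{i}I(\tilde{\mb{X}}^{n}(i);U_{i}|V_{i})$, applies the conditional EPI to relate $h(\tilde{\mb{X}}^{n}|\mb{W}_{2})$ to $h(\mb{Y}^{n}|\mb{W}_{2})$, bounds $h(\tilde{\mb{X}}^{n}|\mb{W}_{1},\mb{W}_{2})$ by the distortion via Gaussian maximum entropy, and single-letterizes with a time-sharing variable and convexity to reach $\mu n R_{1}+nR_{2}\ge n\,v(P_{G1})$, using that the enhanced and original Gaussian problems have the same value. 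The extension to singular $\mb{K_X},\mb{K_Y},\mb{D}$ would then follow by a limiting/perturbation argument as in Section~\ref{sec:Extension}.

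The main obstacle is making these three pieces mesh. Applied by itself, the vector EPI yields only a determinant inequality, which is exactly what produces the loose outer bounds of Liu--Viswanath and Zhang; source enhancement must therefore be chosen so that the EPI is tight at the optimum, but the enhancement that does this must \emph{simultaneously} preserve the value of $(P_{G1})$ and be compatible with the subspace produced by distortion projection --- the enhanced noise must coincide with the original on the active subspace while being adjusted off it. Proving that such an enhancement always exists (solving the matrix equations defining it and checking that the KKT conditions transfer from the original Gaussian problem to the enhanced one) is the delicate step, and it is precisely here that one needs the structural lemmas on $(P_{G1})$ and the projection reduction together rather than either technique in isolation.
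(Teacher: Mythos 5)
Your proposal is correct and follows essentially the same route as the paper: Berger--Tung achievability giving $\mathcal{R}(\mb{D},\mu)\le\mathcal{R}^{*}(\mb{D},\mu)$; an easy converse for $\mu\le1$ via data processing and the point-to-point bound; and for $\mu>1$ the three-ingredient combination of KKT analysis of the Gaussian problem, distortion projection, source enhancement, and Oohama's EPI argument. The main organizational difference is where single-letterization occurs. The paper first applies a Berger--Tung-style single-letter outer bound (Lemma~\ref{lem:SinLetterOuterBound}), reducing the converse entirely to proving that a Gaussian $(\mb{U},\mb{V})$ is optimal for the single-letter problem $(P)$ (Theorem~\ref{thm:MainOptProb}); projection, enhancement, and the EPI are then all run at the single-letter level on $(P)$, $(\tilde P)$, $(\hat P)$. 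You instead propose building $n$-letter auxiliaries $U_i,V_i$, applying the conditional EPI to $h(\tilde{\mb{X}}^n\mid\mb{W}_2)$, and single-letterizing at the end via time-sharing and convexity --- closer to Oohama's original scalar proof. Both orderings work; the paper's is arguably cleaner because it isolates the hard optimization $v(P)=v(P_{G1})$ as a standalone statement. One small imprecision in your "obstacle" paragraph: you say the enhanced noise must coincide with the original \emph{on} the active subspace and be adjusted off it, but in the paper's execution distortion projection first restricts everything to the active subspace spanned by $\mb{S}$, and the enhancement (replacing $\mb{K}_{\tilde{\mb{X}}},\mb{K}_{\tilde{\mb{Y}}}$ by $\mb{K}_{\hat{\mb{X}}},\mb{K}_{\hat{\mb{Y}}}$) acts precisely on that projected space; what must be preserved is the value of the \emph{projected} Gaussian problem, which Lemma~\ref{lem:SrcEnh2} guarantees. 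This does not change the validity of your plan.
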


\subsection{A Gaussian Achievable Scheme} \label{subsec:scheme}In this subsection, we present a {Gaussian achievable scheme} (Fig. \ref{fig:Fig2}). The scheme is well-known and is sometimes referred to as the Berger-Tung scheme \cite{Berger,Tung}. This scheme is known to be optimal for several problems in Gaussian multiterminal source-coding literature \cite{Oohama, Zhang, Oohama2005, Vinod, Viswanathan, Wagner, Wagner1}. However, it is not optimal in some cases. For instance, a lattice-based scheme can outperform it if the goal is to reconstruct a hidden random vector that is jointly Gaussian with $\mb{X}$ and $\mb{Y}$~\cite{WagnerLinear, Pradhan}, and the discrete memoryless version of the scheme can be suboptimal if the sources have common components~\cite{WagnerCommon}. For the problem under consideration however, we shall prove that the Berger-Tung scheme is indeed optimal. We present an overview of the scheme here. The details for similar problem setups can be found in \cite{Oohama,Zhang}.

Let $\mathcal{S}$ be the set of zero-mean jointly Gaussian random vectors $\mb{U}$ and $\mb{V}$ such that
\begin{enumerate}
\item[(C1)] $\mb{U}, \mb{X}$, $\mb{Y}$, and $\mb{V}$ form a Markov chain $\mb{U} \leftrightarrow \mb{X} \leftrightarrow \mb{Y} \leftrightarrow \mb{V}$, and
\item[(C2)] $\mb{K}_{\mb{X|U,V}} \preccurlyeq \mb{D}$.
\end{enumerate}
Consider any $(\mb{U,V}) \in \mathcal{S}$ and a large block length $n$. Let $R_1^{'} \triangleq I(\mb{X};\mb{U})+\epsilon$, where $\epsilon > 0$. To construct the codebook for encoder 1, first generate $2^{n R_1^{'}}$ independent codewords $\mb{U}^n$ randomly according to the marginal distribution of $\mb{U}$, and then uniformly distribute them into $2^{n R_1}$ bins. Encoder 2's codebook is constructed by generating $2^{n R_2}$ independent codewords $\mb{V}^n$ randomly according to the marginal distribution of $\mb{V}$.

Given a source sequence $\mb{X}^n$, encoder 1 looks for a codeword $\mb{U}^n$ that is jointly typical with $\mb{X}^n$, and sends the index $b$ of the bin to which $\mb{U}^n$ belongs. Encoder 2, upon observing $\mb{Y}^n$, sends the index of the codeword $\mb{V}^n$ that is jointly typical with $\mb{Y}^n$. The decoder receives the two indices, then looks into the bin $b$ for a codeword $\mb{U}^n$ that is jointly typical with $\mb{V}^n$. The decoder can recover $\mb{U}^n$ and $\mb{V}^n$ with high probability as long as
\begin{align*}
R_1 &\ge I(\mb{X};\mb{U}|\mb{V}) \hspace{0.1in} \textrm{and}\\
R_2 &\ge I(\mb{Y};\mb{V}).
\end{align*}
The decoder then computes the MMSE estimate of the source $\mb{X}^n$ given the messages $\mb{U}^n$ and $\mb{V}^n$, and (C2) above guarantees that this estimate will satisfy the covariance matrix distortion constraint. Let
\begin{align*}
\mathcal{R}_G(\mb{D}) \triangleq \bigr \{(R_1,R_2) &: \hspace{0.05in} \textrm{there exists}\hspace{0.05in} (\mb{U,V}) \in \mathcal{S} \hspace{0.05in}\textrm{such that} \\
R_1 &\ge I(\mb{X};\mb{U}|\mb{V}) \hspace{0.1in} \textrm{and} \\
R_2 &\ge I(\mb{Y};\mb{V}) \bigr\}.
\end{align*}
Furthermore, define
\[
\mathcal{R}_G(\mb{D}, \mu) \triangleq \min_{(R_1,R_2) \in \mathcal{R}_G(\mb{D})} \mu R_1+R_2.
\]
The following lemma gives the weighted sum-rate achieved by this scheme.
\begin{Lem} \label{lem:BTRegion}
The {Gaussian achievable scheme} achieves $\mathcal{R}_G(\mb{D},\mu)$ and
\[
\mathcal{R}_G(\mb{D},\mu) = \mathcal{R}^{*}(\mb{D},\mu).
\]
\end{Lem}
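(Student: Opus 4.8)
\noindent This lemma contains two statements: that the Gaussian (Berger--Tung) scheme achieves the region $\mathcal{R}_G(\mb{D})$, and that $\min_{(R_1,R_2)\in\mathcal{R}_G(\mb{D})}(\mu R_1+R_2)=\mathcal{R}^{*}(\mb{D},\mu)$. The first statement is the standard vector-Gaussian Berger--Tung analysis sketched just above the statement: independent Gaussian codebooks for $\mb{U}^n$ and $\mb{V}^n$, joint-typicality (covering) encoding, Slepian--Wolf binning at encoder~1, sequential decoding of $\mb{V}^n$ and then of the unique $\mb{U}^n$ in the announced bin that is jointly typical with $\mb{V}^n$, and an MMSE reconstruction of $\mb{X}^n$; the covering and packing estimates produce the constraints $R_1\ge I(\mb{X};\mb{U}|\mb{V})$ and $R_2\ge I(\mb{Y};\mb{V})$, while (C2) and the joint Gaussianity of $(\mb{X}^n,\mb{U}^n,\mb{V}^n)$ give the covariance distortion guarantee up to the usual vanishing correction from atypical events; I would cite \cite{Oohama,Zhang,Berger,Tung} for these routine details. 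Since the weights $\mu$ and $1$ are nonnegative and $\mathcal{R}_G(\mb{D})$ is upward closed, this reduces to
\[
\mathcal{R}_G(\mb{D},\mu)=\inf_{(\mb{U},\mb{V})\in\mathcal{S}}\Bigl[\mu\,I(\mb{X};\mb{U}|\mb{V})+I(\mb{Y};\mb{V})\Bigr].
\]

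\noindent The plan for the rest is to evaluate this infimum in matrix form. For jointly Gaussian $(\mb{X},\mb{Y},\mb{U},\mb{V})$ one has $I(\mb{Y};\mb{V})=\tfrac12\log\bigl(|\mb{K_Y}|/|\mb{K}_{\mb{Y}|\mb{V}}|\bigr)$ and $I(\mb{X};\mb{U}|\mb{V})=\tfrac12\log\bigl(|\mb{K}_{\mb{X}|\mb{V}}|/|\mb{K}_{\mb{X}|\mb{U},\mb{V}}|\bigr)$; moreover, since $\mb{X}=\mb{Y}+\mb{N}$ with $\mb{N}$ independent of $\mb{Y}$ and, by (C1), independent of $\mb{V}$, we get $\mb{K}_{\mb{X}|\mb{V}}=\mb{K}_{\mb{Y}|\mb{V}}+\mb{K_N}$. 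Hence the objective above is exactly the objective of $(P_{G1})$, with $\mb{K}_{\mb{Y}|\mb{V}}$ and $\mb{K}_{\mb{X}|\mb{U},\mb{V}}$ in the roles of the decision variables. It remains to check that, as $(\mb{U},\mb{V})$ ranges over $\mathcal{S}$, the pair $\bigl(\mb{K}_{\mb{Y}|\mb{V}},\mb{K}_{\mb{X}|\mb{U},\mb{V}}\bigr)$ ranges over exactly the feasible set of $(P_{G1})$. One inclusion is immediate: $\mb{0}\preccurlyeq\mb{K}_{\mb{Y}|\mb{V}}\preccurlyeq\mb{K_Y}$ always, $\mb{0}\preccurlyeq\mb{K}_{\mb{X}|\mb{U},\mb{V}}\preccurlyeq\mb{K}_{\mb{X}|\mb{V}}=\mb{K}_{\mb{Y}|\mb{V}}+\mb{K_N}$ because conditioning on more variables does not increase the error covariance, and $\mb{K}_{\mb{X}|\mb{U},\mb{V}}\preccurlyeq\mb{D}$ by (C2). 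For the other inclusion, given a feasible $(\mb{S},\mb{T})$ I would first realize $\mb{S}$ as $\mb{K}_{\mb{Y}|\mb{V}}$ by taking $\mb{V}$ to be an appropriate linear function of $\mb{Y}$ plus an independent Gaussian vector, and then, holding $\mb{V}$ fixed so that $\mb{K}_{\mb{X}|\mb{V}}=\mb{S}+\mb{K_N}$, realize $\mb{T}$ as $\mb{K}_{\mb{X}|\mb{U},\mb{V}}$ by taking $\mb{U}$ to be a linear function of $\mb{X}$ plus an independent Gaussian vector; this is possible exactly because $\mb{0}\preccurlyeq\mb{T}\preccurlyeq\mb{S}+\mb{K_N}$, and it makes (C1) hold because $\mb{V}$ depends only on $\mb{Y}$ (plus fresh noise) and $\mb{U}$ only on $\mb{X}$ (plus fresh noise). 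Concluding, $\mathcal{R}_G(\mb{D},\mu)=v(P_{G1})$ for all $\mu\ge 0$; in particular this proves the $\mu>1$ case of the lemma.

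\noindent For $0\le\mu\le1$ it then suffices to show $v(P_{G1})=v(P_{pt-pt})$. The bound $v(P_{G1})\le v(P_{pt-pt})$ follows by restricting $(P_{G1})$ to $\mb{K}_{\mb{Y}|\mb{V}}=\mb{K_Y}$ (a constant $\mb{V}$): the $\log(|\mb{K_Y}|/|\mb{K}_{\mb{Y}|\mb{V}}|)$ term drops, $\mb{K}_{\mb{X}|\mb{V}}$ becomes $\mb{K_X}$, and $(P_{G1})$ collapses to $(P_{pt-pt})$. For the reverse I would go back to the information form: for $(\mb{U},\mb{V})\in\mathcal{S}$, the data-processing inequality for $\mb{X}\leftrightarrow\mb{Y}\leftrightarrow\mb{V}$ gives $I(\mb{Y};\mb{V})\ge I(\mb{X};\mb{V})$, so, using $\mu\le1$ and $I(\mb{X};\mb{V})\ge0$,
\[
\mu\,I(\mb{X};\mb{U}|\mb{V})+I(\mb{Y};\mb{V})\ \ge\ \mu\Bigl[I(\mb{X};\mb{U}|\mb{V})+I(\mb{X};\mb{V})\Bigr]=\mu\,I(\mb{X};\mb{U},\mb{V})=\frac{\mu}{2}\log\frac{|\mb{K_X}|}{|\mb{K}_{\mb{X}|\mb{U},\mb{V}}|}.
\]
Because $\mb{K}_{\mb{X}|\mb{U},\mb{V}}\preccurlyeq\mb{D}$ and $\mb{K}_{\mb{X}|\mb{U},\mb{V}}\preccurlyeq\mb{K_X}$, the matrix $\mb{K}_{\mb{X}|\mb{U},\mb{V}}$ is feasible for $(P_{pt-pt})$, so the right-hand side is at least $v(P_{pt-pt})$; taking the infimum over $\mathcal{S}$ yields $v(P_{G1})=\mathcal{R}_G(\mb{D},\mu)\ge v(P_{pt-pt})$. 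Together with the previous paragraph this gives $\mathcal{R}_G(\mb{D},\mu)=\mathcal{R}^{*}(\mb{D},\mu)$ in all cases.

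\noindent The one step that needs real care is the realizability claim in the second paragraph --- identifying the image of $\mathcal{S}$ under $(\mb{U},\mb{V})\mapsto(\mb{K}_{\mb{Y}|\mb{V}},\mb{K}_{\mb{X}|\mb{U},\mb{V}})$ with the feasible set of $(P_{G1})$ --- and especially handling the boundary cases where $\mb{S}$ or $\mb{T}$ is singular (where the noise covariance in the test channel becomes rank-deficient or the test channel degenerates to the identity, so one argues by an explicit construction or a limiting argument); everything else is either textbook coding theory or a one-line matrix or information inequality. (As an alternative to the information-theoretic step in the third paragraph, one can bound $v(P_{G1})$ below directly in matrix form, using $\mb{K_X}=\mb{K_Y}+\mb{K_N}$ and the determinant monotonicity inequality $|\mb{S}+\mb{K_N}|/|\mb{S}|\ge|\mb{K_Y}+\mb{K_N}|/|\mb{K_Y}|$ valid whenever $\mb{0}\prec\mb{S}\preccurlyeq\mb{K_Y}$.)
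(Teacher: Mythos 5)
Your proposal is correct and follows essentially the same approach as the paper's Appendix A, with a minor reorganization. The paper proves $\mathcal{R}_G(\mb{D},\mu)=v(P_{pt-pt})$ directly for $\mu\in[0,1]$ by peeling off $\mu I(\mb{X};\mb{U})$ (with the achieving choice $\mb{V}$ independent of everything), and separately identifies $\mathcal{R}_G(\mb{D},\mu)=v(P_{G1})$ for $\mu>1$; you instead first establish $\mathcal{R}_G(\mb{D},\mu)=v(P_{G1})$ for all $\mu\ge 0$ via the covariance parameterization and then show $v(P_{G1})=v(P_{pt-pt})$ when $\mu\le 1$, using the same data-processing inequality but a slightly different grouping that lands on $\mb{K}_{\mb{X}|\mb{U},\mb{V}}$ rather than $\mb{K}_{\mb{X}|\mb{U}}$ (either works, since the minimum in $(P_{pt-pt})$ is over a single matrix constrained to lie below $\mb{K_X}$ and $\mb{D}$). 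The realizability claim you flag — that every feasible pair $(\mb{K}_{\mb{Y}|\mb{V}},\mb{K}_{\mb{X}|\mb{U},\mb{V}})$ for $(P_{G1})$ arises from some $(\mb{U},\mb{V})\in\mathcal{S}$, including the boundary where the test-channel noise degenerates — is also left implicit in the paper ("can be characterized by"), so your explicit caveat is if anything a small improvement in clarity rather than a deviation from the paper's argument.
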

\begin{proof}
It follows immediately that the {Gaussian achievable scheme} achieves $\mathcal{R}_G(\mb{D},\mu)$. The equality in Lemma \ref{lem:BTRegion} is proved in Appendix A.
\end{proof}
Lemma \ref{lem:BTRegion} implies that $$\mathcal{R}(\mb{D},\mu) \le \mathcal{R}^{*}\left (\mb{D},\mu \right).$$ We prove the reverse inequality (converse) next. Since the proof is rather long, we divide it into sections. The
next section gives a nonrigorous overview of the argument.
In the following section, we study the optimization problem $\left (P_{G1} \right)$ in the definition of $\mathcal{R}^{*}(\mb{D},\mu)$ and establish several properties that its optimal solution satisfies. We use these properties in Section \ref{sec:ConvIngre} to prove the main result needed for the converse. We finally complete the proof of Theorem \ref{thm:MainThm} in Section \ref{sec:ConvProof}.

\section{Overview of the Converse Argument}\label{sec:ConverseOverview}

The starting point of our proof is Oohama's converse
for the scalar case, which proceeds as follows. Let $f_1^{(n)}$ and
$f_2^{(n)}$ be encoding functions and $g^{(n)}$ be a decoding
function that achieve the rate-distortion vector $(R_1,R_2,D)$. Let $C_1 \triangleq f_1^{(n)}(X^n)$ and $C_2 \triangleq f_2^{(n)}(Y^n).$
By standard steps, we have
\begin{align*}
nR_2 & \ge \log M_2^{(n)} \\
     & \ge H(C_2) \\
     & = I(Y^n;C_2).
\end{align*}
Likewise, we have
\begin{align*}
nR_1 & \ge \log M_1^{(n)} \\
     & \ge H(C_1) \\
     & \ge H(C_1|C_2) \\
     & = I(X^n;C_1|C_2) \\
     & = I(X^n;C_1,C_2) -
                 I(X^n;C_2).
\end{align*}
It follows that
\begin{align}
\nonumber
  nR_1  \ge \hspace{.25in} \inf_{C_1,C_2} & \hspace{.125in} I(X^n;C_1,C_2) -
                 I(X^n;C_2) \\
                 \label{eq:mainopt}
     \text{subject to} & \hspace{.125in}
         \sum_{i = 1}^n E \left[(X^n(i) - E[X^n(i)|C_1,C_2])^2 \right] \le nD, \\
\nonumber
     & \hspace{.125in}
         I(Y^n;C_2) \le n R_2, \hspace{0.1in} \textrm{and}  \\
\nonumber
          & \hspace{.125in}
         X^n \leftrightarrow Y^n \leftrightarrow C_2.
\end{align}
Now this infimum can be lower bounded by separately optimizing each term
\begin{align}
\label{eq:sep}
\nonumber
nR_1  \ge \hspace{.25in} \inf_{C_1,C_2}   & \hspace{.125in}         I(X^n;C_1,C_2) \hspace{2in} - \hspace{0.5in}
            \sup_{C_2} \hspace{.125in} I(X^n;C_2) \\
            \text{subject to} & \hspace{.125in}
         \sum_{i = 1}^n E \left[(X^n(i) - E[X^n(i)|C_1,C_2])^2 \right] \le nD \hspace{0.62in}\text{subject to} \hspace{.125in}
         I(Y^n;C_2) \le n R_2 \hspace{0.1in} \textrm{and}  \\
         \nonumber
         & \hspace{4in}
         X^n \leftrightarrow Y^n \leftrightarrow C_2.
\end{align}
The first optimization problem,
\begin{align*}
\nonumber
\inf_{C_1,C_2} & \hspace{.125in} I(X^n;C_1,C_2) \\
\text{subject to} & \hspace{.125in}
\sum_{i = 1}^n E \left[(X^n(i) - E[X^n(i)|C_1,C_2])^2 \right] \le nD,
\end{align*}
which we call the \emph{distortion problem},
can be solved using the entropy-maximizing property of the
Gaussian distribution and the concavity of the logarithm.
The second problem,
\begin{align}
\label{eq:opt2}
\nonumber
\sup_{C_2} & \hspace{.125in} I(X^n;C_2) \\
\text{subject to} & \hspace{.125in}  I(Y^n;C_2) \le n R_2 \hspace{0.1in} \textrm{and}\\
\nonumber
& \hspace{.125in} X^n \leftrightarrow Y^n \leftrightarrow C_2,
\end{align}
which we call the \emph{helper problem},
can be solved via the conditional version of the
entropy power inequality~\cite{Bergmans}.
Substituting these solutions into~(\ref{eq:sep}) yields
exactly the $R_1$ achieved by the scheme from the
previous section for the given $R_2$ and $D$.
This completes Oohama's converse proof for
the scalar case.

The key to Oohama's proof is that separately minimizing
the two terms in~(\ref{eq:mainopt}) does not decrease the
objective. More precisely, for any pair $(C_1^*,
C_2^*)$ that achieves the infimum in~(\ref{eq:mainopt})
we have
\begin{align}
\label{eq:distortionloss}
\nonumber
I(X^n;C_1^*,C_2^*) = \hspace{.25in} \inf_{C_1,C_2} & \hspace{.125in} I(X^n;C_1,C_2) \\
\text{subject to} & \hspace{.125in}
\sum_{i = 1}^n E \left[(X^n(i) - E[X^n(i)|C_1,C_2])^2 \right] \le nD,
\end{align}
and
\begin{align}
\label{eq:helperloss}
\nonumber
I(X^n;C_2^*) = \hspace{.35in} \sup_{C_2} & \hspace{.125in} I(X^n;C_2) \\
\text{subject to} & \hspace{.125in}  I(Y^n;C_2) \le n R_2 \hspace{0.1in} \textrm{and}\\
\nonumber
& \hspace{.125in} X^n \leftrightarrow Y^n \leftrightarrow C_2,
\end{align}
Whenever (\ref{eq:distortionloss}) occurs, we shall say that the
\emph{distortion problem incurs no loss.}
Whenever (\ref{eq:helperloss}) occurs, we shall say that the
\emph{helper problem incurs no loss.}

It is not difficult to verify that this proof also works when
$X$ is a scalar and $\mb{Y}$ is a vector. In particular,
both the distortion and helper problems incur no loss in
this case.
When both $\mb{X}$ and $\mb{Y}$ are vectors, the proof
breaks down in three places:
\begin{enumerate}
\item The distortion problem incurs a loss in general.
For instance, if $\mb{D} \preccurlyeq \mb{K_X}$, then the distortion problem is solved by choosing $C_1$ and
$C_2$ so that
$$
\sum_{i = 1}^n E\left[\Bigr(\mb{X}^n(i) - E[\mb{X}^n(i)|C_1,C_2]\Bigr)
         \Bigr(\mb{X}^n(i) - E[\mb{X}^n(i)|C_1,C_2]\Bigr)^T\right] = n \mb{D}.
$$
That is, the constraint is
met with equality. For the original problem in (\ref{eq:mainopt}), on the other
hand, even if $\mb{D} \preccurlyeq \mb{K_X}$ we can only guarantee that
$$
\sum_{i = 1}^n E\left[\Bigr(\mb{X}^n(i) - E[\mb{X}^n(i)|C_1^*,C_2^*]\Bigr)
         \Bigr(\mb{X}^n(i) - E[\mb{X}^n(i)|C_1^*,C_2^*]\Bigr)^T\right] \preccurlyeq n \mb{D},
$$
and equality does not hold in general.
The lack of equality
is easiest to see when $\mb{K_Y}$ is poorly conditioned. If
$\mb{K_Y}$ has essentially one nonzero eigenvalue, then the helper
will allocate all of its rate in the direction of the associated
eigenvector. If $R_2$ is large, this could result in ``overshooting''
the distortion constraint in that direction.

\item The helper problem also incurs a loss in general.
One way of seeing this is to note that if the
goal is only to maximize the mutual information in
(\ref{eq:opt2}), then one might choose $C_2$ to favor
a direction along which the distortion constraint
$\mb{D}$ is not active over one for which it is.
This would necessarily deviate from the optimizer
$C_2^*$ of the original problem.
\item The vector EPI does not solve the helper problem
       in general.
\end{enumerate}
To address the first issue, observe that the distortion
problem incurs no loss if the optimizers $C_1^*$ and
$C_2^*$ for the original problem happen to meet the
distortion constraint with equality, i.e., it holds
that
$$
\sum_{i = 1}^n E\left[\Bigr(\mb{X}^n(i) - E[\mb{X}^n(i)|C_1^*,C_2^*]\Bigr)
         \Bigr(\mb{X}^n(i) - E[\mb{X}^n(i)|C_1^*,C_2^*]\Bigr)^T\right] = n \mb{D}.
         $$
In prior work~\cite{Rahman1}, we showed that it is possible
to reduce the general case to this one by projecting
the source and the distortion constraint in the
directions in which the distortion constraint is
met with equality for the candidate optimal
scheme. We call this process \emph{distortion
projection.} This addresses the first issue.
One can verify that if $\mb{X}$ is
a vector and $Y$ is a scalar, then the second
and third issues do not arise, and hence \emph{distortion projection} together with Oohama's converse arguments is sufficient to solve the
problem~\cite{Rahman1}.

Liu and Viswanath~\cite{Liu} showed that the
\emph{channel enhancement} technique of Weingarten
\emph{et al.}~\cite{Wein} is sufficient to solve the helper
problem in the vector case, thereby addressing the
third issue. Their solution, however, is not sufficient to handle the second issue. Recently, Zhang~\cite{Guo} introduced a
variation on the enhancement idea called \emph{source
enhancement} that subsumes Liu and Viswanath's approach.
\emph{Source enhancement} effectively replaces
the original problem with a relaxation for which the
helper problem incurs no loss and the vector EPI solves the helper problem,
although Zhang does not
describe it in this way. This addresses the second and third
issues. Thus it appears that \emph{distortion projection}, \emph{source enhancement}, and Oohama's converse technique together should be sufficient
to solve the case in which
both $\mb{X}$ and $\mb{Y}$ are vectors.
We shall show that this is indeed true. \emph{Source enhancement} and Oohama's converse technique are
lifted directly from~\cite{Oohama, Guo}. The \emph{distortion projection}, on the other hand, requires an extension
beyond what was needed in the scalar helper case~\cite{Rahman1}.
This extension requires us to first establish several properties of
the optimal Gaussian solution to the problem, to which we turn next.

\section{Properties of the Optimal Gaussian Solution}\label{sec:PropOptGaussSol}
In this section, we study the optimization problem $\left (P_{G1} \right)$ defined in Section \ref{subsec:rateregion}. Note first that the constraints
\begin{align*}
\mb{K}_{\mb{Y}|\mb{V}} &\succcurlyeq \mb{0} \hspace{0.1in} \textrm{and}\\
\mb{K}_{\mb{X}|\mb{U,V}} &\succcurlyeq \mb{0}
\end{align*}
are never active because otherwise the objective value is infinite. We therefore ignore these constraints in the study of the problem. Now, instead of studying $\left (P_{G1} \right)$ directly as it is, we study an equivalent formulation. This formulation is implicit in \cite{Guo}. Note that if $\mb{K}_{\mb{Y}|\mb{V}}$ and $\mb{K}_{\mb{X}|\mb{U,V}}$ are feasible for $\left (P_{G1} \right)$, then there exist two positive semidefinite matrices $\mb{B}_1$ and $\mb{B}_2$ such that
\begin{align*}
\mb{K}_{\mb{Y}|\mb{V}} &= \mb{K_Y}-\mb{B}_2,\\
\mb{K}_{\mb{X}|\mb{U,V}} &= \mb{K}_{\mb{Y}|\mb{V}}+\mb{K_N} - \mb{B}_1 \\
&= \mb{K_Y}-\mb{B}_2+\mb{K_N} - \mb{B}_1\\
&=\mb{K_X} - \mb{B}_1 - \mb{B}_2, \hspace{0.1in} \textrm{and}\\
\mb{K_X} - \mb{B}_1 - \mb{B}_2 &\preccurlyeq \mb{D}.
\end{align*}
Therefore, $\left (P_{G1} \right)$ is equivalent to the following problem
\begin{align*}
\left (P_{G2} \right)\hspace {0.2in} \triangleq \hspace {0.2in}\min_{\mb{B}_1,\mb{B}_2} \hspace {0.2in} &\frac{\mu}{2} \log \frac{ \left |\mb{K_X}- \mb{B}_2 \right |}{ \left|\mb{K_X}- \mb{B_1}- \mb{B}_2\right|} + \frac{1}{2} \log \frac{ \left |\mb{K_Y} \right |}{ \left|\mb{K_Y}-\mb{B}_2\right|}\\
\textrm{subject to} \hspace{0.1in} &\mb{B}_i \succcurlyeq \mb{0} \hspace {0.15 cm} \textrm{for all} \hspace {0.15 cm} i \in \{1,2\}, \hspace {0.15 cm} \textrm{and}\\
&\mb{D} \succcurlyeq \mb{K_X-\mb{B}_1-\mb{B}_2}.
\end{align*}
We next establish several properties that the optimal solution to $\left (P_{G2} \right)$ satisfies.

Since $\left (P_{G2} \right)$ has continuous objective and a compact feasible set, there exists an optimal solution $(\mb{B}^{*}_1,\mb{B}^{*}_2)$ to it. The Lagrangian of the problem is~\cite[Sec.~5.9.1]{Boyd}
\begin{align*}
&\frac{\mu}{2} \log \frac{ \left |\mb{K_X}- \mb{B}_2 \right |}{ \left|\mb{K_X}- \mb{B_1}- \mb{B}_2\right|} + \frac{1}{2} \log \frac{ \left |\mb{K_Y} \right |}{ \left|\mb{K_Y}-\mb{B}_2\right|}- \textrm{Tr} \Bigr( \mb{B}_1 \mb{M}_1 + \mb{B}_2 \mb{M}_2 - (\mb{K_X}-\mb{B}_1-\mb{B}_2 - \mb{D}) \mb{\Lambda}\Bigr),
\end{align*}
where $\mb{M}_1,\mb{M}_2,$ and $\mb{\Lambda}$ are positive semidefinite Lagrange multiplier matrices corresponding to the constraints $\mb{B}_1 \succcurlyeq \mb{0}, \mb{B}_2 \succcurlyeq \mb{0}$, and $\mb{D} \succcurlyeq \mb{K_X-\mb{B}_1-\mb{B}_2}$, respectively.
The KKT conditions for this problem are~\cite[Sec.~5.9.2]{Boyd}
\begin{align}
\frac{\mu}{2}(\mb{K_X}-\mb{B}^{*}_1-\mb{B}^{*}_2)^{-1} - \mb{\Lambda}^{*} - \mb{M}_1^{*} &= \mb{0}, \label{eq:KKT1}\\
\frac{\mu}{2}(\mb{K_X}-\mb{B}^{*}_1-\mb{B}^{*}_2)^{-1} -\frac{\mu}{2}(\mb{K_X}-\mb{B}^{*}_2)^{-1}+\frac{1}{2}(\mb{K_Y}-\mb{B}^{*}_2)^{-1}- \mb{\Lambda}^{*} - \mb{M}_2^{*} &= \mb{0}, \label{eq:KKT2}\\
\mb{B}^{*}_i\mb{M}_i^{*} &= \mb{0}, \hspace{0.1in} \textrm{for all} \hspace{0.1in} i \in \{1,2\} \label{eq:KKT3}\\
(\mb{K_X}-\mb{B}^{*}_1-\mb{B}^{*}_2 - \mb{D}) \mb{\Lambda}^{*}&= \mb{0}, \hspace {0.1in} \textrm{and} \label{eq:KKT4}\\
\mb{M}_1^{*}, \mb{M}_2^{*},\mb{\Lambda}^{*}  &\succcurlyeq \mb{0}, \label{eq:KKT5}
\end{align}
where $\mb{M}_1^{*}, \mb{M}_2^{*},$ and $\mb{\Lambda}^{*}$ are optimal Lagrange multiplier matrices. Conditions (\ref{eq:KKT1}) and (\ref{eq:KKT2}) respectively are obtained by setting gradients of the objective with respect to $\mb{B}_1$ and $\mb{B}_2$ to zero. Conditions (\ref{eq:KKT3}) through (\ref{eq:KKT4}) are slackness conditions on the Lagrange multiplier matrices. We next establish that
these KKT conditions must hold at $(\mb{B}_1^*,\mb{B}_2^*)$.
\begin{Lem} \label{lem:ExistenceKKT}
There exist matrices $\mb{M}_1^{*}, \mb{M}_2^{*},$ and $\mb{\Lambda}^{*} $ that satisfy the KKT conditions (\ref{eq:KKT1}) -- (\ref{eq:KKT5}).
\end{Lem}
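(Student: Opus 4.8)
The delicate point is that $\left(P_{G2}\right)$ is not a convex program in general---its objective contains the concave log-determinant term $\frac{\mu}{2}\log|\mb{K_X}-\mb{B}_2|$---so the existence of Lagrange multipliers obeying the KKT relations cannot be read off from convex duality, and a constraint qualification must be verified directly at the optimizer. The plan is to exploit the fact that the three constraints of $\left(P_{G2}\right)$ ask an \emph{affine} function of $(\mb{B}_1,\mb{B}_2)$---namely $\mb{B}_1$, $\mb{B}_2$, and $\mb{D}-(\mb{K_X}-\mb{B}_1-\mb{B}_2)$---to be positive semidefinite. For affine conic constraints it is enough to exhibit a single point at which all three of these matrices are positive \emph{definite} (a Slater point); this yields Robinson's constraint qualification (the conic analogue of the Mangasarian--Fromovitz condition) at the optimal $(\mb{B}_1^{*},\mb{B}_2^{*})$, and the first-order necessary conditions for a smooth conic program---which hold at a local, hence global, minimizer even without convexity (cf.\ \cite[Sec.~5.9.2]{Boyd})---then produce positive semidefinite multiplier matrices $\mb{M}_1^{*},\mb{M}_2^{*},\mb{\Lambda}^{*}$ satisfying stationarity and complementary slackness, which are then matched to (\ref{eq:KKT1})--(\ref{eq:KKT5}).

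First I would exhibit the Slater point. Using the reduced model $\mb{X}=\mb{Y}+\mb{N}$, so that $\mb{K_X}=\mb{K_Y}+\mb{K_N}$ with $\mb{K_N}\succcurlyeq\mb{0}$, set
\[
\bar{\mb{B}}_1=\mb{K_N}+\epsilon\,\mb{I}_m,\qquad \bar{\mb{B}}_2=\mb{K_Y}-2\epsilon\,\mb{I}_m,
\]
where $\epsilon>0$ is smaller than both $\frac{1}{2}\lambda_{\min}(\mb{K_Y})$ and $\lambda_{\min}(\mb{D})$ (possible since $\mb{K_Y},\mb{D}\succ\mb{0}$). Then $\bar{\mb{B}}_1\succ\mb{0}$, $\bar{\mb{B}}_2\succ\mb{0}$, and $\mb{K_X}-\bar{\mb{B}}_1-\bar{\mb{B}}_2=\epsilon\,\mb{I}_m\succ\mb{0}$, so that $\mb{D}-(\mb{K_X}-\bar{\mb{B}}_1-\bar{\mb{B}}_2)=\mb{D}-\epsilon\,\mb{I}_m\succ\mb{0}$; moreover $\mb{K_Y}-\bar{\mb{B}}_2=2\epsilon\,\mb{I}_m\succ\mb{0}$. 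The last two facts also show that at $(\bar{\mb{B}}_1,\bar{\mb{B}}_2)$ the suppressed constraints $\mb{K}_{\mb{Y}|\mb{V}}\succcurlyeq\mb{0}$ and $\mb{K}_{\mb{X}|\mb{U,V}}\succcurlyeq\mb{0}$ hold strictly, so the objective is finite there. Because the objective equals $+\infty$ wherever $\mb{K_X}-\mb{B}_1-\mb{B}_2$ or $\mb{K_Y}-\mb{B}_2$ is singular and is smooth on the open region where both are positive definite, the optimal value is finite and $(\mb{B}_1^{*},\mb{B}_2^{*})$ lies in that open region---which is what makes the gradient calculations behind (\ref{eq:KKT1})--(\ref{eq:KKT2}) legitimate.

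It then remains to push the conic first-order conditions through and identify the pieces. Affineness of the three constraint maps, together with the Slater point lying in the interior of each positive semidefinite cone, gives Robinson's CQ at $(\mb{B}_1^{*},\mb{B}_2^{*})$, so there exist $\mb{M}_1^{*},\mb{M}_2^{*},\mb{\Lambda}^{*}\succcurlyeq\mb{0}$ with $\nabla_{\mb{B}_1}L=\mb{0}$ and $\nabla_{\mb{B}_2}L=\mb{0}$ for the Lagrangian $L$ displayed above, and with complementary slackness $\textrm{Tr}(\mb{B}_i^{*}\mb{M}_i^{*})=0$ and $\textrm{Tr}\big((\mb{D}-(\mb{K_X}-\mb{B}_1^{*}-\mb{B}_2^{*}))\mb{\Lambda}^{*}\big)=0$. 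The elementary fact that $\textrm{Tr}(\mb{P}\mb{Q})=0$ forces $\mb{P}\mb{Q}=\mb{0}$ when $\mb{P},\mb{Q}\succcurlyeq\mb{0}$ turns complementary slackness into (\ref{eq:KKT3}) and (\ref{eq:KKT4}); and one computes $\nabla_{\mb{B}_1}L=\frac{\mu}{2}(\mb{K_X}-\mb{B}_1^{*}-\mb{B}_2^{*})^{-1}-\mb{\Lambda}^{*}-\mb{M}_1^{*}$ and $\nabla_{\mb{B}_2}L=\frac{\mu}{2}(\mb{K_X}-\mb{B}_1^{*}-\mb{B}_2^{*})^{-1}-\frac{\mu}{2}(\mb{K_X}-\mb{B}_2^{*})^{-1}+\frac{1}{2}(\mb{K_Y}-\mb{B}_2^{*})^{-1}-\mb{\Lambda}^{*}-\mb{M}_2^{*}$, which are exactly (\ref{eq:KKT1}) and (\ref{eq:KKT2}), while $\mb{M}_1^{*},\mb{M}_2^{*},\mb{\Lambda}^{*}\succcurlyeq\mb{0}$ is (\ref{eq:KKT5}). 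The main obstacle is conceptual rather than computational: one must avoid quoting a convex-programming KKT theorem and instead supply a constraint qualification that survives the nonconvexity of $\left(P_{G2}\right)$, and one must pick the Slater point so that it also respects the constraints $\mb{K}_{\mb{Y}|\mb{V}}\succcurlyeq\mb{0}$ and $\mb{K}_{\mb{X}|\mb{U,V}}\succcurlyeq\mb{0}$ dropped at the outset---which is what forces the asymmetric choice of $\bar{\mb{B}}_1,\bar{\mb{B}}_2$, and is the one place where a possibly singular $\mb{K_N}$ (the case in which $\mb{Y}$ determines $\mb{X}$) needs a little care.
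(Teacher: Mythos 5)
Your proposal is correct and rests on the same mathematical pivot as the paper's proof---a Slater-type constraint qualification at the optimizer---but you package it differently, and in one respect more carefully. The paper (Appendix B) works through the first-order condition $-\nabla h(\mb{b}^*) \in T_{\mathcal{B}}(\mb{b}^*)^*$ via the Bertsekas tangent/polar-cone machinery, decomposes the polar of $\mathcal{B}_1\cap\mathcal{B}_2\cap\mathcal{B}_{12}$ into the sum of individual polars (using nonemptiness of $\textrm{ri}(\mathcal{B}_1)\cap\textrm{ri}(\mathcal{B}_2)\cap\textrm{ri}(\mathcal{B}_{12})$ to justify the decomposition and a separate argument to stay within the symmetric subspace $\mathcal{A}$), and then characterizes each polar cone explicitly to read off $\mb{M}_1^*,\mb{M}_2^*,\mb{\Lambda}^*$. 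You instead invoke Robinson's constraint qualification for affine conic constraints and quote first-order necessary conditions for nonconvex conic programs. These are genuinely the same underlying idea---in both cases the Slater point is what makes everything work---but yours is the higher-level route; the paper's is the more elementary, self-contained one. A real strength of your version is that you actually \emph{exhibit} the Slater point $\bar{\mb{B}}_1=\mb{K_N}+\epsilon\mb{I}_m$, $\bar{\mb{B}}_2=\mb{K_Y}-2\epsilon\mb{I}_m$, whereas the paper merely asserts the nonempty relative-interior condition; your construction (and the observation that it keeps $\mb{K_X}-\mb{B}_1-\mb{B}_2$ and $\mb{K_Y}-\mb{B}_2$ strictly positive definite so the objective stays smooth at the optimum) plugs a small gap that a careful reader of the paper would want filled. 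Two small caveats: (i) citing \cite[Sec.~5.9.2]{Boyd} for necessary conditions in a nonconvex program is imprecise, since that text works under convexity throughout---a reference such as Bonnans and Shapiro (or the Bertsekas et al.\ results the paper itself uses) is the right one for first-order conditions under Robinson's CQ without convexity; and (ii) you wrote ``local, hence global''---it should be the reverse, since the global minimizer is in particular a local one, which is what the first-order condition requires.
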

\begin{proof}
See Appendix B.
\end{proof}

Let us define
\[
\mb{\Delta}^{*} \triangleq \mb{\Lambda}^{*} - \frac{\mu}{2} \left [(\mb{K_X}-\mb{B}^{*}_1-\mb{B}^{*}_2)^{-1} - (\mb{K_X}-\mb{B}^{*}_2)^{-1}\right].
\]
It follows from conditions (\ref{eq:KKT1}) and (\ref{eq:KKT2}) that
\begin{align}
\mb{\Delta}^{*} = \frac{\mu}{2}(\mb{K_X}-\mb{B}^{*}_2)^{-1}- \mb{M}_1^{*}= \frac{1}{2}(\mb{K_Y}-\mb{B}^{*}_2)^{-1} - \mb{M}_2^{*}. \label{eq:Delta}
\end{align}
We have the following lemma.
\begin{Lem} \label{lem:PosSemiDefDelta}
$\mb{\Delta}^{*}$ is a nonzero positive semidefinite matrix.
\end{Lem}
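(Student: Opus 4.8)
The plan is to read off from (\ref{eq:Delta}) the expression $\mb{\Delta}^{*}=\frac{\mu}{2}(\mb{K_X}-\mb{B}_2^{*})^{-1}-\mb{M}_1^{*}$, so that proving $\mb{\Delta}^{*}\succcurlyeq\mb{0}$ reduces to proving the operator inequality $\mb{M}_1^{*}\preccurlyeq\frac{\mu}{2}(\mb{K_X}-\mb{B}_2^{*})^{-1}$; the claim $\mb{\Delta}^{*}\neq\mb{0}$ I will then handle separately by contradiction. All inverses appearing below are legitimate because, as already observed, $\mb{K_X}-\mb{B}_1^{*}-\mb{B}_2^{*}$ and $\mb{K_Y}-\mb{B}_2^{*}$ are positive definite at the optimum (otherwise the objective of $(P_{G2})$ is infinite), and hence so is $\mb{K_X}-\mb{B}_2^{*}=(\mb{K_X}-\mb{B}_1^{*}-\mb{B}_2^{*})+\mb{B}_1^{*}$. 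By Lemma \ref{lem:ExistenceKKT} I may use the KKT conditions (\ref{eq:KKT1})--(\ref{eq:KKT5}) freely.

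For the operator inequality, the decisive move is complementary slackness. Since $\mb{B}_1^{*}$ and $\mb{M}_1^{*}$ are symmetric positive semidefinite matrices with $\mb{B}_1^{*}\mb{M}_1^{*}=\mb{0}$ by (\ref{eq:KKT3}), we also have $\mb{M}_1^{*}\mb{B}_1^{*}\mb{M}_1^{*}=\mb{0}$, so writing $\mb{K_X}-\mb{B}_2^{*}=(\mb{K_X}-\mb{B}_1^{*}-\mb{B}_2^{*})+\mb{B}_1^{*}$ and sandwiching by $\mb{M}_1^{*}$ removes the $\mb{B}_1^{*}$ term:
\[
\mb{M}_1^{*}(\mb{K_X}-\mb{B}_2^{*})\mb{M}_1^{*}=\mb{M}_1^{*}(\mb{K_X}-\mb{B}_1^{*}-\mb{B}_2^{*})\mb{M}_1^{*}.
\]
By (\ref{eq:KKT1}), $\mb{K_X}-\mb{B}_1^{*}-\mb{B}_2^{*}=\frac{\mu}{2}(\mb{\Lambda}^{*}+\mb{M}_1^{*})^{-1}$, and since $\mb{\Lambda}^{*}\succcurlyeq\mb{0}$ we have $\mb{0}\preccurlyeq\mb{M}_1^{*}\preccurlyeq\mb{\Lambda}^{*}+\mb{M}_1^{*}$; invoking the elementary fact that $\mb{B}\mb{A}^{-1}\mb{B}\preccurlyeq\mb{B}$ whenever $\mb{0}\preccurlyeq\mb{B}\preccurlyeq\mb{A}$ with $\mb{A}\succ\mb{0}$ (because $\mb{B}^{1/2}\mb{A}^{-1}\mb{B}^{1/2}$ has the same eigenvalues as $\mb{A}^{-1/2}\mb{B}\mb{A}^{-1/2}\preccurlyeq\mb{I}$) then gives $\mb{M}_1^{*}(\mb{K_X}-\mb{B}_2^{*})\mb{M}_1^{*}\preccurlyeq\frac{\mu}{2}\mb{M}_1^{*}$. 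Conjugating this by $(\mb{K_X}-\mb{B}_2^{*})^{1/2}$ and setting $\mb{E}\triangleq(\mb{K_X}-\mb{B}_2^{*})^{1/2}\mb{M}_1^{*}(\mb{K_X}-\mb{B}_2^{*})^{1/2}\succcurlyeq\mb{0}$ yields $\mb{E}^{2}\preccurlyeq\frac{\mu}{2}\mb{E}$, so each eigenvalue $\lambda\geq 0$ of $\mb{E}$ obeys $\lambda^{2}\leq\frac{\mu}{2}\lambda$, i.e. $\lambda\leq\frac{\mu}{2}$; hence $\mb{E}\preccurlyeq\frac{\mu}{2}\mb{I}$, and conjugating back gives $\mb{M}_1^{*}\preccurlyeq\frac{\mu}{2}(\mb{K_X}-\mb{B}_2^{*})^{-1}$, i.e. $\mb{\Delta}^{*}\succcurlyeq\mb{0}$.

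For $\mb{\Delta}^{*}\neq\mb{0}$, suppose to the contrary that $\mb{\Delta}^{*}=\mb{0}$. Then (\ref{eq:Delta}) forces $\mb{M}_1^{*}=\frac{\mu}{2}(\mb{K_X}-\mb{B}_2^{*})^{-1}$ and $\mb{M}_2^{*}=\frac{1}{2}(\mb{K_Y}-\mb{B}_2^{*})^{-1}$, both of which are positive definite (since $\mu>0$). Complementary slackness (\ref{eq:KKT3}) then forces $\mb{B}_1^{*}=\mb{B}_2^{*}=\mb{0}$, so the distortion constraint $\mb{D}\succcurlyeq\mb{K_X}-\mb{B}_1^{*}-\mb{B}_2^{*}$ of $(P_{G2})$ reduces to $\mb{D}\succcurlyeq\mb{K_X}$, contradicting (\ref{eq:assumption}). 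I expect the only genuine obstacle to be the positive-semidefiniteness step: bounding $\mb{M}_1^{*}$ directly through (\ref{eq:KKT1}) and $\mb{\Lambda}^{*}\succcurlyeq\mb{0}$ yields only $\mb{M}_1^{*}\preccurlyeq\frac{\mu}{2}(\mb{K_X}-\mb{B}_1^{*}-\mb{B}_2^{*})^{-1}$, which is an inequality in the wrong direction because $\mb{K_X}-\mb{B}_1^{*}-\mb{B}_2^{*}\preccurlyeq\mb{K_X}-\mb{B}_2^{*}$; it is exactly the complementary-slackness identity, which lets us pass from $\mb{K_X}-\mb{B}_2^{*}$ to $\mb{K_X}-\mb{B}_1^{*}-\mb{B}_2^{*}$ inside the $\mb{M}_1^{*}$-sandwich, that repairs this.
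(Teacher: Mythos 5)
Your proof is correct, but it takes a longer route than the paper's. Both of you locate the key difficulty in the same place — naively bounding $\mb{M}_1^{*}$ via (\ref{eq:KKT1}) gives $\mb{M}_1^{*}\preccurlyeq\frac{\mu}{2}(\mb{K_X}-\mb{B}_1^{*}-\mb{B}_2^{*})^{-1}$, which is the wrong direction — and both of you use the complementary slackness $\mb{B}_1^{*}\mb{M}_1^{*}=\mb{0}$ from (\ref{eq:KKT3}) to repair it. The paper, however, conjugates $\mb{\Delta}^{*}$ by $\mb{K_X}-\mb{B}_2^{*}$ and then derives the \emph{exact} identity
\[
\frac{\mu}{2}(\mb{K_X}-\mb{B}^{*}_2)- (\mb{K_X}-\mb{B}^{*}_2)\mb{M}_1^{*}(\mb{K_X}-\mb{B}^{*}_2)=\frac{\mu}{2}\mb{B}^{*}_1+\mb{D}\mb{\Lambda}^{*}\mb{D}
\]
by pre/post-multiplying (\ref{eq:KKT1}) by $\mb{K_X}-\mb{B}^{*}_1-\mb{B}^{*}_2$ and then invoking (\ref{eq:KKT3}) and (\ref{eq:KKT4}); the right-hand side is manifestly positive semidefinite, so the argument is essentially a one-line computation. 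You instead sandwich by $\mb{M}_1^{*}$ rather than $\mb{K_X}-\mb{B}_2^{*}$, which requires the auxiliary operator inequality $\mb{B}\mb{A}^{-1}\mb{B}\preccurlyeq\mb{B}$ for $\mb{0}\preccurlyeq\mb{B}\preccurlyeq\mb{A}$, followed by a conjugation and an eigenvalue argument on $\mb{E}^{2}\preccurlyeq\frac{\mu}{2}\mb{E}$. All of those steps are correct, and your proof has the minor virtue of not touching (\ref{eq:KKT4}) at all; but the paper's exact identity is both shorter and more informative (it tells you precisely what the PSD gap is), and it could have been obtained from your complementary-slackness observation by sandwiching with $\mb{K_X}-\mb{B}_2^{*}$ instead of $\mb{M}_1^{*}$, which avoids the operator inequality and the eigenvalue step entirely. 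The nonzero-ness argument is identical to the paper's.
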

\begin{proof}
See Appendix C.
\end{proof}

If $\mb{\Delta}^{*}$ happens to be positive definite, then \emph{distortion projection} turns out to be unnecessary.
To handle the case in which $\mb{\Delta}^{*}$ is singular, we shall use \emph{distortion projection}.
Since $\mb{\Delta}^{*}, \mb{M}_1^{*},$ and $\mb{M}_2^{*}$ are positive semidefinite, we can write their spectral decompositions as
\begin{align}
\mb{\Delta}^{*} &= \sum_{i=1}^{r} \lambda_i \mb{s}_i \mb{s}_i^{T}, \label{eq:EigDecom1}\\
\mb{M}_1^{*} &= \sum_{i=1}^{p} \alpha_i {\mb{a}}_i {\mb{a}}_i^{T}, \hspace {0.1in} \textrm{and}  \label{eq:EigDecom2}\\
\mb{M}_2^{*} &= \sum_{i=1}^{q} \beta_i \mb{b}_i \mb{b}_i^{T},  \label{eq:EigDecom3}
\end{align}
where
\begin{enumerate}
\item[(i)] $0 < r \le m$,
\item[(ii)] $0 \le p, q \le m$,
\item[(iii)] $\lambda_i > 0,$ for all $i \in \{1,\dots,r\}$,
\item[(iv)] $\alpha_i > 0,$ for all $i \in \{1,\dots,p\}$
\item[(v)] $\beta_i > 0,$ for all $i \in \{1,\dots,q\},$ and
\item[(vi)] $\{\mb{s}_i\}_{i=1}^{r}, \{{\mb{a}}_i\}_{i=1}^{p},$ and $\{\mb{b}_i\}_{i=1}^{q}$ are sets of orthonormal vectors.
\end{enumerate}
Note that we allow $p$ and $q$ to be zero because $\mb{M}_1^{*}$ and $\mb{M}_2^{*}$ can be zero. Since (\ref{eq:Delta}) implies
\begin{align*}
\mb{\Delta}^{*} + \mb{M}_1^{*} &= \frac{\mu}{2}(\mb{K_X}-\mb{B}^{*}_2)^{-1} \succ \mb{0} \hspace {0.1in} \textrm{and} \\
\mb{\Delta}^{*} + \mb{M}_2^{*}&= \frac{1}{2}(\mb{K_Y}-\mb{B}^{*}_2)^{-1}\succ \mb{0},
\end{align*}
we must have
\begin{align*}
r+p &\ge m \hspace {0.1in} \textrm{and}\\
r+q &\ge m.
\end{align*}
This means that if $r+p=m$, then $\mb{s}_1,\mb{s}_2,\dots,\mb{s}_r,\mb{a}_1,\mb{a}_2,\dots,\mb{a}_p$ must be linearly independent. Similarly, if $r+q=m$, then $\mb{s}_1,\mb{s}_2,\dots,\mb{s}_r,\mb{b}_1,\mb{b}_2,\dots,\mb{b}_q$ must be linearly independent. 

Define the matrix
\begin{align*}
\mb{S} &\triangleq \left [\sqrt{\lambda_1}\mb{s}_1,\sqrt{\lambda_2}\mb{s}_2,\dots,\sqrt{\lambda_r}\mb{s}_r \right ].
\end{align*}
It now follows from the definition of $\mb{\Delta}^{*}$ that
\begin{align*}
\mb{\Lambda}^{*} &\succcurlyeq \mb{\Delta}^{*}= \mb{SS}^T
\end{align*}
because $$(\mb{K_X}-\mb{B}^{*}_1-\mb{B}^{*}_2)^{-1} \succcurlyeq (\mb{K_X}-\mb{B}^{*}_2)^{-1}.$$ This and (\ref{eq:KKT4}) imply that
\begin{align}
(\mb{K_X}-\mb{B}^{*}_1-\mb{B}^{*}_2 - \mb{D}) \mb{S}= \mb{0}.  \label{eq:ActiveDistConst}
\end{align}

Let $\mb{C}$ be an $m \times m$ positive definite matrix and $\{\mb{C}_1,\mb{C}_2, \dots, \mb{C}_t\}$ be a set of $m \times m$ positive definite matrices.
\begin{Def}
A non-zero $m \times p$ matrix $\mb{E}$ is $\mb{C}$\emph{-orthogonal} if $\mb{E}^T \mb{C} \mb{E}$ is a diagonal matrix.
\end{Def}
\begin{Def}
A non-zero $m \times p$ matrix $\mb{E}$ is $\{\mb{C}_1,\mb{C}_2, \dots, \mb{C}_t\}$\emph{-orthogonal} if it is $\mb{C}_i$\emph{-orthogonal} for all $i \in \{1,2,\dots,t\}$.
\end{Def}
\begin{Def}
A non-zero $m \times p$ matrix $\mb{E}$ and a non-zero $m \times q$ matrix $\mb{F}$ are \emph{cross} $\mb{C}$\emph{-orthogonal} if $\mb{E}^T \mb{C} \mb{F} = \mb{0}.$
\end{Def}
\begin{Def}
A non-zero $m \times p$ matrix $\mb{E}$ and a non-zero $m \times q$ matrix $\mb{F}$ are \emph{cross} $\{\mb{C}_1,\mb{C}_2, \dots, \mb{C}_t\}$\emph{-orthogonal} if they are \emph{cross} $\mb{C}_i$\emph{-orthogonal} for all $i \in \{1,2,\dots,t\}$.
\end{Def}
\begin{Def}
A non-zero vector $\mb{w}$ is in \emph{span}$\{\mb{c}_i\}_{i=1}^l$ if there exist real numbers $\{\gamma_i\}_{i=1}^l$ such that
\begin{align*}
\mb{w} = \sum_{i=1}^l \gamma_i \mb{c}_i.
\end{align*}
We denote this as
\[
\mb{w} \in \textrm{\emph{span}}\{\mb{c}_i\}_{i=1}^l.
\]
\end{Def}
We have the following theorem about the optimal solution to the optimization problem $\left (P_{G2} \right)$.
\begin{Thm}  \label{thm:PropOptGaussSol}
There exist two matrices
\begin{align*}
\mb{T} \triangleq  [\mb{t}_1,\mb{t}_2,\dots,\mb{t}_{m-r}]
\end{align*}
and
\[
\mb{W} \triangleq  [\mb{w}_1,\mb{w}_2,\dots,\mb{w}_{m-r}]
\]
such that $[\mb{S},\mb{T}]$ and $[\mb{S},\mb{W}]$ are invertible and if $r < m$ then
\begin{enumerate}
\item[(a)] ${\mb{t}}_{1}, {\mb{t}}_{2}, \dots, {\mb{t}}_{m-r} \in \textrm{\emph{span}}\{{\mb{a}}_i\}_{i=1}^p$,
\item[(b)] ${\mb{T}}$ is $\bigr\{(\mb{K_X}-\mb{B}_2^{*}), (\mb{K_X}-\mb{B}_1^{*}-\mb{B}_2^{*})\bigr\}$\emph{-orthogonal} with
\[
{\mb{T}}^T (\mb{K_X}-\mb{B}_2^{*}) {\mb{T}} = {\mb{T}}^T (\mb{K_X}-\mb{B}_1^{*}-\mb{B}_2^{*}) {\mb{T}},
\]
\item[(c)] $\mb{S}$ and ${\mb{T}}$ are \emph{cross} $\bigr\{\mb{D}, (\mb{K_X}-\mb{B}_2^{*}), (\mb{K_X}-\mb{B}_1^{*}-\mb{B}_2^{*})\bigr\}$\emph{-orthogonal},
\item[(d)] ${\mb{w}}_{1}, {\mb{w}}_{2}, \dots, {\mb{w}}_{m-r} \in \textrm{\emph{span}}\{\mb{b}_i\}_{i=1}^q$,
\item[(e)] ${\mb{W}}$ is $\bigr\{\mb{K_Y}, (\mb{K_Y}-\mb{B}_2^{*})\bigr\}$\emph{-orthogonal} with
\[
{\mb{W}}^T \mb{K_Y} {\mb{W}} = {\mb{W}}^T (\mb{K_Y}-\mb{B}_2^{*}) {\mb{W}},  \hspace{0.1in} \textrm{and}
\]
\item[(f)] $\mb{S}$ and ${\mb{W}}$ are \emph{cross} $\bigr\{\mb{K_Y}, (\mb{K_Y}-\mb{B}_2^{*})\bigr\}$\emph{-orthogonal}.
\end{enumerate}
\end{Thm}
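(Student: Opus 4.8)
The plan is to build the columns of $\mb{T}$ and $\mb{W}$ one at a time by a Gram–Schmidt-type procedure, but carried out simultaneously with respect to the several quadratic forms $\mb{D}$, $(\mb{K_X}-\mb{B}_2^{*})$, $(\mb{K_X}-\mb{B}_1^{*}-\mb{B}_2^{*})$ (for $\mb{T}$) and $\mb{K_Y}$, $(\mb{K_Y}-\mb{B}_2^{*})$ (for $\mb{W}$). The key observation that makes a common eigenbasis possible is that on a suitable subspace the relevant pairs of matrices differ by something controlled by the KKT conditions. Concretely, from (\ref{eq:Delta}) we have $\frac{\mu}{2}(\mb{K_X}-\mb{B}_2^{*})^{-1} - \mb{M}_1^{*} = \frac{\mu}{2}(\mb{K_X}-\mb{B}_1^{*}-\mb{B}_2^{*})^{-1} - \mb{\Lambda}^{*}$, and since $\mb{\Lambda}^{*}\succcurlyeq\mb{\Delta}^{*}=\mb{SS}^T$ while $\mb{\Lambda}^{*}-\mb{SS}^T$ annihilates the complement of $\mathrm{span}\{\mb{s}_i\}$ in an appropriate sense, restricting attention to directions in $\mathrm{span}\{\mb{a}_i\}_{i=1}^p$ (where $\mb{M}_1^{*}$ is "large") forces $(\mb{K_X}-\mb{B}_2^{*})^{-1}$ and $(\mb{K_X}-\mb{B}_1^{*}-\mb{B}_2^{*})^{-1}$ to agree after a change of basis, which is exactly the content of (b). The analogous statement for $\mb{W}$ uses the second equality in (\ref{eq:Delta}) and the directions $\mathrm{span}\{\mb{b}_i\}_{i=1}^q$.

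**Construction of $\mb{T}$.** I would first show that $\mathrm{span}\{\mb{s}_i\}_{i=1}^r$ and $\mathrm{span}\{\mb{a}_i\}_{i=1}^p$ together span $\mathbb{R}^m$ (this follows from $\mb{\Delta}^{*}+\mb{M}_1^{*}\succ\mb{0}$, already noted in the excerpt), and pick inside $\mathrm{span}\{\mb{a}_i\}_{i=1}^p$ a complement $\mathcal{U}$ of dimension $m-r$ to $\mathrm{span}\{\mb{s}_i\}_{i=1}^r\cap\mathrm{span}\{\mb{a}_i\}_{i=1}^p$; this guarantees $[\mb{S},\mb{T}]$ invertible for any basis $\mb{T}$ of $\mathcal{U}$, giving (a). Using $\mb{B}_1^{*}\mb{M}_1^{*}=\mb{0}$ from (\ref{eq:KKT3}), every $\mb{a}_i$ lies in the null space of $\mb{B}_1^{*}$, so $(\mb{K_X}-\mb{B}_1^{*}-\mb{B}_2^{*})\mb{a}_i = (\mb{K_X}-\mb{B}_2^{*})\mb{a}_i$ for the relevant components; combined with the resolvent identity $(\mb{K_X}-\mb{B}_2^{*})^{-1}-(\mb{K_X}-\mb{B}_1^{*}-\mb{B}_2^{*})^{-1} = -(\mb{K_X}-\mb{B}_2^{*})^{-1}\mb{B}_1^{*}(\mb{K_X}-\mb{B}_1^{*}-\mb{B}_2^{*})^{-1}$, this yields that the two forms $(\mb{K_X}-\mb{B}_2^{*})$ and $(\mb{K_X}-\mb{B}_1^{*}-\mb{B}_2^{*})$ agree when restricted appropriately, which gives (b) after diagonalizing one of them on $\mathcal{U}$. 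For (c), I would use (\ref{eq:ActiveDistConst}), $(\mb{K_X}-\mb{B}_1^{*}-\mb{B}_2^{*}-\mb{D})\mb{S}=\mb{0}$, to convert cross $\mb{D}$-orthogonality into cross $(\mb{K_X}-\mb{B}_1^{*}-\mb{B}_2^{*})$-orthogonality, and then show cross-orthogonality with respect to $(\mb{K_X}-\mb{B}_2^{*})$ and $(\mb{K_X}-\mb{B}_1^{*}-\mb{B}_2^{*})$ can be arranged by a further linear adjustment of the columns of $\mb{T}$ within $\mathrm{span}\{\mb{a}_i\}_{i=1}^p$ — here one must check that the adjustment does not destroy (a) or (b), which works because (b) says the two forms coincide on the span of the $\mb{t}$'s, so there is only one orthogonality condition to satisfy against $\mb{S}$, not three.

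**Construction of $\mb{W}$, and the main obstacle.** The construction of $\mb{W}$ is formally parallel: use $\mb{B}_2^{*}\mb{M}_2^{*}=\mb{0}$ so that the $\mb{b}_i$ lie in $\ker\mb{B}_2^{*}$, hence $\mb{K_Y}$ and $(\mb{K_Y}-\mb{B}_2^{*})$ agree on $\mathrm{span}\{\mb{b}_i\}_{i=1}^q$, giving (d)–(f) and the invertibility of $[\mb{S},\mb{W}]$ exactly as above. The step I expect to be the real obstacle is making all the orthogonality and linear-independence requirements hold \emph{simultaneously}: I need a single matrix $\mb{T}$ whose columns lie in $\mathrm{span}\{\mb{a}_i\}$, diagonalize \emph{two} quadratic forms at once, \emph{and} are cross-orthogonal to the fixed matrix $\mb{S}$ with respect to \emph{three} forms, all while keeping $[\mb{S},\mb{T}]$ nonsingular — and the reason this is not over-determined is precisely the coincidence of forms established in (b) and (c)'s reduction via (\ref{eq:ActiveDistConst}). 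The delicate bookkeeping is to organize these reductions so that, on the subspace where $\mb{T}$ is to live, the three $\mb{T}$-side forms collapse to one (namely $(\mb{K_X}-\mb{B}_1^{*}-\mb{B}_2^{*})$ restricted there), after which a single simultaneous-diagonalization/Gram–Schmidt argument relative to that one positive definite form finishes (a)–(c); the $\mb{W}$ side is then handled identically with $\mb{K_Y}$ in place of $(\mb{K_X}-\mb{B}_1^{*}-\mb{B}_2^{*})$ and the pair $(\mb{K_Y},\mb{K_Y}-\mb{B}_2^{*})$ in place of the triple.
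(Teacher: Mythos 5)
Your plan touches on all the right ingredients (the coincidence of the two $\mb{X}$-side forms on $\textrm{span}\{\mb{a}_i\}$ via $\mb{B}_1^{*}\mb{M}_1^{*}=\mb{0}$, the use of~(\ref{eq:ActiveDistConst}) to collapse the $\mb{D}$-orthogonality into the others, and the spanning fact from $\mb{\Delta}^*+\mb{M}_1^*\succ\mb{0}$), but the logical order of the construction has a real gap. You propose to first pick an arbitrary $(m-r)$-dimensional complement $\mathcal{U}\subseteq\textrm{span}\{\mb{a}_i\}$ (ensuring $[\mb{S},\mb{T}]$ invertible and part~(a)), and only afterwards ``arrange'' cross-orthogonality with $\mb{S}$ by adjusting columns of $\mb{T}$ within $\textrm{span}\{\mb{a}_i\}$. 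But the cross-orthogonality condition $\mb{S}^T(\mb{K_X}-\mb{B}_2^{*})\mb{T}=\mb{0}$ is a condition on the \emph{subspace} spanned by the columns of $\mb{T}$, not merely on the basis: it forces that subspace to equal the $(\mb{K_X}-\mb{B}_2^{*})$-orthogonal complement $\mathcal{V}$ of $\textrm{span}\{\mb{s}_i\}$, which is a uniquely determined $(m-r)$-dimensional subspace. No basis change or ``linear adjustment'' within $\textrm{span}\{\mb{a}_i\}$ will move a previously chosen $\mathcal{U}$ onto $\mathcal{V}$ unless $\mathcal{U}=\mathcal{V}$ to begin with, and your proposal never establishes the fact that makes everything work, namely that $\mathcal{V}\subseteq\textrm{span}\{\mb{a}_i\}$.

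That containment is exactly the non-obvious step, and it is where the paper's proof does its real work. The paper constructs $\mb{t}_{j+1}=(\mb{K_X}-\mb{B}_2^{*})^{-1}\mb{z}_{j+1}$ with $\mb{z}_{j+1}$ Euclidean-orthogonal to $[\mb{S},\mb{T}_j]$, which simultaneously enforces $\mb{S}^T(\mb{K_X}-\mb{B}_2^{*})\mb{t}_{j+1}=\mb{0}$ and diagonalizes $\mb{T}^T(\mb{K_X}-\mb{B}_2^{*})\mb{T}$ by design, and \emph{then} proves $\mb{t}_{j+1}\in\textrm{span}\{\mb{a}_i\}$ by post-multiplying the spectral identity $\tfrac{\mu}{2}(\mb{K_X}-\mb{B}_2^{*})^{-1}=\mb{SS}^T+\sum_i\alpha_i\mb{a}_i\mb{a}_i^T$ by $(\mb{K_X}-\mb{B}_2^{*})\mb{t}_{j+1}$ and observing that the $\mb{SS}^T$ term vanishes. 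So the correct order is: determine the subspace by the orthogonality requirement, then \emph{derive} membership in $\textrm{span}\{\mb{a}_i\}$; you have the order reversed, which turns a provable containment into an unjustified assumption. (Your resolvent-identity remark is correct but superfluous; once $\mb{B}_1^{*}\mb{T}=\mb{0}$ is in hand, $(\mb{K_X}-\mb{B}_1^{*}-\mb{B}_2^{*})\mb{T}=(\mb{K_X}-\mb{B}_2^{*})\mb{T}$ gives (b) directly without inverting anything.) The $\mb{W}$ side has the identical gap, resolved identically by the analogous identity $\tfrac12(\mb{K_Y}-\mb{B}_2^{*})^{-1}=\mb{SS}^T+\sum_i\beta_i\mb{b}_i\mb{b}_i^T$.
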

\begin{proof}
It suffices to consider $r < m$ case. Since $\mb{\Delta}^{*}=\mb{S}\mb{S}^T$ is rank deficient in this case, there exists $\mb{z}_1 \neq \mb{0}$ such that $$\mb{S}^T \mb{z}_1=\mb{0}.$$ Let us define
\[
{\mb{t}}_1 \triangleq (\mb{K_X}-\mb{B}_2^{*})^{-1}\mb{z}_1.
\]
Therefore
\[
\mb{S}^T (\mb{K_X}-\mb{B}_2^{*}) {\mb{t}}_1=\mb{0}.
\]
We have from (\ref{eq:Delta}), (\ref{eq:EigDecom1}), and (\ref{eq:EigDecom2}) that
\begin{align*}
\frac{\mu}{2} (\mb{K_X}-\mb{B}_2^{*})^{-1}=\mb{\Delta}^{*}+\mb{M}_1^{*}=\mb{S}\mb{S}^T+\sum_{i=1}^p \alpha_i {\mb{a}}_i{\mb{a}}_i^T.
\end{align*}
On post-multiplying this by $(\mb{K_X}-\mb{B}_2^{*}){\mb{t}}_1$, we obtain
\begin{align*}
\frac{\mu}{2}{\mb{t}}_1&=\mb{S}\mb{S}^T(\mb{K_X}-\mb{B}_2^{*}){\mb{t}}_1+\sum_{i=1}^p \alpha_i {\mb{a}}_i{\mb{a}}_i^T(\mb{K_X}-\mb{B}_2^{*}){\mb{t}}_1\\
&=\sum_{i=1}^p \alpha_i {\mb{a}}_i \bigr ({\mb{a}}_i^T(\mb{K_X}-\mb{B}_2^{*}){\mb{t}}_1 \bigr).
\end{align*}
This proves that $${\mb{t}}_1 \in \textrm{span}\{{\mb{a}}_i\}_{i=1}^p.$$ We next show that $${\mb{t}}_1 \notin \textrm{span}\{\mb{s}_i\}_{i=1}^r.$$ Suppose otherwise that $${\mb{t}}_1 \in \textrm{span}\{\mb{s}_i\}_{i=1}^r.$$ Then there exist real numbers $\{c_i\}_{i=1}^r$ such that
\[
{\mb{t}}_1 = \sum_{i=1}^r c_i \mb{s}_i.
\]
Since $\mb{S}^T (\mb{K_X}-\mb{B}_2^{*}) {\mb{t}}_1  = \mb{0}$, we have
\[
\mb{s}_i^T (\mb{K_X}-\mb{B}_2^{*}) {\mb{t}}_1 = 0 \hspace{0.1 in} \textrm{for all}\hspace{0.1 in} i \in \{1,2,\dots,r\}.
\]
On multiplying this by $c_i$ and then summing over all $i$ in $\{1,2,\dots,r\}$, we obtain
\[
{\mb{t}}_1^T (\mb{K_X}-\mb{B}_2^{*}) {\mb{t}}_1 = 0,
\]
which is a contradiction because $\mb{K_X}-\mb{B}_2^{*}$ is positive definite. We therefore have that $${\mb{t}}_1 \notin \textrm{span}\{\mb{s}_i\}_{i=1}^r.$$ We have shown so far that there exists ${\mb{t}}_1 \in \textrm{span}\{\mb{a}_i\}_{i=1}^p$ such that the rank of $[\mb{S}, {\mb{t}}_1]$ is $r+1$ and $$\mb{S}^T (\mb{K_X}-\mb{B}_2^{*}) {\mb{t}}_1 = \mb{0}.$$ Let us now assume that there exists
\[
{\mb{T}}_j \triangleq [{\mb{t}}_{1}, {\mb{t}}_{2}, \dots, {\mb{t}}_{j}],
\]
where $${\mb{t}}_{1}, {\mb{t}}_{2}, \dots, {\mb{t}}_{j} \in \textrm{span}\{{\mb{a}}_i\}_{i=1}^p$$ and $1 \le j < m-r$ such that the rank of $[\mb{S}, {\mb{T}}_j]$ is $r+j$, $$\mb{S}^T (\mb{K_X}-\mb{B}_2^{*}) {\mb{T}}_j = \mb{0},$$ and $${\mb{t}}_{k}^T (\mb{K_X}-\mb{B}_2^{*}) {\mb{t}}_l = {0}$$ for all $k \neq l$ in $\{1,2,\dots,j\}.$ Then there exists $\mb{z}_{j+1} \neq \mb{0}$ such that $$[\mb{S}, {\mb{T}}_j]^T \mb{z}_{j+1}=\mb{0}.$$ Let us define
\[
{\mb{t}}_{j+1} \triangleq (\mb{K_X}-\mb{B}_2^{*})^{-1}\mb{z}_{j+1}.
\]
We therefore have that
\[
[\mb{S}, {\mb{T}}_j]^T (\mb{K_X}-\mb{B}_2^{*}) {\mb{t}}_{j+1} = \mb{0}.
\]
It can be shown as before that
\begin{align*}
{\mb{t}}_{j+1} &\in \textrm{span}\{{\mb{a}}_i\}_{i=1}^p
\end{align*}
and
\begin{align*}
{\mb{t}}_{j+1} &\notin \textrm{span} \left\{\{\mb{s}_i\}_{i=1}^r,\{{\mb{t}}_k\}_{k=1}^j \right\}.
\end{align*}
Hence, the rank of $[\mb{S}, {\mb{T}}_{j+1}],$ where
\[
{\mb{T}}_{j+1} \triangleq [{\mb{T}}_j, {\mb{t}}_{j+1}],
\]
is $r+j+1$, $$\mb{S}^T (\mb{K_X}-\mb{B}_2^{*}) {\mb{T}}_{j+1} = \mb{0},$$ and $${\mb{t}}_{k}^T (\mb{K_X}-\mb{B}_2^{*}) {\mb{t}}_l = {0},$$ for all $k \neq l$ in $\{1,2,\dots,j+1\}.$ It now follows from the mathematical induction that there exist $${\mb{t}}_1,{\mb{t}}_2,\dots,{\mb{t}}_{m-r} \in \textrm{span}\{{\mb{a}}_i\}_{i=1}^p$$ such that if we define
\[
{\mb{T}} \triangleq [{\mb{t}}_1,{\mb{t}}_2,\dots,{\mb{t}}_{m-r}],
\]
then $[\mb{S}, {\mb{T}}]$ is invertible,
\begin{align*}
\mb{S}^T (\mb{K_X}-\mb{B}_2^{*}) {\mb{T}} &=  \mb{0}, \hspace{0.1in} \textrm{and}\\
{\mb{T}}^T(\mb{K_X}-\mb{B}_2^{*}){\mb{T}} &= \mb{G},
\end{align*}
where $$\mb{G}\triangleq \textrm{Diag}  \Bigr\{\bigr({\mb{t}}_1^T(\mb{K_X}-\mb{B}_2^{*}){\mb{t}}_1\bigr), \bigr({\mb{t}}_2^T(\mb{K_X}-\mb{B}_2^{*}){\mb{t}}_2 \bigr), \dots, \bigr({\mb{t}}_{m-r}^T(\mb{K_X}-\mb{B}_2^{*}){\mb{t}}_{m-r}\bigr)\Bigr\}.$$ Since $\mb{B}_1^{*}{\mb{T}}=\mb{0}$ from (\ref{eq:KKT3}) and $(\mb{K_X}-\mb{B}_1^{*}-\mb{B}_2^{*}) \mb{S}= \mb{DS}$ from (\ref{eq:ActiveDistConst}), we immediately have that
\begin{align*}
\mb{S}^T (\mb{K_X}-\mb{B}_2^{*}) {\mb{T}} &= \mb{S}^T (\mb{K_X}-\mb{B}_1^{*}-\mb{B}_2^{*}) {\mb{T}} = \mb{S}^T \mb{D} {\mb{T}}=\mb{0}, \hspace{0.1in} \textrm{and}\\
{\mb{T}}^T(\mb{K_X}-\mb{B}_2^{*}){\mb{T}} &= {\mb{T}}^T(\mb{K_X}-\mb{B}_1^{*}- \mb{B}_2^{*}){\mb{T}}=\mb{G}.
\end{align*}
This completes the proof of parts (a) through (c) of the theorem.

For parts (d) through (f), we have from (\ref{eq:Delta}), (\ref{eq:EigDecom1}), and (\ref{eq:EigDecom3}) that
\begin{align*}
\frac{1}{2} (\mb{K_Y}-\mb{B}_2^{*})^{-1}=\mb{\Delta}^{*}+\mb{M}_2^{*}=\mb{S}\mb{S}^T+\sum_{i=1}^q \beta_i \mb{b}_i\mb{b}_i^T.
\end{align*}
Similar to the previous case, we can find $${\mb{w}}_1,{\mb{w}}_2,\dots,{\mb{w}}_{m-r} \in \textrm{span}\{\mb{b}_i\}_{i=1}^q$$ such that if we define
\[
{\mb{W}} \triangleq [{\mb{w}}_1,{\mb{w}}_2,\dots,{\mb{w}}_{m-r}],
\]
then $[\mb{S}, {\mb{W}}]$ is invertible,
\begin{align*}
\mb{S}^T (\mb{K_Y}-\mb{B}_2^{*}) {\mb{W}} &=  \mb{0}, \hspace{0.1in} \textrm{and}\\
{\mb{W}}^T(\mb{K_Y}-\mb{B}_2^{*}){\mb{W}} &= \mb{H},
\end{align*}
where $$\mb{H}\triangleq \textrm{Diag}  \Bigr\{\bigr({\mb{w}}_1^T(\mb{K_Y}-\mb{B}_2^{*}){\mb{w}}_1\bigr), \bigr({\mb{w}}_2^T(\mb{K_Y}-\mb{B}_2^{*}){\mb{w}}_2 \bigr), \dots, \bigr({\mb{w}}_{m-r}^T(\mb{K_Y}-\mb{B}_2^{*}){\mb{w}}_{m-r}\bigr)\Bigr\}.$$ Since $\mb{B}_2^{*}{\mb{W}}=\mb{0}$ from (\ref{eq:KKT3}), we conclude
\begin{align*}
\mb{S}^T \mb{K_Y} {\mb{W}} &= \mb{S}^T (\mb{K_Y}-\mb{B}_2^{*}) {\mb{W}}=\mb{0}, \hspace{0.1in} \textrm{and}\\
{\mb{W}}^T\mb{K_Y}{\mb{W}} &= {\mb{W}}^T(\mb{K_Y}- \mb{B}_2^{*}){\mb{W}}=\mb{H}.
\end{align*}
This completes the proof of parts (d) through (f) of the theorem.
\end{proof}

We have the following corollary of Theorem \ref{thm:PropOptGaussSol}.
\begin{Cor} \label{cor:CorrThm2}
If $r<m= r+p$, then we can set $$\mb{t}_i = \sqrt{\alpha_i}\mb{a}_i$$ for all $i$ in $\{1,2,\dots,p\}$. Similarly, if $r<m=r+q$, then we can set $$\mb{w}_i = \sqrt{\beta_i}\mb{b}_i$$ for all $i$ in $\{1,2,\dots,q\}$.
\end{Cor}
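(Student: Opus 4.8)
The plan is to specialize the inductive construction in the proof of Theorem \ref{thm:PropOptGaussSol} to the extremal case $r+p=m$ and verify that the natural candidate $\mb{t}_i = \sqrt{\alpha_i}\mb{a}_i$ already satisfies all the required properties, so that no freedom is lost. Recall from the discussion preceding the theorem that when $r+p=m$ the vectors $\mb{s}_1,\dots,\mb{s}_r,\mb{a}_1,\dots,\mb{a}_p$ are linearly independent, hence $[\mb{S},\mb{T}]$ with $\mb{T} \triangleq [\sqrt{\alpha_1}\mb{a}_1,\dots,\sqrt{\alpha_p}\mb{a}_p]$ is invertible; this gives the invertibility requirement and part (a) of the theorem trivially (since the $\mb{t}_i$ are scalar multiples of the $\mb{a}_i$).

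The crux is to check parts (b) and (c), i.e. that $\mb{T}^T(\mb{K_X}-\mb{B}_2^{*})\mb{T}$ is diagonal and equals $\mb{T}^T(\mb{K_X}-\mb{B}_1^{*}-\mb{B}_2^{*})\mb{T}$, and that $\mb{S}^T(\mb{K_X}-\mb{B}_2^{*})\mb{T} = \mb{S}^T(\mb{K_X}-\mb{B}_1^{*}-\mb{B}_2^{*})\mb{T} = \mb{S}^T\mb{D}\mb{T} = \mb{0}$. The key identity, already used in the theorem's proof, is
\[
\frac{\mu}{2}(\mb{K_X}-\mb{B}_2^{*})^{-1} = \mb{S}\mb{S}^T + \sum_{i=1}^p \alpha_i \mb{a}_i\mb{a}_i^T.
\]
Multiplying on the left by $\mb{a}_j^T(\mb{K_X}-\mb{B}_2^{*})$ and on the right by $(\mb{K_X}-\mb{B}_2^{*})\mb{a}_k$ and using $\mb{a}_j^T\mb{a}_i = \delta_{ji}$ (orthonormality) together with $\mb{s}_i^T\mb{a}_j$-type cross terms: more directly, invert the identity to get $(\mb{K_X}-\mb{B}_2^{*}) = \tfrac{2}{\mu}\left(\mb{S}\mb{S}^T + \sum_i \alpha_i\mb{a}_i\mb{a}_i^T\right)^{-1}$ is awkward, so instead I would argue as follows: since $\{\mb{s}_1,\dots,\mb{s}_r,\mb{a}_1,\dots,\mb{a}_p\}$ is a basis, apply the identity to this basis. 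For any $\mb{a}_k$ we have $\tfrac{\mu}{2}(\mb{K_X}-\mb{B}_2^{*})^{-1}\mb{a}_k$ expanded in terms of $\mb{S}\mb{S}^T\mb{a}_k$ and $\sum_i\alpha_i\mb{a}_i(\mb{a}_i^T\mb{a}_k) = \alpha_k\mb{a}_k$; but this gives $(\mb{K_X}-\mb{B}_2^{*})^{-1}\mb{a}_k$, not what we want. The cleaner route is to note that the vectors $\mb{z}_i$ produced in the induction can be taken to be $\mb{z}_i = (\mb{K_X}-\mb{B}_2^{*})\mb{a}_i$ up to scaling: one checks $[\mb{S},\mb{T}_{i-1}]^T(\mb{K_X}-\mb{B}_2^{*})\mb{a}_i = \mb{0}$ using the displayed identity (the left side equals $\tfrac{\mu}{2}[\mb{S},\mb{T}_{i-1}]^T(\mb{K_X}-\mb{B}_2^{*})^{-1}(\mb{K_X}-\mb{B}_2^{*})\mb{a}_i$ — no; rather, pre-multiply the identity form $\tfrac{\mu}{2}\mb{a}_j = \mb{S}\mb{S}^T(\mb{K_X}-\mb{B}_2^{*})\mb{a}_j + \sum_i\alpha_i\mb{a}_i(\mb{a}_i^T(\mb{K_X}-\mb{B}_2^{*})\mb{a}_j)$ obtained exactly as in the $\mb{t}_1$ computation in the theorem's proof but with $\mb{t}_1$ replaced by $\mb{a}_j$). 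From $\tfrac{\mu}{2}\mb{a}_j = \mb{S}\mb{S}^T(\mb{K_X}-\mb{B}_2^{*})\mb{a}_j + \sum_i\alpha_i\mb{a}_i(\mb{a}_i^T(\mb{K_X}-\mb{B}_2^{*})\mb{a}_j)$, taking the inner product with $\mb{s}_l$ and using orthonormality of $\{\mb{s}_l,\mb{a}_i\}$ being merely a basis (not orthogonal to each other in general) — here is the subtlety — I instead take the inner product with a vector dual to $\mb{s}_l$ in this basis, or more simply observe that $\mb{S}^T(\mb{K_X}-\mb{B}_2^{*})\mb{a}_j$ and $\mb{a}_i^T(\mb{K_X}-\mb{B}_2^{*})\mb{a}_j$ for $i\neq j$ are the only unknowns, and linear independence of $\{\mb{s}_l\}\cup\{\mb{a}_i\}$ forces, upon comparing coefficients of the basis expansion of $\tfrac{\mu}{2}\mb{a}_j$, that all coefficients on $\mb{s}_l$ vanish and the coefficient on $\mb{a}_i$ ($i\neq j$) vanishes. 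This yields $\mb{S}^T(\mb{K_X}-\mb{B}_2^{*})\mb{a}_j = \mb{0}$ and $\mb{a}_i^T(\mb{K_X}-\mb{B}_2^{*})\mb{a}_j = 0$ for $i\neq j$, which after scaling by $\sqrt{\alpha_i\alpha_j}$ gives exactly the $\mb{C}$-orthogonality of $\mb{T}$ and cross $\mb{C}$-orthogonality of $\mb{S},\mb{T}$ for $\mb{C} = \mb{K_X}-\mb{B}_2^{*}$. The remaining equalities with $\mb{K_X}-\mb{B}_1^{*}-\mb{B}_2^{*}$ and $\mb{D}$ then follow verbatim from the theorem's proof via $\mb{B}_1^{*}\mb{T} = \mb{0}$ (from $\mb{B}_1^{*}\mb{a}_i = \mb{0}$, which is (\ref{eq:KKT3}) since $\mb{a}_i$ are eigenvectors of $\mb{M}_1^{*}$ with positive eigenvalue) and (\ref{eq:ActiveDistConst}).

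The symmetric statement for $\mb{W}$ when $r+q=m$ is handled identically, using the companion identity $\tfrac{1}{2}(\mb{K_Y}-\mb{B}_2^{*})^{-1} = \mb{S}\mb{S}^T + \sum_{i=1}^q\beta_i\mb{b}_i\mb{b}_i^T$ from the theorem's proof and $\mb{B}_2^{*}\mb{b}_i = \mb{0}$. The main obstacle I anticipate is the bookkeeping around the fact that $\{\mb{s}_l\}\cup\{\mb{a}_i\}$ is only a basis, not an orthonormal set, so extracting the vanishing of the cross terms requires invoking linear independence to equate coefficients rather than simply projecting; once that observation is in place the rest is a direct transcription of the theorem's argument with $\mb{t}_i$ set to $\sqrt{\alpha_i}\mb{a}_i$ from the outset.
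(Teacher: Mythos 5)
Your proof is correct and follows essentially the same approach as the paper: take $\mb{t}_i = \sqrt{\alpha_i}\mb{a}_i$, post-multiply the identity $\frac{\mu}{2}(\mb{K_X}-\mb{B}_2^{*})^{-1} = \mb{S}\mb{S}^T + \sum_i\alpha_i\mb{a}_i\mb{a}_i^T$ by $(\mb{K_X}-\mb{B}_2^{*})$ times each basis vector, and invoke linear independence of the columns of $[\mb{S},\mb{T}]$ to equate coefficients, which yields exactly the cross-orthogonality and diagonality needed (the paper packages this as $[\mb{S},\mb{T}]^T(\mb{K_X}-\mb{B}_2^{*})[\mb{S},\mb{T}]=\frac{\mu}{2}\mb{I}_m$). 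You correctly identified the key subtlety — that $\{\mb{s}_l\}\cup\{\mb{a}_i\}$ is a basis but not orthonormal, so one must compare coefficients rather than project — which is precisely what the paper does; a cleaner write-up would drop the false starts and go straight to that comparison.
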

\begin{proof}
Let $r<m=r+p$ and let us set $$\mb{t}_i = \sqrt{\alpha_i}\mb{a}_i$$ for all $i$ in $\{1,2,\dots,p\}$ in the definition of $\mb{T}$. We have from (\ref{eq:Delta}), (\ref{eq:EigDecom1}), and (\ref{eq:EigDecom2}) that
\begin{align}
\frac{\mu}{2} (\mb{K_X}-\mb{B}_2^{*})^{-1} = \sum_{i=1}^r \lambda_i \mb{s}_i \mb{s}_i^T + \sum_{i=1}^p \alpha_i {\mb{a}}_i {\mb{a}}_i^T. \label{eq:Corr1}
\end{align}
Now, on post-multiplying (\ref{eq:Corr1}) by $(\mb{K_X}-\mb{B}_2^{*}) \mb{s}_1$, we obtain
\begin{align*}
\frac{\mu}{2} \mb{s}_1 &= \sum_{i=1}^{r} \lambda_i \mb{s}_i \left (\mb{s}_i^{T} (\mb{K_X}-\mb{B}_2^{*}) \mb{s}_1 \right )+\sum_{i=1}^{p} \alpha_i {\mb{a}}_i \left ({\mb{a}}_i^{T} (\mb{K_X}-\mb{B}_2^{*}) \mb{s}_1 \right ),
\end{align*}
which can be re-written as
\begin{align}
\mb{s}_1 \left (\frac{\mu}{2} - \lambda_1 \left (\mb{s}_1^{T}(\mb{K_X}-\mb{B}_2^{*}) \mb{s}_1 \right ) \right ) - \sum_{i=2}^{r} \lambda_i \mb{s}_i \left (\mb{s}_i^{T}(\mb{K_X}-\mb{B}_2^{*}) \mb{s}_1 \right ) &=\sum_{i=1}^{p} \alpha_i {\mb{a}}_i \left ({\mb{a}}_i^{T} (\mb{K_X}-\mb{B}_2^{*}) \mb{s}_1 \right ). \label{eq:Corr2}
\end{align}
Since $[\mb{S,T}]$ is invertible from (\ref{eq:Corr1}), its columns are linearly independent. Hence, the coefficients of all vectors in (\ref{eq:Corr2}) must be zero. Therefore,
\begin{align*}
\lambda_1 \mb{s}_1^{T} (\mb{K_X}-\mb{B}_2^{*}) \mb{s}_1 &= \frac{\mu}{2}  , \\
\mb{s}_i^{T} (\mb{K_X}-\mb{B}_2^{*}) \mb{s}_1 &= 0, \hspace{0.1in}\forall i \in \{2,\dots,r\}, \hspace{0.1in} \textrm{and} \\
{\mb{a}}_i^{T} (\mb{K_X}-\mb{B}_2^{*}) \mb{s}_1 &= 0, \hspace{0.1in}\forall i \in \{1,\dots,p\}.
\end{align*}
Likewise, on post-multiplying (\ref{eq:Corr1}) by $(\mb{K_X}-\mb{B}_2^{*}) \mb{s}_2, \dots, (\mb{K_X}-\mb{B}_2^{*}) \mb{s}_r, (\mb{K_X}-\mb{B}_2^{*}) {\mb{a}}_1, \dots, (\mb{K_X}-\mb{B}_2^{*}) {\mb{a}}_p$ and then equating all coefficients to zero, we obtain similar equations. In summary,
\begin{align*}
\lambda_i \mb{s}_i^{T}(\mb{K_X}-\mb{B}_2^{*}) \mb{s}_i &= \frac{\mu}{2}, \hspace{0.1in}\forall i \in \{1,\dots,r\},\\
\alpha_i {\mb{a}}_i^{T} (\mb{K_X}-\mb{B}_2^{*}) {\mb{a}}_i &=\frac{\mu}{2} , \hspace{0.1in}\forall i \in \{1,\dots,p\},\\
\mb{s}_i^{T} (\mb{K_X}-\mb{B}_2^{*}) \mb{s}_j &= 0, \hspace{0.1in}\forall i,j \in \{1,\dots,r\}, i \neq j, \\
{\mb{a}}_i^{T} (\mb{K_X}-\mb{B}_2^{*}) {\mb{a}}_j &= 0, \hspace{0.1in}\forall i,j \in \{1,\dots,p\}, i \neq j, \hspace{0.1in} \textrm{and}\\
\mb{s}_i^{T} (\mb{K_X}-\mb{B}_2^{*}) {\mb{a}}_j &= 0, \hspace{0.1in}\forall i \in \{1,\dots,r\}, \forall j \in \{1,\dots,p\}.
\end{align*}
Hence,
\begin{align}
[\mb{S},\mb{T}]^T (\mb{K_X}-\mb{B}_2^{*}) [\mb{S},\mb{T}]= \frac{\mu}{2} \mb{I}_m. \label{eq:Corr3}
\end{align}
Parts (a) through (c) of Theorem \ref{thm:PropOptGaussSol} follow immediately from (\ref{eq:KKT3}), (\ref{eq:KKT4}), and (\ref{eq:Corr3}) because $\mb{M}_1^{*} = \mb{TT}^T$ in this case.

The proof for the case when $r < m = r+q$ is exactly similar. It starts with the following from (\ref{eq:Delta}), (\ref{eq:EigDecom1}), and (\ref{eq:EigDecom3})
\begin{align*}
\frac{1}{2} (\mb{K_Y}-\mb{B}_2^{*})^{-1}=\mb{\Delta}^{*}+\mb{M}_2^{*}=\sum_{i=1}^r \lambda_i \mb{s}_i \mb{s}_i^T +\sum_{i=1}^q \beta_i \mb{b}_i\mb{b}_i^T.
\end{align*}
\end{proof}
In summary, the key properties of the optimal Gaussian solution are as follows. If $\mb{\Delta}^{*}$ (and hence $\mb{S}$) is not invertible, then there exist two matrices $\mb{T}$ and $\mb{W}$ such that
their columns respectively are in $\textrm{span}\{{\mb{a}}_i\}_{i=1}^p$ and $\textrm{span}\{\mb{b}_i\}_{i=1}^q$, $[\mb{S,T}]$ and $[\mb{S,W}]$ are invertible, $\mb{S}$ and $\mb{T}$ are cross $(\mb{K_X}-\mb{B}_2^{*})${-orthogonal}, and $\mb{S}$ and $\mb{W}$ are cross $(\mb{K_Y}-\mb{B}_2^{*})${-orthogonal}. We shall exploit these properties in the next section to prove the optimality of an optimization problem, which is central to prove our main result.

\section{Converse Ingredients}\label{sec:ConvIngre}
Let us define an optimization problem as
\begin{align*}
\left (P\right) \hspace {0.2in}\triangleq \hspace {0.5in}\min_{\mb{U,V}} \hspace {0.1in} & \mu I(\mb{X};\mb{U}|\mb{V})+I(\mb{Y};\mb{V}) \nonumber\\
\textrm{subject to} \hspace{0.1in} &\mb{K}_{\mb{X}|\mb{U,V}} \preccurlyeq \mb{D} \hspace{0.1in} \textrm{and} \nonumber\\
&\mb{X}\leftrightarrow \mb{Y}\leftrightarrow \mb{V},
\end{align*}
where $\mb{X}, \mb{Y}, \mb{D},$ and $\mu$ are defined as before. We refer to this problem as the \emph{main optimization problem} and denote it by $\left (P\right)$. We have the following theorem.
\begin{Thm} \label{thm:MainOptProb}
A Gaussian $(\mb{U,V})$ is an optimal solution of the main optimization problem $\left (P\right)$.
\end{Thm}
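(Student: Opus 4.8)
The plan is to establish the nontrivial inequality $v(P)\ge v(P_{G1})$; combined with the fact that a jointly Gaussian $(\mb{U},\mb{V})$ realizing the optimal $(\mb{B}_1^{*},\mb{B}_2^{*})$ of $(P_{G2})$ is feasible for $(P)$ and achieves objective $v(P_{G2})=v(P_{G1})$, this shows that this Gaussian pair is an optimal solution. When $0\le\mu\le1$, the point-to-point Gaussian solution with trivial $\mb{V}$ is already optimal, by the data-processing bound $I(\mb{Y};\mb{V})\ge I(\mb{X};\mb{V})$ together with $\mu\le1$, so I focus on $\mu>1$. Fix an optimal $(\mb{B}_1^{*},\mb{B}_2^{*})$ of $(P_{G2})$, optimal KKT multipliers $\mb{M}_1^{*},\mb{M}_2^{*},\mb{\Lambda}^{*}$, the matrix $\mb{\Delta}^{*}=\mb{S}\mb{S}^T$ of Lemma~\ref{lem:PosSemiDefDelta}, and, when $r<m$, the matrices $\mb{T}$ and $\mb{W}$ furnished by Theorem~\ref{thm:PropOptGaussSol}. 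A preliminary observation that makes Oohama's machinery available: since $\mb{N}=\mb{X}-\mb{Y}$ is independent of $\mb{Y}$ and $\mb{X}\leftrightarrow\mb{Y}\leftrightarrow\mb{V}$, the pair $(\mb{Y},\mb{V})$ is jointly independent of $\mb{N}$, so conditioned on $\mb{V}$ we still have the additive model $\mb{X}=\mb{Y}+\mb{N}$ with $\mb{Y}$ and $\mb{N}$ conditionally independent, which is what the conditional entropy power inequality requires.

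The first main step is \emph{distortion projection}. By~(\ref{eq:ActiveDistConst}), the distortion constraint is active for the candidate Gaussian scheme only in the directions spanned by $\mb{S}$, i.e., $(\mb{K_X}-\mb{B}_1^{*}-\mb{B}_2^{*}-\mb{D})\mb{S}=\mb{0}$. Using the invertible matrices $[\mb{S},\mb{T}]$ and $[\mb{S},\mb{W}]$, I would pass to coordinates adapted to the splittings $\textrm{span}\{\mb{s}_i\}\oplus\textrm{span}\{\mb{t}_i\}$ and $\textrm{span}\{\mb{s}_i\}\oplus\textrm{span}\{\mb{w}_i\}$ of $\mathbb{R}^m$, and decompose $\mb{X},\mb{Y},\mb{N},\mb{D}$ accordingly. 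Parts (b), (c), (e), (f) of Theorem~\ref{thm:PropOptGaussSol} state precisely that, for the quadratic forms that appear in $(P_{G2})$, the $\mb{T}$-block is invisible to the difference $(\mb{K_X}-\mb{B}_2^{*})-(\mb{K_X}-\mb{B}_1^{*}-\mb{B}_2^{*})$ and is cross-orthogonal to the $\mb{S}$-block, and similarly for $\mb{W}$ on the $\mb{Y}$-side; consequently these complementary directions contribute nothing to the Gaussian objective, and the analysis reduces, without loss, to the $\mb{S}$-block, on which the distortion constraint is met with equality for the candidate scheme. This is exactly the regime in which Oohama's \emph{distortion problem} incurs no loss. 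If $r=m$, i.e., $\mb{\Delta}^{*}\succ\mb{0}$, this step is vacuous, since then $\mb{K_X}-\mb{B}_1^{*}-\mb{B}_2^{*}=\mb{D}$ already; and the degenerate subcases $p=0$ or $q=0$ are handled via Corollary~\ref{cor:CorrThm2}, which pins $\mb{T}$ and $\mb{W}$ down explicitly.

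The second main step is \emph{source enhancement}, lifted from~\cite{Guo} and applied to the projected problem: replace $\mb{K_N}$ by an enhanced covariance $\widetilde{\mb{K}}_{\mb{N}}\preccurlyeq\mb{K_N}$ built from $\mb{B}_2^{*}$ and the KKT relation~(\ref{eq:Delta}), chosen so that (i) the relaxed \emph{helper problem} incurs no loss and (ii) the conditional entropy power inequality holds with equality at the Gaussian optimum $\mb{K}_{\mb{Y}|\mb{V}}=\mb{K_Y}-\mb{B}_2^{*}$. Enhancement only enlarges the feasible set, so it cannot increase $v(P)$, while the tie to the KKT conditions keeps the value of the Gaussian solution unchanged. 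On the enhanced, projected problem I would then run Oohama's chain: lower-bound $\mu I(\mb{X};\mb{U}|\mb{V})$ using the Gaussian entropy-maximizing property, concavity of $\log|\cdot|$, and the (now tight) distortion constraint; lower-bound $I(\mb{Y};\mb{V})$ using the conditional entropy power inequality for the enhanced noise together with the entropy-maximizing property; add the two bounds; and minimize the resulting expression over the single matrix degree of freedom $\mb{K}_{\mb{Y}|\mb{V}}$ subject to $\mb{0}\preccurlyeq\mb{K}_{\mb{Y}|\mb{V}}\preccurlyeq\mb{K_Y}$. By construction, this minimization is the projected version of $(P_{G2})$, whose value is $v(P_{G2})=v(P_{G1})$ and is attained at $(\mb{B}_1^{*},\mb{B}_2^{*})$; unwinding the projection yields $\mu I(\mb{X};\mb{U}|\mb{V})+I(\mb{Y};\mb{V})\ge v(P_{G1})$ for the original feasible $(\mb{U},\mb{V})$, which proves the theorem.

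The hard part is the interaction between the two reductions. \emph{Distortion projection} is forced exactly when $\mb{\Delta}^{*}$ is singular, and one must verify that after projecting onto $\textrm{span}\{\mb{s}_i\}$ the enhanced-source argument and the conditional entropy power inequality still apply verbatim, and that it is the \emph{full} set of orthogonality relations in Theorem~\ref{thm:PropOptGaussSol}, rather than a weaker subset, that makes the $\mb{T}$- and $\mb{W}$-directions drop out simultaneously from the rate objective, the distortion constraint, and the entropy power inequality. Checking that the enhancement built from $\mb{B}_2^{*}$ descends to a valid enhancement of the projected source, and that the bookkeeping between the $\mb{X}$-side projection (via $\mb{T}$) and the $\mb{Y}$-side projection (via $\mb{W}$) is consistent, is precisely the extension beyond the scalar-helper case treated in~\cite{Rahman1}.
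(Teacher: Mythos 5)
Your proposal is correct and tracks the paper's own proof: treat $0 \le \mu \le 1$ via the point-to-point bound and data processing, then for $\mu > 1$ compose the KKT-based distortion projection through $\mb{S}$ (using the $\mb{T},\mb{W}$ structure of Theorem~\ref{thm:PropOptGaussSol}), source enhancement, and Oohama's decomposition into the distortion and helper subproblems, concluding $v(P) \ge v(P_{G2}) = v(P_{G1})$. One caveat worth flagging: the enhancement actually used does not replace $\mb{K_N}$ by a smaller covariance as you state; rather it enlarges both $\mb{K}_{\tilde{\mb{X}}}$ and $\mb{K}_{\tilde{\mb{Y}}}$ (and correspondingly $\tilde{\mb{D}} \to \hat{\mb{D}}$) via (\ref{eq:SrcEnh4})--(\ref{eq:SrcEnh5}), forcing $\mb{K}_{\hat{\mb{N}}} = \tfrac{\mu-1}{2}\mb{I}_r$, which is what makes the conditional EPI tight in Lemma~\ref{lem:Oohama2} and need not be $\preccurlyeq \mb{S}^T\mb{K_N}\mb{S}$; the monotonicity $v(\tilde P)\ge v(\hat P)$ comes from $\hat{\mb{X}},\hat{\mb{Y}}$ being (conditionally) degraded versions of $\tilde{\mb{X}},\tilde{\mb{Y}}$ together with the compensating shift of the distortion constraint, not from shrinking the noise.
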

We prove this theorem in the remainder of the section. The proof for $\mu$ in $[0,1]$ is easy. In this case, the objective of $\left (P\right)$ can be lower bounded as
\begin{align}
\mu I(\mb{X};\mb{U}|\mb{V})+I(\mb{Y};\mb{V}) &= \mu I(\mb{X};\mb{U,V})-\mu I(\mb{X};\mb{V})+I(\mb{Y};\mb{V})\nonumber\\
&= \mu I(\mb{X};\mb{U})+\mu I(\mb{X};\mb{V}|\mb{U})+\mu [I(\mb{Y};\mb{V})- I(\mb{X};\mb{V})] + (1-\mu) I(\mb{Y};\mb{V})\nonumber\\
&\ge \mu I(\mb{X};\mb{U}) \label{eq:ConvIngr1}\\
&= \mu h(\mb{X}) - \mu h(\mb{X}|\mb{U}) \nonumber\\
&\ge \frac{\mu}{2} \log \frac{|\mb{K_X}|}{|\mb{K_{X|U}}|},\label{eq:ConvIngr2}
\end{align}
where
\begin{enumerate}
\item[(\ref{eq:ConvIngr1})] follows because of the facts that $$I(\mb{Y};\mb{V}) \ge 0$$ and $$I(\mb{X};\mb{V} | \mb{U}) \ge 0,$$ and we have
\[
I(\mb{Y};\mb{V})- I(\mb{X};\mb{V}) \ge 0
\]
because of the data processing inequality \cite[Theorem 2.8.1]{Cover} and the Markov chain $\mb{X}\leftrightarrow \mb{Y}\leftrightarrow \mb{V},$ and
\item[(\ref{eq:ConvIngr2})] follows because the Gaussian distribution maximizes the differential entropy for a given covariance matrix \cite[Theorem 8.6.5]{Cover}, i.e.,
 \[
 h(\mb{X}|\mb{U}) \le \frac{1}{2} \log \left( \left (2 \pi e \right)^m \left |\mb{K}_{\mb{X}|\mb{U}} \right | \right).
 \]
\end{enumerate}
Inequalities (\ref{eq:ConvIngr1}) and (\ref{eq:ConvIngr2}) become equalities if we choose a Gaussian $(\mb{U,V})$ such that $\mb{V}$ is independent of $(\mb{X},\mb{Y}, \mb{U})$. Because of the distortion constraint in $(P)$,  the conditional covariance of $\mb{X}$ given $(\mb{U,V})$ should satisfy
\begin{align*}
\mb{0} \preccurlyeq \mb{K}_{\mb{X}|\mb{U,V}}=\mb{K}_{\mb{X}|\mb{U}} \preccurlyeq \mb{D}.
\end{align*}
Since conditioning reduces covariance in a positive semidefinite sense, we also have
\begin{align*}
\mb{K}_{\mb{X}|\mb{U}} \preccurlyeq \mb{K_X}.
\end{align*}
Hence, if $\mu$ is in $[0,1]$, then a Gaussian $(\mb{U,V})$ is an optimal solution of the \emph{main optimization problem} $\left (P\right)$ and the optimal value is
\begin{align}
v\left (P\right) &=\hspace{0.25in}\min_{\mb{K}_{\mb{X}|\mb{U}}} \hspace {0.2in} \frac{\mu}{2} \log \frac{ \left |\mb{K_X} \right |}{ \left|\mb{K}_{\mb{X}|\mb{U}}\right|} \nonumber\\
&\hspace{0.3in}\textrm{subject to} \hspace{0.12in} \mb{K}_{\mb{X}} \succcurlyeq \mb{K}_{\mb{X}|\mb{U}}  \succcurlyeq \mb{0}\hspace{0.1in} \textrm{and}\nonumber\\
&\hspace{0.95in}\mb{D} \succcurlyeq \mb{K}_{\mb{X}|\mb{U}} \nonumber\\
&=v\left (P_{pt-pt} \right).\label{eq:OptVal1}
\end{align}
We therefore assume that $\mu > 1$ in the rest of the section.

Let us first restrict the solution space of $\left (P\right)$ to Gaussian distributions. This results in an optimization problem $\left (P_{G1} \right)$, or equivalently $\left (P_{G2} \right)$, defined in Section \ref{sec:PropOptGaussSol}. For convenience, we shall work with the $\left (P_{G2} \right)$ formulation. First note that since restricting the solution space to Gaussian distributions can only increase the optimal value of the \emph{main optimization problem} $\left (P\right)$, we immediately have
\begin{align}
v\left (P_{G1} \right)=v\left (P_{G2} \right) \ge v\left (P\right). \label{eq:ConvIngr4}
\end{align}
So, it suffices to prove the reverse inequality
\begin{align*}
v\left (P_{G2} \right) \le v\left (P\right).
\end{align*}

Let $(\mb{B}^{*}_1,\mb{B}^{*}_2)$ be an optimal solution to $\left (P_{G2} \right)$. 
As discussed in Section \ref{sec:PropOptGaussSol}, $(\mb{B}^{*}_1,\mb{B}^{*}_2)$ gives three matrices $\mb{S}, \mb{T},$ and $\mb{W}$ that satisfy the properties in Theorem \ref{thm:PropOptGaussSol}. Using these properties, the optimal value of $\left (P_{G2} \right)$ can be expressed as
\begin{align}
v\left (P_{G2} \right) &= \frac{\mu}{2} \log \frac{\left |\mb{K_X}-\mb{B}^{*}_2 \right |}{ \left |\mb{K_X}-\mb{B}^{*}_1-\mb{B}^{*}_2 \right |}+\frac{1}{2} \log \frac{\left |\mb{K}_{\mb{Y}} \right |}{ \left |\mb{K_Y}-\mb{B}^{*}_2 \right |} \nonumber\\
&= \frac{\mu}{2} \log \frac{\left |[\mb{S,T}]^T\left(\mb{K_X}-\mb{B}^{*}_2\right)[\mb{S,T}] \right |}{ \left |[\mb{S,T}]^T\left(\mb{K_X}-\mb{B}^{*}_1-\mb{B}^{*}_2\right)[\mb{S,T}] \right |}+\frac{1}{2} \log \frac{\left |[\mb{S,W}]^T\mb{K}_{\mb{Y}}[\mb{S,W}] \right |}{ \left |[\mb{S,W}]^T\left(\mb{K_Y}-\mb{B}^{*}_2 \right)[\mb{S,W}] \right |} \label{eq:ConvIngr7}\\
&= \frac{\mu}{2} \log \frac{\left |\mat{\mb{S}^T \left(\mb{K_X}-\mb{B}^{*}_2\right) \mb{S} & \mb{0} \\ \mb{0} & \mb{T}^T \left(\mb{K_X}-\mb{B}^{*}_2\right) \mb{T}} \right |}{ \left | \mat{\mb{S}^T \left(\mb{K_X}-\mb{B}^{*}_1-\mb{B}^{*}_2\right) \mb{S} & \mb{0} \\ \mb{0} & \mb{T}^T \left(\mb{K_X}-\mb{B}^{*}_1-\mb{B}^{*}_2\right) \mb{T}} \right |} \nonumber\\
&\hspace{0.5in}+\frac{1}{2} \log \frac{\left |\mat{\mb{S}^T \mb{K}_{\mb{Y}} \mb{S} & \mb{0} \\ \mb{0} & \mb{W}^T \mb{K}_{\mb{Y}} \mb{W}} \right |}{ \left | \mat{\mb{S}^T \left(\mb{K_Y}-\mb{B}^{*}_2 \right) \mb{S} & \mb{0} \\ \mb{0} & \mb{W}^T \left(\mb{K_Y}-\mb{B}^{*}_2 \right) \mb{W}} \right |} \label{eq:ConvIngr8}\\
&= \frac{\mu}{2} \log \frac{\left |\mb{S}^T \left(\mb{K_X}-\mb{B}^{*}_2\right) \mb{S} \right| \left|\mb{T}^T \left(\mb{K_X}-\mb{B}^{*}_2\right) \mb{T}\right| }{ \left | \mb{S}^T \left(\mb{K_X}-\mb{B}^{*}_1-\mb{B}^{*}_2\right) \mb{S} \right| \left|\mb{T}^T \left(\mb{K_X}-\mb{B}^{*}_1-\mb{B}^{*}_2\right) \mb{T} \right |} \nonumber\\
&\hspace{0.5in}+\frac{1}{2} \log \frac{\left |\mb{S}^T \mb{K}_{\mb{Y}} \mb{S} \right| \left| \mb{W}^T \mb{K}_{\mb{Y}} \mb{W} \right |}{ \left | \mb{S}^T \left(\mb{K_Y}-\mb{B}^{*}_2 \right) \mb{S} \right| \left| \mb{W}^T \left(\mb{K_Y}-\mb{B}^{*}_2 \right) \mb{W} \right |} \nonumber\\
&= \frac{\mu}{2} \log \frac{\left |\mb{S}^T \left(\mb{K_X}-\mb{B}^{*}_2\right) \mb{S} \right| }{ \left | \mb{S}^T \mb{D} \mb{S} \right| } +\frac{1}{2} \log \frac{\left |\mb{S}^T \mb{K}_{\mb{Y}} \mb{S} \right|}{ \left | \mb{S}^T \left(\mb{K_Y}-\mb{B}^{*}_2 \right) \mb{S} \right| }, \label{eq:ConvIngr9}
\end{align}
where
\begin{enumerate}
\item[(\ref{eq:ConvIngr7})] follows because $[\mb{S,T}]$ and $[\mb{S,W}]$ are invertible,
\item[(\ref{eq:ConvIngr8})] follows because $\mb{S}$ and $\mb{T}$ are cross $\bigr\{(\mb{K_X}-\mb{B}^{*}_2),(\mb{K_X}-\mb{B}^{*}_1-\mb{B}^{*}_2)\bigr\}$-orthogonal, and $\mb{S}$ and $\mb{W}$ are cross $\bigr\{\mb{K_Y},(\mb{K_Y}-\mb{B}^{*}_2)\bigr\}$-orthogonal, and
\item[(\ref{eq:ConvIngr9})] follows from (\ref{eq:ActiveDistConst}) and the facts that
\begin{align*}
\mb{T}^T \left(\mb{K_X}-\mb{B}^{*}_2\right) \mb{T}  &=\mb{T}^T \left(\mb{K_X}-\mb{B}^{*}_1-\mb{B}^{*}_2\right) \mb{T} \hspace{0.1in} \textrm{and} \\
\mb{W}^T \mb{K_Y} \mb{W} &= \mb{W}^T \left(\mb{K_Y}-\mb{B}^{*}_2 \right) \mb{W}.
\end{align*}
\end{enumerate}

\subsection{Distortion Projection}
The special structure to the optimal Gaussian solution of $\left (P_{G2} \right)$ suggests a way to lower bound $\left (P\right)$ by projecting the sources $\mb{X}$ and $\mb{Y}$ on $\mb{S}$ and imposing the distortion constraint on the subspace spanned by the columns of $\mb{S}$. Note that the distortion constraint is tight on this subspace for the optimal Gaussian solution. We refer to this method of lower bounding $\left (P\right)$ as \emph{distortion projection}. Let us define
\begin{align*}
\tilde {\mb{X}} &\triangleq \mb{S}^T\mb{X},\\
\tilde {\mb{Y}} &\triangleq \mb{S}^T\mb{Y},\\
\tilde {\mb{D}} &\triangleq \mb{S}^T\mb{DS},\\
\tilde {\mb{B}}^{*}_1 &\triangleq \mb{S}^T{\mb{B}}^{*}_1 \mb{S},\\
\tilde {\mb{B}}^{*}_2 &\triangleq \mb{S}^T{\mb{B}}^{*}_2 \mb{S},\\
\tilde {\mb{M}}_1^{*} &\triangleq \left(\mb{S}^T\left(\mb{K_X}-\mb{B}^{*}_2\right)\mb{S}\right)^{-1}\mb{S}^T \left(\mb{K_X}-\mb{B}^{*}_2\right)\mb{M}_1^{*}\left(\mb{K_X}-\mb{B}^{*}_2\right)\mb{S}\left(\mb{S}^T\left(\mb{K_X}-\mb{B}^{*}_2\right)\mb{S}\right)^{-1}, \hspace{0.1in} \textrm{and}\\
\tilde {\mb{M}}_2^{*} &\triangleq \left(\mb{S}^T\left(\mb{K_Y}-\mb{B}^{*}_2\right)\mb{S}\right)^{-1}\mb{S}^T \left(\mb{K_Y}-\mb{B}^{*}_2\right)\mb{M}_2^{*}\left(\mb{K_Y}-\mb{B}^{*}_2\right)\mb{S}\left(\mb{S}^T\left(\mb{K_Y}-\mb{B}^{*}_2\right)\mb{S}\right)^{-1}.
\end{align*}
Since $\mb{S}$ has full column rank, we immediately have that
\begin{align*}
\mb{K}_{\tilde {\mb{X}}},\mb{K}_{\tilde {\mb{Y}}},\tilde {\mb{D}}  &\succ \mb{0},\\
\tilde {\mb{B}}^{*}_1, \tilde {\mb{B}}^{*}_2 &\succcurlyeq \mb{0}, \hspace{0.1in} \textrm{and}\\
\tilde {\mb{M}}^{*}_1, \tilde {\mb{M}}^{*}_2 &\succcurlyeq \mb{0}.
\end{align*}
The \emph{projected optimization problem} $(\tilde P)$ is now defined as
\begin{align*}
(\tilde P) \hspace {0.2in}\triangleq \hspace {0.5in}\min_{\mb{U,V}} \hspace {0.1in} & \mu I(\tilde {\mb{X}};\mb{U}|\mb{V})+I(\tilde {\mb{Y}};\mb{V}) \nonumber\\
\textrm{subject to} \hspace{0.1in} &\mb{K}_{\tilde{\mb{X}}|\mb{U,V}} \preccurlyeq \tilde{ \mb{D}} \hspace{0.1in} \textrm{and}\nonumber\\
&\tilde{\mb{X}}\leftrightarrow \tilde{\mb{Y}}\leftrightarrow \mb{V}.
\end{align*}
We next show that the \emph{main optimization problem} $\left (P\right)$ is lower bounded by the \emph{projected optimization problem} $(\tilde P)$. Since $[\mb{S,T}]$ and $[\mb{S,W}]$ are invertible and mutual information is nonnegative, we have
\begin{align}
\mu I(\mb{X};\mb{U}|\mb{V})+I(\mb{Y};\mb{V}) &= \mu I \left(\mb{S}^T \mb{X}, \mb{T}^T \mb{X};\mb{U}|\mb{V} \right)+I\left(\mb{S}^T\mb{Y},\mb{W}^T\mb{Y};\mb{V}\right) \nonumber\\
&= \mu I \left (\mb{S}^T \mb{X};\mb{U}|\mb{V} \right) + \mu I \left(\mb{T}^T \mb{X};\mb{U}|\mb{V},\mb{S}^T \mb{X} \right) +I\left (\mb{S}^T\mb{Y};\mb{V}\right)+I\left(\mb{W}^T\mb{Y};\mb{V}|\mb{S}^T \mb{Y}\right) \nonumber\\
&\ge \mu I \bigr(\tilde{ \mb{X}};\mb{U}|\mb{V} \bigr)+I\bigr (\tilde{\mb{Y}};\mb{V}\bigr).\label{eq:DistProj1}
\end{align}
Consider any $(\mb{U,V})$ feasible for $\left (P\right)$. Then
\begin{align}
&\mb{D} \succcurlyeq \mb{K}_{\mb{X}|\mb{U,V}} \hspace{0.1in} \textrm{and} \label{eq:DistProj2}\\
&\mb{X} \leftrightarrow \mb{Y} \leftrightarrow \mb{V} \label{eq:DistProj3}
\end{align}
Now (\ref{eq:DistProj2}) implies
\begin{align}
\tilde {\mb{D}} &= \mb{S}^T \mb{D}\mb{S} \succcurlyeq \mb{S}^T\mb{K}_{\mb{X}|\mb{U,V}} \mb{S} = \mb{K}_{\tilde {\mb{X}}|\mb{U,V}}, \label{eq:DistProj9}
\end{align}
and (\ref{eq:DistProj3}) yields
\begin{align}
0 &= I\left(\mb{X};\mb{V}|\mb{Y}\right)\nonumber\\
&= I\left(\mb{S}^T\mb{X};\mb{V}|\mb{Y}\right)+I\left(\mb{T}^T\mb{X};\mb{V}|\mb{Y},\mb{S}^T\mb{X}\right)\nonumber\\
&\ge I\left(\mb{S}^T\mb{X};\mb{V}|\mb{Y}\right) \label{eq:DistProj4}\\
&=I\left(\mb{S}^T\mb{X};\mb{V}|\mb{S}^T\mb{Y},\mb{W}^T\mb{Y}\right) \label{eq:DistProj5}\\
&=h\left(\mb{S}^T\mb{X}|\mb{S}^T\mb{Y},\mb{W}^T\mb{Y}\right)-h\left(\mb{S}^T\mb{X}|\mb{V},\mb{S}^T\mb{Y},\mb{W}^T\mb{Y}\right)\nonumber\\
&\ge h\left(\mb{S}^T\mb{X}|\mb{S}^T\mb{Y}\right)-h\left(\mb{S}^T\mb{X}|\mb{V},\mb{S}^T\mb{Y}\right)\label{eq:DistProj6}\\
&= I\left(\mb{S}^T\mb{X};\mb{V}|\mb{S}^T\mb{Y}\right)\nonumber\\
&= I\bigr(\tilde {\mb{X}};\mb{V}|\tilde{\mb{Y}}\bigr)\nonumber\\
&\ge 0, \label{eq:DistProj7}
\end{align}
where
\begin{enumerate}
\item[(\ref{eq:DistProj4})] and (\ref{eq:DistProj7}) follows because mutual information is nonnegative,
\item[(\ref{eq:DistProj5})] follows because $[\mb{S,W}]$ is invertible, and
\item[(\ref{eq:DistProj6})] follows because conditioning reduces entropy and we have from Theorem \ref{thm:PropOptGaussSol} that $\mb{W}^T\mb{Y}$ is independent of $\mb{S}^T\mb{Y}$, which implies that $\mb{W}^T\mb{Y}$ is also independent of $\mb{S}^T\mb{X}$ because $\mb{X}=\mb{Y}+\mb{N}$ by assumption.
\end{enumerate}
Now (\ref{eq:DistProj7}) is equivalent to
\begin{align*}
&\tilde{\mb{X}} \leftrightarrow \tilde{\mb{Y}} \leftrightarrow \mb{V},
\end{align*}
which together with (\ref{eq:DistProj9}) implies that $(\mb{U,V})$ is feasible for $(\tilde P)$. Hence, the feasible set of $\left (P\right)$ is contained in that of $(\tilde P)$. Moreover, (\ref{eq:DistProj1}) above implies that the objective of $\left (P\right)$ is no less than that of $(\tilde P)$. We therefore have that the \emph{projected optimization problem} $(\tilde P)$ lower bounds the \emph{main optimization problem} $\left (P\right)$, i.e.,
\begin{align}
v\left (P\right) &\ge v(\tilde P). \label{eq:DistProj8}
\end{align}
By restricting the solution space of $(\tilde P)$ to Gaussian distributions, we obtain its Gaussian version
\begin{align*}
(\tilde{P}_{G2}) \hspace {0.2in} \triangleq \hspace {0.2in}\min_{\tilde{\mb{B}}_1,\tilde{\mb{B}}_2} \hspace {0.2in} &\frac{\mu}{2} \log \frac{  |\mb{K}_{\tilde {\mb{X}}}- \tilde{\mb{B}}_2  |}{ |\mb{K}_{\tilde {\mb{X}}}- \tilde{\mb{B}}_1- \tilde{\mb{B}}_2|} + \frac{1}{2} \log \frac{  |\mb{K}_{\tilde {\mb{Y}}}  |}{ |\mb{K}_{\tilde {\mb{Y}}}-\tilde{\mb{B}}_2|}\\
\textrm{subject to} \hspace{0.1in} &\tilde{\mb{B}}_i \succcurlyeq \mb{0} \hspace {0.15 cm} \textrm{for all} \hspace {0.15 cm} i \in \{1,2\}, \hspace {0.15 cm} \textrm{and}\\
&\tilde{\mb{D}} \succcurlyeq \mb{K}_{\tilde {\mb{X}}}-\tilde{\mb{B}}_1-\tilde{\mb{B}}_2.
\end{align*}
It is easy to verify that the projected optimal Gaussian solution $(\tilde {\mb{B}}^{*}_1,\tilde {\mb{B}}^{*}_2)$ is feasible for $(\tilde{P}_{G2})$ and it meets the projected distortion constraint $\tilde{\mb{D}}$ with equality from (\ref{eq:ActiveDistConst}). We next show that $(\tilde {\mb{B}}^{*}_1,\tilde {\mb{B}}^{*}_2)$ is in fact optimal for $(\tilde P)$.

\emph{Remark 1:} If $r=m$, then there is no need for \emph{distortion projection} because $\mb{S}$ is invertible, and hence so is $\mb{\Delta}^{*}$. 

\subsection{Source Enhancement}
In this subsection, we use the KKT conditions (\ref{eq:KKT1}) through (\ref{eq:KKT5}) satisfied by $({\mb{B}}^{*}_1,{\mb{B}}^{*}_2)$ to derive conditions that must be satisfied by $(\tilde {\mb{B}}^{*}_1,\tilde {\mb{B}}^{*}_2)$. These conditions are then used to define the \emph{enhanced optimization problem,} which lower bounds $(\tilde P)$. We show that the optimal solution to the \emph{enhanced optimization problem} is Gaussian, in particular $(\tilde {\mb{B}}^{*}_1,\tilde {\mb{B}}^{*}_2)$ is optimal for the problem. This will in turn prove that $(\tilde {\mb{B}}^{*}_1,\tilde {\mb{B}}^{*}_2)$ is optimal for $(\tilde P)$. This approach of lower bounding is referred to as the \emph{source enhancement} \cite{Guo} and is similar to the \emph{channel enhancement} idea of Weingarten \emph{et al.} \cite{Wein}.

We start with the following key lemma.
\begin{Lem} \label{lem:SrcEnh1}
For $\mb{K}_{\tilde{\mb{X}}}, \mb{K}_{\tilde{\mb{Y}}}, \tilde{\mb{D}}, \tilde {\mb{B}}^{*}_i$, and $\tilde {\mb{M}}_i^{*},$ where $i = 1, 2$, defined as above, the following hold
\begin{align}
\mb{I}_r =\frac{\mu}{2} \bigr(\mb{K}_{\tilde{\mb{X}}}-\tilde{\mb{B}}^{*}_2\bigr)^{-1} - \tilde {\mb{M}}_1^{*}&=\frac{1}{2} \bigr( \mb{K}_{\tilde{\mb{Y}}}-\tilde{\mb{B}}^{*}_2\bigr)^{-1} - \tilde {\mb{M}}_2^{*}, \label{eq:SrcEnh1}\\
\tilde{\mb{B}}^{*}_i\tilde {\mb{M}}_i^{*} &= \mb{0} \hspace{0.1in} \textrm{for all} \hspace{0.1in} i \in \{1, 2\}, \hspace{0.1in} \textrm{and} \label{eq:SrcEnh2}\\
\mb{K}_{\tilde{\mb{X}}}-\tilde{\mb{B}}^{*}_1-\tilde{\mb{B}}^{*}_2&=\tilde{\mb{D}}.\label{eq:SrcEnh3}
\end{align}
\end{Lem}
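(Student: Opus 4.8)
The plan is to obtain all three identities by pushing the KKT conditions (\ref{eq:KKT1})--(\ref{eq:KKT5}), already known to hold at $(\mb{B}_1^*,\mb{B}_2^*)$, through the congruence transformation $\mb{A} \mapsto \mb{S}^T \mb{A}\, \mb{S}$. The observation that makes everything collapse is that $\mb{\Delta}^* = \mb{S}\mb{S}^T$, immediate from (\ref{eq:EigDecom1}) and the definition of $\mb{S}$; feeding this into (\ref{eq:Delta}) gives the explicit forms $\mb{M}_1^* = \frac{\mu}{2}(\mb{K_X}-\mb{B}_2^*)^{-1} - \mb{S}\mb{S}^T$ and $\mb{M}_2^* = \frac{1}{2}(\mb{K_Y}-\mb{B}_2^*)^{-1} - \mb{S}\mb{S}^T$. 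I would also record at the outset that $\mb{K_X}-\mb{B}_2^* \succ \mb{0}$ and $\mb{K_Y}-\mb{B}_2^* \succ \mb{0}$ (noted just after (\ref{eq:Delta})), so that, $\mb{S}$ having full column rank, the matrices $\mb{P} \triangleq \mb{S}^T(\mb{K_X}-\mb{B}_2^*)\mb{S} = \mb{K}_{\tilde{\mb{X}}}-\tilde{\mb{B}}_2^*$ and $\mb{Q} \triangleq \mb{S}^T(\mb{K_Y}-\mb{B}_2^*)\mb{S} = \mb{K}_{\tilde{\mb{Y}}}-\tilde{\mb{B}}_2^*$ are invertible.

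For (\ref{eq:SrcEnh1}) I would substitute the expression for $\mb{M}_1^*$ into the definition of $\tilde{\mb{M}}_1^*$: the sandwiched factor $\mb{S}^T(\mb{K_X}-\mb{B}_2^*)\mb{M}_1^*(\mb{K_X}-\mb{B}_2^*)\mb{S}$ simplifies to $\frac{\mu}{2}\mb{P} - \mb{P}^2$, so $\tilde{\mb{M}}_1^* = \mb{P}^{-1}\!\left(\frac{\mu}{2}\mb{P}-\mb{P}^2\right)\mb{P}^{-1} = \frac{\mu}{2}\mb{P}^{-1} - \mb{I}_r$, whence $\frac{\mu}{2}(\mb{K}_{\tilde{\mb{X}}}-\tilde{\mb{B}}_2^*)^{-1} - \tilde{\mb{M}}_1^* = \mb{I}_r$. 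The computation for $\tilde{\mb{M}}_2^*$ is word-for-word the same with $\mb{Q}$ and the constant $\frac{1}{2}$ replacing $\mb{P}$ and $\frac{\mu}{2}$, giving $\tilde{\mb{M}}_2^* = \frac{1}{2}\mb{Q}^{-1} - \mb{I}_r$ and hence $\frac{1}{2}(\mb{K}_{\tilde{\mb{Y}}}-\tilde{\mb{B}}_2^*)^{-1} - \tilde{\mb{M}}_2^* = \mb{I}_r$.

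For (\ref{eq:SrcEnh2}) I would start from the complementary-slackness relation $\mb{B}_1^*\mb{M}_1^* = \mb{0}$ in (\ref{eq:KKT3}); inserting $\mb{M}_1^* = \frac{\mu}{2}(\mb{K_X}-\mb{B}_2^*)^{-1} - \mb{S}\mb{S}^T$ and post-multiplying by $(\mb{K_X}-\mb{B}_2^*)\mb{S}$ gives $\frac{\mu}{2}\mb{B}_1^*\mb{S} = \mb{B}_1^*\mb{S}\mb{P}$, and pre-multiplying by $\mb{S}^T$ turns this into $\frac{\mu}{2}\tilde{\mb{B}}_1^* = \tilde{\mb{B}}_1^*\mb{P}$; combining with $\tilde{\mb{M}}_1^* = \frac{\mu}{2}\mb{P}^{-1}-\mb{I}_r$ gives $\tilde{\mb{B}}_1^*\tilde{\mb{M}}_1^* = \tilde{\mb{B}}_1^*\mb{P}\mb{P}^{-1} - \tilde{\mb{B}}_1^* = \mb{0}$, and the $i=2$ case is identical with $\mb{Q}$ and $\frac{1}{2}$ (using $\mb{B}_2^*\mb{M}_2^* = \mb{0}$). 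Finally (\ref{eq:SrcEnh3}) is immediate: pre- and post-multiplying the active-constraint identity (\ref{eq:ActiveDistConst}), $(\mb{K_X}-\mb{B}_1^*-\mb{B}_2^*-\mb{D})\mb{S} = \mb{0}$, by $\mb{S}^T$ and $\mb{S}$ yields $\mb{K}_{\tilde{\mb{X}}}-\tilde{\mb{B}}_1^*-\tilde{\mb{B}}_2^* - \tilde{\mb{D}} = \mb{0}$.

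I do not anticipate a genuine obstacle: the lemma is entirely a congruence-transformation bookkeeping exercise, and the design of the definitions of $\tilde{\mb{M}}_1^*$ and $\tilde{\mb{M}}_2^*$ is precisely what forces (\ref{eq:SrcEnh1}). The only place demanding care is keeping the order of the matrix products straight while simplifying the triple product defining $\tilde{\mb{M}}_i^*$, and making sure the invertibility of $\mb{P}$ and $\mb{Q}$ is justified before inverting them.
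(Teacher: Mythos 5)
Your proof is correct and hits all three identities. Strategically it is the same move as the paper's — push the KKT conditions through congruence by $\mb{S}$ — but your execution is noticeably more economical, and it is worth flagging the difference. For (\ref{eq:SrcEnh1}), the paper first expands $(\mb{K_X}-\mb{B}_2^*)^{-1}$ via the invertible block $[\mb{S},\mb{T}]$ and invokes the cross-$(\mb{K_X}-\mb{B}_2^*)$-orthogonality of $\mb{S}$ and $\mb{T}$ from Theorem \ref{thm:PropOptGaussSol} to block-diagonalize before conjugating; you simply substitute $\mb{M}_1^* = \frac{\mu}{2}(\mb{K_X}-\mb{B}_2^*)^{-1} - \mb{SS}^T$ into the sandwich defining $\tilde{\mb{M}}_1^*$ and observe that $(\mb{K_X}-\mb{B}_2^*)^{-1}$ cancels against the conjugating factors, yielding $\frac{\mu}{2}\mb{P}-\mb{P}^2$ with no appeal to $\mb{T}$ at all. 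For (\ref{eq:SrcEnh2}) the gap is wider: the paper needs the fact that the columns of $\mb{T}$ lie in $\mathrm{span}\{\mb{a}_i\}$ to get $\mb{B}_1^*\mb{T}=\mb{0}$, then reuses the $[\mb{S},\mb{T}]$ expansion and closes via the trace lemma (Lemma \ref{lem:LinearAlgFact}), whereas you derive $\frac{\mu}{2}\tilde{\mb{B}}_1^* = \tilde{\mb{B}}_1^*\mb{P}$ directly from $\mb{B}_1^*\mb{M}_1^*=\mb{0}$ by right- and left-multiplication and cancel against $\tilde{\mb{M}}_1^*=\frac{\mu}{2}\mb{P}^{-1}-\mb{I}_r$ without any positive-semidefiniteness or trace argument. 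Both proofs are sound; yours shows that for this particular lemma, the extra structural properties of $\mb{T}$, $\mb{W}$ and the trace lemma are not actually load-bearing, which is a nice observation. (They remain essential elsewhere in the paper, e.g.\ for the determinant identities (\ref{eq:ConvIngr7})--(\ref{eq:ConvIngr9}).) Your handling of (\ref{eq:SrcEnh3}) matches the paper's.
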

\begin{proof}
See Appendix D.
\end{proof}

Let $\mb{K}_{\hat {\mb{X}}}$ and $\mb{K}_{\hat {\mb{Y}}}$ be two real symmetric matrices satisfying
\begin{align}
\frac{\mu}{2} \bigr( \mb{K}_{\tilde{\mb{X}}}-\tilde{\mb{B}}^{*}_2\bigr)^{-1} - \tilde {\mb{M}}_1^{*} &= \frac{\mu}{2} \bigr( \mb{K}_{\hat{\mb{X}}}-\tilde{\mb{B}}^{*}_2\bigr)^{-1} \hspace{0.1in} \textrm{and} \label{eq:SrcEnh4}\\
\frac{1}{2} \bigr( \mb{K}_{\tilde{\mb{Y}}}-\tilde{\mb{B}}^{*}_2\bigr)^{-1} - \tilde {\mb{M}}_2^{*} &= \frac{1}{2} \bigr( \mb{K}_{\hat{\mb{Y}}}-\tilde{\mb{B}}^{*}_2\bigr)^{-1}. \label{eq:SrcEnh5}
\end{align}
We now have the following lemma, which is similar to \cite[Lemmas 11, 12]{Wein}.
\begin{Lem}\label{lem:SrcEnh2}
For $\mb{K}_{\tilde {\mb{X}}}, \mb{K}_{\tilde {\mb{Y}}}, \mb{K}_{\hat {\mb{X}}}, \mb{K}_{\hat {\mb{Y}}}, \tilde{\mb{B}}^{*}_i, \tilde{\mb{M}}^{*}_i,$ $i = 1, 2$, defined as above, and $\mu > 1$, the following hold
\begin{align}
\mb{K}_{\hat {\mb{X}}} - \tilde{\mb{B}}^{*}_2 &= \frac{\mu}{2} \mb{I}_r, \label{eq:SrcEnh6}\\
\mb{K}_{\hat {\mb{Y}}} - \tilde{\mb{B}}^{*}_2 &= \frac{1}{2} \mb{I}_r, \label{eq:SrcEnh7}\\
\mb{K}_{\hat {\mb{X}}} &\succ \mb{K}_{\hat {\mb{Y}}} \succcurlyeq   \mb{K}_{\tilde {\mb{Y}}} \succ \mb{0}, \label{eq:SrcEnh8}\\
\mb{K}_{\hat {\mb{X}}} &\succcurlyeq   \mb{K}_{\tilde {\mb{X}}} \succ \mb{0},\label{eq:SrcEnh9}\\
\frac{|\mb{K}_{\tilde{\mb{Y}}}|}{|\mb{K}_{\tilde{\mb{Y}}}-\tilde{\mb{B}}^{*}_2|} &= \frac{|\mb{K}_{\hat{\mb{Y}}}|}{|\mb{K}_{\hat{\mb{Y}}}-\tilde{\mb{B}}^{*}_2|}, \hspace{0.1in} \textrm{and} \label{eq:SrcEnh10}\\
\frac{|\mb{K}_{\tilde{\mb{X}}}-\tilde{\mb{B}}^{*}_2|}{|\mb{K}_{\tilde{\mb{X}}}-\tilde{\mb{B}}^{*}_1-\tilde{\mb{B}}^{*}_2|} &= \frac{|\mb{K}_{\hat{\mb{X}}}-\tilde{\mb{B}}^{*}_2|}{|\mb{K}_{\hat{\mb{X}}}-\tilde{\mb{B}}^{*}_1-\tilde{\mb{B}}^{*}_2|}. \label{eq:SrcEnh11}
\end{align}
\end{Lem}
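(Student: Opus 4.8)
The plan is to derive all six relations directly from Lemma~\ref{lem:SrcEnh1} together with the defining equations (\ref{eq:SrcEnh4})--(\ref{eq:SrcEnh5}), in the spirit of \cite[Lemmas 11, 12]{Wein}. First I would obtain (\ref{eq:SrcEnh6}) and (\ref{eq:SrcEnh7}): substituting the left-hand side of (\ref{eq:SrcEnh1}) into (\ref{eq:SrcEnh4}) gives $\frac{\mu}{2}(\mb{K}_{\hat{\mb{X}}}-\tilde{\mb{B}}^{*}_2)^{-1}=\mb{I}_r$, hence $\mb{K}_{\hat{\mb{X}}}-\tilde{\mb{B}}^{*}_2=\frac{\mu}{2}\mb{I}_r$, and likewise (\ref{eq:SrcEnh1}) and (\ref{eq:SrcEnh5}) give $\mb{K}_{\hat{\mb{Y}}}-\tilde{\mb{B}}^{*}_2=\frac{1}{2}\mb{I}_r$. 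In particular $\mb{K}_{\hat{\mb{X}}}$ and $\mb{K}_{\hat{\mb{Y}}}$ are genuine symmetric matrices, and subtracting the two identities yields $\mb{K}_{\hat{\mb{X}}}-\mb{K}_{\hat{\mb{Y}}}=\frac{\mu-1}{2}\mb{I}_r\succ\mb{0}$ because $\mu>1$, which is the first inequality of (\ref{eq:SrcEnh8}).

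For the remaining inequalities in (\ref{eq:SrcEnh8}) and (\ref{eq:SrcEnh9}), I would use that the multiplier matrices $\tilde{\mb{M}}^{*}_i$ are positive semidefinite. Rewriting (\ref{eq:SrcEnh5}) as $(\mb{K}_{\hat{\mb{Y}}}-\tilde{\mb{B}}^{*}_2)^{-1}=(\mb{K}_{\tilde{\mb{Y}}}-\tilde{\mb{B}}^{*}_2)^{-1}-2\tilde{\mb{M}}^{*}_2\preccurlyeq(\mb{K}_{\tilde{\mb{Y}}}-\tilde{\mb{B}}^{*}_2)^{-1}$, where $\mb{K}_{\tilde{\mb{Y}}}-\tilde{\mb{B}}^{*}_2=\mb{S}^T(\mb{K_Y}-\mb{B}^{*}_2)\mb{S}\succ\mb{0}$ because $\mb{K_Y}-\mb{B}^{*}_2$ is the positive definite matrix $\mb{K}_{\mb{Y}|\mb{V}}$ at the optimum (the constraint $\mb{K}_{\mb{Y}|\mb{V}}\succcurlyeq\mb{0}$ being inactive) and $\mb{S}$ has full column rank, while $\mb{K}_{\hat{\mb{Y}}}-\tilde{\mb{B}}^{*}_2=\frac{1}{2}\mb{I}_r\succ\mb{0}$ by (\ref{eq:SrcEnh7}); inverting this inequality (which reverses it, both sides being positive definite) gives $\mb{K}_{\hat{\mb{Y}}}\succcurlyeq\mb{K}_{\tilde{\mb{Y}}}$, and since $\mb{K}_{\tilde{\mb{Y}}}\succ\mb{0}$ this completes (\ref{eq:SrcEnh8}). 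The same argument applied to (\ref{eq:SrcEnh4}), with $\tilde{\mb{M}}^{*}_1$ and $\mb{K}_{\tilde{\mb{X}}},\mb{K}_{\hat{\mb{X}}}$ in place of $\tilde{\mb{M}}^{*}_2$ and $\mb{K}_{\tilde{\mb{Y}}},\mb{K}_{\hat{\mb{Y}}}$, and using $\mb{K}_{\tilde{\mb{X}}}-\tilde{\mb{B}}^{*}_2=\mb{S}^T(\mb{K_X}-\mb{B}^{*}_2)\mb{S}\succ\mb{0}$ (since $\mb{K_X}-\mb{B}^{*}_2=\mb{K}_{\mb{X}|\mb{U,V}}+\mb{B}^{*}_1\succ\mb{0}$), yields (\ref{eq:SrcEnh9}).

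Finally, for the determinant identities (\ref{eq:SrcEnh10})--(\ref{eq:SrcEnh11}) I would invoke the complementary slackness (\ref{eq:SrcEnh2}); since $\tilde{\mb{B}}^{*}_i$ and $\tilde{\mb{M}}^{*}_i$ are symmetric, it also gives $\tilde{\mb{M}}^{*}_i\tilde{\mb{B}}^{*}_i=\mb{0}$. Writing $\mb{A}\triangleq\mb{K}_{\tilde{\mb{Y}}}-\tilde{\mb{B}}^{*}_2$ and $\hat{\mb{A}}\triangleq\mb{K}_{\hat{\mb{Y}}}-\tilde{\mb{B}}^{*}_2=\frac{1}{2}\mb{I}_r$, equation (\ref{eq:SrcEnh5}) reads $\mb{A}^{-1}=\hat{\mb{A}}^{-1}+2\tilde{\mb{M}}^{*}_2$, so $\mb{A}^{-1}\tilde{\mb{B}}^{*}_2=\hat{\mb{A}}^{-1}\tilde{\mb{B}}^{*}_2$ and hence
\[
\frac{|\mb{K}_{\tilde{\mb{Y}}}|}{|\mb{K}_{\tilde{\mb{Y}}}-\tilde{\mb{B}}^{*}_2|}=\bigl|\mb{I}_r+\mb{A}^{-1}\tilde{\mb{B}}^{*}_2\bigr|=\bigl|\mb{I}_r+\hat{\mb{A}}^{-1}\tilde{\mb{B}}^{*}_2\bigr|=\frac{|\mb{K}_{\hat{\mb{Y}}}|}{|\mb{K}_{\hat{\mb{Y}}}-\tilde{\mb{B}}^{*}_2|},
\]
which is (\ref{eq:SrcEnh10}). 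For (\ref{eq:SrcEnh11}) I set $\mb{A}\triangleq\mb{K}_{\tilde{\mb{X}}}-\tilde{\mb{B}}^{*}_2$, $\hat{\mb{A}}\triangleq\mb{K}_{\hat{\mb{X}}}-\tilde{\mb{B}}^{*}_2=\frac{\mu}{2}\mb{I}_r$; equation (\ref{eq:SrcEnh4}) gives $\mb{A}^{-1}=\hat{\mb{A}}^{-1}+\frac{2}{\mu}\tilde{\mb{M}}^{*}_1$, hence $\mb{A}^{-1}\tilde{\mb{B}}^{*}_1=\hat{\mb{A}}^{-1}\tilde{\mb{B}}^{*}_1$, and, observing that $\mb{A}-\tilde{\mb{B}}^{*}_1=\tilde{\mb{D}}\succ\mb{0}$ by (\ref{eq:SrcEnh3}) and $\hat{\mb{A}}-\tilde{\mb{B}}^{*}_1\succcurlyeq\mb{A}-\tilde{\mb{B}}^{*}_1\succ\mb{0}$ by (\ref{eq:SrcEnh9}) so neither denominator vanishes,
\[
\frac{|\mb{K}_{\tilde{\mb{X}}}-\tilde{\mb{B}}^{*}_2|}{|\mb{K}_{\tilde{\mb{X}}}-\tilde{\mb{B}}^{*}_1-\tilde{\mb{B}}^{*}_2|}=\bigl|\mb{I}_r-\mb{A}^{-1}\tilde{\mb{B}}^{*}_1\bigr|^{-1}=\bigl|\mb{I}_r-\hat{\mb{A}}^{-1}\tilde{\mb{B}}^{*}_1\bigr|^{-1}=\frac{|\mb{K}_{\hat{\mb{X}}}-\tilde{\mb{B}}^{*}_2|}{|\mb{K}_{\hat{\mb{X}}}-\tilde{\mb{B}}^{*}_1-\tilde{\mb{B}}^{*}_2|}.
\]
The computations here are short; the only points needing care are the bookkeeping of positive-definiteness (which rests on the inequality constraints of $\left(P_{G2}\right)$ being strictly satisfied at the optimum) and the single genuine idea, namely that complementary slackness makes the correction term $\tilde{\mb{M}}^{*}_i$ disappear after right-multiplication by $\tilde{\mb{B}}^{*}_i$, so the relevant determinant ratios are invariant under the enhancement. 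I do not expect any substantive obstacle beyond this.
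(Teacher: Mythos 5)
Your proposal is correct and follows essentially the same path as the paper's Appendix E: (\ref{eq:SrcEnh6}) and (\ref{eq:SrcEnh7}) by direct substitution, (\ref{eq:SrcEnh8})--(\ref{eq:SrcEnh9}) from $\mu>1$ and positive semidefiniteness of $\tilde{\mb{M}}^{*}_i$, and the determinant identities by using complementary slackness (\ref{eq:SrcEnh2}) to eliminate the $\tilde{\mb{M}}^{*}_i$ correction inside $\left|\mb{I}_r \pm (\cdot)\right|$. The only cosmetic difference is that you multiply $\tilde{\mb{B}}^{*}_i$ on the left (using $\tilde{\mb{M}}^{*}_i\tilde{\mb{B}}^{*}_i=\mb{0}$, which you obtain by transposing (\ref{eq:SrcEnh2})) while the paper places it on the right, but $\det(\mb{I}+\mb{AB})=\det(\mb{I}+\mb{BA})$ makes these equivalent.
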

\begin{proof}
See Appendix E.
\end{proof}

Let $\hat{\mb{X}}$ and $\hat{\mb{Y}}$ be two zero-mean Gaussian random vectors with covariance matrices $\mb{K}_{\hat{\mb{X}}}$ and $\mb{K}_{\hat{\mb{Y}}}$, respectively. Since $\mb{K}_{\hat{\mb{X}}} \succ \mb{K}_{\hat{\mb{Y}}}$ from (\ref{eq:SrcEnh8}), we can write
\[
\hat{\mb{X}} = \hat{\mb{Y}}+\hat{\mb{N}},
\]
where $\hat{\mb{N}}$ is a zero-mean Gaussian random vector with the covariance matrix $$\mb{K}_{\hat{\mb{N}}}=\mb{K}_{\hat{\mb{X}}}-\mb{K}_{\hat{\mb{Y}}}=\frac{\mu -1}{2} \mb{I}_r$$ and is independent of $\hat{\mb{Y}}$. Similarly, we can use (\ref{eq:SrcEnh8}) and (\ref{eq:SrcEnh9}) to relate $\hat{\mb{X}}$ and $\hat{\mb{Y}}$ with $\tilde{\mb{X}}$ and $\tilde{\mb{Y}}$, respectively, and write
\begin{align*}
\hat{\mb{X}} &= \tilde{\mb{X}}+\mb{N}_1 \hspace{0.1in} \textrm{and} \\
\hat{\mb{Y}} &= \tilde{\mb{Y}}+\mb{N}_2,
\end{align*}
where ${\mb{N}}_1$ and ${\mb{N}}_2$ are two zero-mean Gaussian random vectors with covariance matrices
\begin{align*}
\mb{K}_{\mb{N}_1}&=\mb{K}_{\hat{\mb{X}}}-\mb{K}_{\tilde{\mb{X}}} \hspace{0.1in} \textrm{and}\\
\mb{K}_{\mb{N}_2}&=\mb{K}_{\hat{\mb{Y}}}-\mb{K}_{\tilde{\mb{Y}}},
\end{align*}
respectively, and they are independent of $\tilde{\mb{X}}$ and $\tilde{\mb{Y}}$. Using (\ref{eq:SrcEnh3}), we define
\begin{align}
\hat{ \mb{D}} &\triangleq \tilde{ \mb{D}}+\mb{K}_{{\mb{N}}_1}=\mb{K}_{\hat{\mb{X}}}-\tilde{\mb{B}}^{*}_1-\tilde{\mb{B}}^{*}_2. \label{eq:SrcEnh12}
\end{align}
The \emph{enhanced optimization problem} $(\hat P)$ is now defined as
\begin{align*}
(\hat P) \hspace {0.2in}\triangleq \hspace {0.5in}\min_{\mb{U,V}} \hspace {0.1in} & \mu I(\hat {\mb{X}};\mb{U}|\mb{V})+I(\hat {\mb{Y}};\mb{V}) \nonumber\\
\textrm{subject to} \hspace{0.1in} &\mb{K}_{\hat{\mb{X}}|\mb{U,V}} \preccurlyeq \hat{ \mb{D}}\hspace{0.1in} \textrm{and} \nonumber\\
&\hat{\mb{X}}\leftrightarrow \hat{\mb{Y}}\leftrightarrow \mb{V}.
\end{align*}
We next show that $(\hat P)$ lower bounds $(\tilde P)$. Consider any $(\mb{U,V})$ feasible for $(\tilde P)$. Without loss of optimality, we can assume that the joint distribution between $\tilde {\mb{X}}, \tilde {\mb{Y}}, \mb{U},$ and $\mb{V}$ is
\[
\tilde p \triangleq p_{\tilde {\mb{X}}, \tilde {\mb{Y}}} p_{\mb{U} | \tilde {\mb{X}}, \mb{V}} p_{\mb{V} | \tilde {\mb{Y}}}.
\]
Now, $\tilde p$ induces two conditional distributions as follows
\begin{align*}
p_{\mb{V} | \hat {\mb{Y}}} &= \int_{\tilde {\mb{Y}}} p_{\mb{V} | \tilde {\mb{Y}}} p_{\tilde {\mb{Y}}|\hat {\mb{Y}}} \\
p_{\mb{U} | \hat {\mb{X}}, \mb{V}} &= \int_{\tilde {\mb{X}}} p_{\mb{U} | \tilde {\mb{X}}, \mb{V}} p_{\tilde {\mb{X}}|\hat {\mb{X}}, \mb{V}},
\end{align*}
where
\begin{align*}
p_{\tilde {\mb{X}}|\hat {\mb{X}}, \mb{V}} = \frac{p_{\tilde {\mb{X}}, \hat {\mb{X}}} p_{{\mb{V}}|\tilde {\mb{X}}}}{\int_{\tilde {\mb{X}}} p_{\tilde {\mb{X}}, \hat {\mb{X}}} p_{{\mb{V}}|\tilde {\mb{X}}}}.
\end{align*}
Then
\[
\hat p \triangleq p_{\hat {\mb{X}}, \hat {\mb{Y}}} p_{\mb{U} | \hat {\mb{X}}, \mb{V}} p_{\mb{V} | \hat {\mb{Y}}}
\]
is a joint distribution between $\hat {\mb{X}}, \hat {\mb{Y}}, \mb{U},$ and $\mb{V}$. It is clear that $\hat p$ satisfies the Markov condition
\begin{align}
\hat{\mb{X}} \leftrightarrow \hat{\mb{Y}} \leftrightarrow \mb{V}. \label{eq:SrcEnh13}
\end{align}
Moreover, (\ref{eq:SrcEnh12}) and the distortion constraint in the definition of $(\tilde P)$ yield
\begin{align}
\mb{K}_{\hat{\mb{X}}|\mb{U,V}}&=\mb{K}_{\tilde{\mb{X}}|\mb{U,V}}+\mb{K}_{{\mb{N}}_1} \preccurlyeq \tilde{\mb{D}}+\mb{K}_{{\mb{N}}_1}= \hat{\mb{D}}. \label{eq:SrcEnh14}
\end{align}
We next use the chain rule of mutual information to obtain
\begin{align*}
I(\tilde{\mb{X}},\hat{\mb{X}};\mb{U}|\mb{V})&=I(\hat{\mb{X}};\mb{U}|\mb{V})+I(\tilde{\mb{X}};\mb{U}|\mb{V},\hat{\mb{X}})\\
&=I(\tilde{\mb{X}};\mb{U}|\mb{V})+I(\hat{\mb{X}};\mb{U}|\mb{V},\tilde{\mb{X}})\\
&=I(\tilde{\mb{X}};\mb{U}|\mb{V})
\end{align*}
and
\begin{align*}
I(\tilde{\mb{Y}},\hat{\mb{Y}};\mb{V})&=I(\hat{\mb{Y}};\mb{V})+I(\tilde{\mb{Y}};\mb{V}|\hat{\mb{Y}})\\
&=I(\tilde{\mb{Y}};\mb{V})+I(\hat{\mb{Y}};\mb{V}|\tilde{\mb{Y}})\\
&=I(\tilde{\mb{Y}};\mb{V}).
\end{align*}
Since mutual information is nonnegative, these imply that
\begin{align}
I(\tilde{\mb{X}};\mb{U}|\mb{V}) \ge I(\hat{\mb{X}};\mb{U}|\mb{V}) \label{eq:SrcEnh15}
\end{align}
and
\begin{align}
I(\tilde{\mb{Y}};\mb{V}) \ge I(\hat{\mb{Y}};\mb{V}) \label{eq:SrcEnh16}
\end{align}
Now (\ref{eq:SrcEnh13}) and (\ref{eq:SrcEnh14}) together imply that the distribution $\hat p$, and hence $(\mb{U,V})$, is feasible for $(\hat P)$. Therefore, the feasible set of $(\tilde P)$ is contained in that of $(\hat P)$. Moreover, (\ref{eq:SrcEnh15}) and (\ref{eq:SrcEnh16}) assert that the objective value of $(\hat P)$ is no more than that of $(\tilde P)$. We therefore conclude that the \emph{enhanced optimization problem}  $(\hat P)$ lower bounds the \emph{projected optimization problem}  $(\tilde P)$, i.e.,
\begin{align}
v(\tilde P) \ge v(\hat P). \label{eq:SrcEnh17}
\end{align}

\emph{Remark 2:} If $r<m=r+p$, then there is no need to enhance the source $\tilde{\mb{X}}$ and the distortion $\tilde{\mb{D}}$ because $\mb{M}_1^{*}=\mb{TT}^T$ from Corollary \ref{cor:CorrThm2}, and hence $\tilde{\mb{M}}_1^{*}=\mb{0}$. Similarly, if $r<m=r+q$, then there is no need to enhance the source $\tilde{\mb{Y}}$ because $\mb{M}_2^{*}=\mb{WW}^T$ from Corollary \ref{cor:CorrThm2} again, and hence $\tilde{\mb{M}}_2^{*}=\mb{0}$. Finally, if $r<m=r+p=r+q$, then there is no need for \emph{source enhancement}.

\subsection{Oohama's Approach}
We now apply Oohama's approach \cite{Oohama} to prove that $(\tilde {\mb{B}}^{*}_1,\tilde {\mb{B}}^{*}_2)$ is optimal for $(\hat P).$ The objective of $(\hat P)$ can be decomposed as
\begin{align}
\mu I \bigr (\hat {\mb{X}};\mb{U}|\mb{V}\bigr )+I \bigr (\hat {\mb{Y}};\mb{V}\bigr ) = \mu I \bigr (\hat{\mb{X}};\mb{U,V}\bigr ) -\bigr [\mu I \bigr (\hat{\mb{X}};\mb{V}\bigr ) - I \bigr (\hat{\mb{Y}};\mb{V}\bigr ) \bigr]. \label{eq:Oohama1}
\end{align}
We next define two subproblems that are used to lower bound the \emph{enhanced optimization problem} $(\hat P)$. The first subproblem $(\hat P_1)$ minimizes the first mutual information in the right-hand-side of (\ref{eq:Oohama1}) subject to the distortion constraint in $(\hat P)$ and the second subproblem $(\hat P_2)$ maximizes the expression within the parenthesis in the right-hand-side of (\ref{eq:Oohama1}) subject to the Markov condition in $(\hat P)$. In other words, $(\hat P_1)$ is defined as
\begin{align*}
(\hat P_1) \hspace {0.2in}\triangleq \hspace {0.5in}\min_{\mb{U,V}} \hspace {0.1in} & \mu I \bigr (\hat{\mb{X}};\mb{U,V}\bigr ) \\
\textrm{subject to} \hspace{0.1in} &\mb{K}_{\hat{\mb{X}}|\mb{U,V}} \preccurlyeq \hat{ \mb{D}},
\end{align*}
and $(\hat P_2)$ is defined as
\begin{align*}
(\hat P_2) \hspace {0.2in}\triangleq \hspace {0.5in}\max_{\mb{V}} \hspace {0.1in} &\mu I \bigr (\hat{\mb{X}};\mb{V}\bigr ) - I \bigr (\hat{\mb{Y}};\mb{V}\bigr ) \\
\textrm{subject to} \hspace{0.1in} &\hat{\mb{X}}\leftrightarrow \hat{\mb{Y}}\leftrightarrow \mb{V}.
\end{align*}
It is clear from the decomposition in (\ref{eq:Oohama1}) and from the definitions of $(\hat P), (\hat P_1),$ and $(\hat P_2)$ that $(\hat P_1)$ and $(\hat P_2)$ lower bound $(\hat P)$, i.e.,
\begin{align}
v (\hat P) &\ge v(\hat P_1 ) - v(\hat P_2). \label{eq:Oohama2}
\end{align}
We now give two lemmas about the optimal solutions to subproblems $(\hat P_1)$ and $(\hat P_2)$.
\begin{Lem} \label{lem:Oohama1}
A Gaussian $(\mb{U,V})$ with the conditional covariance matrix $$\mb{K}_{\hat{\mb{X}}|\mb{U,V}}=\mb{K}_{\hat{\mb{X}}}-\tilde{\mb{B}}^{*}_1-\tilde{\mb{B}}^{*}_2=\hat{\mb{D}}$$ is optimal for the subproblem $(\hat P_1)$, and the optimal value is
\begin{align}
v(\hat P_1) &= \frac{\mu}{2} \log \frac{\bigr | \mb{K}_{\hat{\mb{X}}} \bigr |}{\bigr |\hat {\mb{D}}\bigr|}. \label{eq:Oohama3}
\end{align}
\end{Lem}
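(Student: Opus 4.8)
The plan is to recognize $(\hat P_1)$ as nothing more than a point-to-point Gaussian rate--distortion lower bound: since $(\hat P_1)$ imposes no Markov constraint, the pair $(\mb{U},\mb{V})$ is an arbitrary joint description of $\hat{\mb{X}}$, and the only active constraint is $\mb{K}_{\hat{\mb{X}}|\mb{U,V}} \preccurlyeq \hat{\mb{D}}$. First I would prove the lower bound $v(\hat P_1) \ge \frac{\mu}{2}\log\bigl(|\mb{K}_{\hat{\mb{X}}}|/|\hat{\mb{D}}|\bigr)$. For any feasible $(\mb{U},\mb{V})$, write $I(\hat{\mb{X}};\mb{U,V}) = h(\hat{\mb{X}}) - h(\hat{\mb{X}}|\mb{U,V})$; the first term equals $\frac{1}{2}\log\bigl((2\pi e)^r|\mb{K}_{\hat{\mb{X}}}|\bigr)$ because $\hat{\mb{X}}$ is Gaussian with $\mb{K}_{\hat{\mb{X}}}\succ\mb{0}$ by (\ref{eq:SrcEnh9}). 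For the second term I would use that subtracting a function of the conditioning variables does not change conditional entropy, that conditioning does not increase entropy, and that the Gaussian maximizes differential entropy for a fixed covariance: $h(\hat{\mb{X}}|\mb{U,V}) = h\bigl(\hat{\mb{X}} - E[\hat{\mb{X}}|\mb{U,V}] \mid \mb{U,V}\bigr) \le h\bigl(\hat{\mb{X}} - E[\hat{\mb{X}}|\mb{U,V}]\bigr) \le \frac{1}{2}\log\bigl((2\pi e)^r |\mb{K}_{\hat{\mb{X}}|\mb{U,V}}|\bigr)$, the covariance appearing being precisely $\mb{K}_{\hat{\mb{X}}|\mb{U,V}}$ as defined in Section \ref{sec:Notation}. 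The distortion constraint $\mb{K}_{\hat{\mb{X}}|\mb{U,V}} \preccurlyeq \hat{\mb{D}}$ together with monotonicity of the determinant on the positive semidefinite cone (and $\hat{\mb{D}}\succ\mb{0}$, which follows from (\ref{eq:SrcEnh3}) and (\ref{eq:SrcEnh12})) yields $|\mb{K}_{\hat{\mb{X}}|\mb{U,V}}| \le |\hat{\mb{D}}|$. Combining these and multiplying by $\mu\ge 0$ gives the claimed lower bound.

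Next I would exhibit a Gaussian $(\mb{U},\mb{V})$ that attains this bound and has $\mb{K}_{\hat{\mb{X}}|\mb{U,V}} = \hat{\mb{D}}$, which simultaneously proves (\ref{eq:Oohama3}) and establishes optimality of the stated solution. Take $\mb{V}$ to be a deterministic constant, and let $\mb{U}$ be jointly Gaussian with $\hat{\mb{X}}$ through the backward test channel $\hat{\mb{X}} = \mb{U} + \mb{W}$, where $\mb{W}$ is a zero-mean Gaussian random vector with covariance $\hat{\mb{D}}$, independent of $\mb{U}$, and $\mb{U}$ is a zero-mean Gaussian random vector with covariance $\mb{K}_{\hat{\mb{X}}} - \hat{\mb{D}} \succcurlyeq \mb{0}$; this is a valid, though possibly degenerate, Gaussian since $\hat{\mb{D}} = \mb{K}_{\hat{\mb{X}}} - \tilde{\mb{B}}_1^{*} - \tilde{\mb{B}}_2^{*} \preccurlyeq \mb{K}_{\hat{\mb{X}}}$. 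Then $E[\hat{\mb{X}}|\mb{U}] = \mb{U}$, so $\mb{K}_{\hat{\mb{X}}|\mb{U,V}} = \mb{K}_{\hat{\mb{X}}|\mb{U}} = \mb{K}_{\mb{W}} = \hat{\mb{D}}$, and, using the expansion $I(\hat{\mb{X}};\mb{U}) = h(\hat{\mb{X}}) - h(\hat{\mb{X}}|\mb{U})$ with $h(\hat{\mb{X}}|\mb{U}) = h(\mb{W}|\mb{U}) = h(\mb{W})$, we obtain $I(\hat{\mb{X}};\mb{U,V}) = I(\hat{\mb{X}};\mb{U}) = \frac{1}{2}\log\bigl(|\mb{K}_{\hat{\mb{X}}}|/|\hat{\mb{D}}|\bigr)$. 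Multiplying by $\mu$ matches the lower bound, which completes the proof.

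I do not anticipate a real obstacle here, as $(\hat P_1)$ is essentially the classical Gaussian point-to-point converse. The only subtleties worth spelling out carefully are that $\mb{K}_{\hat{\mb{X}}|\mb{U,V}}$ denotes an \emph{average} error covariance, so the entropy bound must be routed through $E[\hat{\mb{X}}|\mb{U,V}]$ rather than a per-realization conditional covariance; and that the achievability construction must remain valid when $\mb{K}_{\hat{\mb{X}}} - \hat{\mb{D}}$ is singular, which it does, because a degenerate Gaussian $\mb{U}$ still yields a well-defined finite $I(\hat{\mb{X}};\mb{U})$ with the correct value once it is written as $h(\hat{\mb{X}}) - h(\hat{\mb{X}}|\mb{U})$. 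This is also where positive definiteness of $\mb{K}_{\hat{\mb{X}}}$ and $\hat{\mb{D}}$ (from Lemma \ref{lem:SrcEnh2} and (\ref{eq:SrcEnh12})) is genuinely used, ensuring the logarithms are finite and the determinant comparison is meaningful.
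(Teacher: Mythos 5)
Your proposal is correct and takes essentially the same approach as the paper's Appendix~F: a converse via the Gaussian entropy-maximization bound combined with the distortion constraint, followed by an achieving Gaussian test channel. You are somewhat more explicit than the paper in two places — you route the conditional entropy bound through $\hat{\mb{X}} - E[\hat{\mb{X}}|\mb{U,V}]$ and then drop conditioning, which cleanly produces $\mb{K}_{\hat{\mb{X}}|\mb{U,V}}$ without appealing to Jensen's inequality on $\log|\cdot|$, and you explicitly write out the backward test channel $\hat{\mb{X}} = \mb{U} + \mb{W}$ rather than merely asserting that a Gaussian with the required conditional covariance exists — but the argument is the same.
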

\begin{proof}
See Appendix F.
\end{proof}
\begin{Lem} \label{lem:Oohama2}
A Gaussian $\mb{V}$ with the conditional covariance matrix $$\mb{K}_{\hat{\mb{Y}}|\mb{V}}=\mb{K}_{\hat{\mb{Y}}}-\tilde{\mb{B}}^{*}_2$$ is optimal for the subproblem $(\hat P_2)$, and the optimal value is
\begin{align}
v(\hat P_2) &= \frac{\mu}{2} \log \frac {\left | \mb{K}_{\hat{\mb{X}}} \right |}{\bigr | \mb{K}_{\hat{\mb{X}}}-\tilde{\mb{B}}^{*}_2 \bigr |}-\frac{1}{2} \log \frac {\left | \mb{K}_{\hat{\mb{Y}}} \right |}{\bigr | \mb{K}_{\hat{\mb{Y}}}-\tilde{\mb{B}}^{*}_2 \bigr |}. \label{eq:Oohama4}
\end{align}
\end{Lem}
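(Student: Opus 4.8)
The plan is to carry out Oohama's entropy-power-inequality argument for the helper problem~\cite{Oohama}, which is especially transparent here because \emph{source enhancement} has been arranged so that the noise relating $\hat{\mb{X}}$ to $\hat{\mb{Y}}$ is a scalar multiple of the identity, namely $\hat{\mb{X}}=\hat{\mb{Y}}+\hat{\mb{N}}$ with $\mb{K}_{\hat{\mb{N}}}=\frac{\mu-1}{2}\mb{I}_r$. First I would rewrite the objective of $(\hat P_2)$ as
\[
\mu I(\hat{\mb{X}};\mb{V})-I(\hat{\mb{Y}};\mb{V})=\bigl[\mu h(\hat{\mb{X}})-h(\hat{\mb{Y}})\bigr]-\bigl[\mu h(\hat{\mb{X}}|\mb{V})-h(\hat{\mb{Y}}|\mb{V})\bigr],
\]
where the first bracket is the constant $\frac{\mu}{2}\log((2\pi e)^r|\mb{K}_{\hat{\mb{X}}}|)-\frac{1}{2}\log((2\pi e)^r|\mb{K}_{\hat{\mb{Y}}}|)$, so that $(\hat P_2)$ reduces to lower bounding $\mu h(\hat{\mb{X}}|\mb{V})-h(\hat{\mb{Y}}|\mb{V})$ over all $\mb{V}$ with $\hat{\mb{X}}\leftrightarrow\hat{\mb{Y}}\leftrightarrow\mb{V}$.

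For that lower bound I would first observe that the Markov chain together with $\hat{\mb{N}}\perp\hat{\mb{Y}}$ gives $p(\hat{\mb{n}},\hat{\mb{y}},\mb{v})=p(\hat{\mb{y}},\mb{v})\,p(\hat{\mb{n}}|\hat{\mb{y}})=p(\hat{\mb{y}},\mb{v})\,p(\hat{\mb{n}})$, so that $\hat{\mb{N}}$ is independent of $(\hat{\mb{Y}},\mb{V})$. The conditional vector EPI then yields
\[
2^{2h(\hat{\mb{X}}|\mb{V})/r}\ \ge\ 2^{2h(\hat{\mb{Y}}|\mb{V})/r}+2^{2h(\hat{\mb{N}})/r}=2^{2h(\hat{\mb{Y}}|\mb{V})/r}+\pi e(\mu-1).
\]
Setting $u\triangleq 2^{2h(\hat{\mb{Y}}|\mb{V})/r}>0$, this shows $\mu h(\hat{\mb{X}}|\mb{V})-h(\hat{\mb{Y}}|\mb{V})\ge g(u)$ with $g(u)\triangleq\frac{\mu r}{2}\log(u+\pi e(\mu-1))-\frac{r}{2}\log u$, and I would then minimize $g$ over $u>0$ by elementary calculus: using $\mu>1$ one finds $g'<0$ on $(0,\pi e)$ and $g'>0$ on $(\pi e,\infty)$, so $g(u)\ge g(\pi e)=\frac{\mu r}{2}\log(\pi e\mu)-\frac{r}{2}\log(\pi e)$. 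By (\ref{eq:SrcEnh6}) and (\ref{eq:SrcEnh7}), $\mb{K}_{\hat{\mb{X}}}-\tilde{\mb{B}}^{*}_2=\frac{\mu}{2}\mb{I}_r$ and $\mb{K}_{\hat{\mb{Y}}}-\tilde{\mb{B}}^{*}_2=\frac{1}{2}\mb{I}_r$, so $g(\pi e)=\frac{\mu}{2}\log((2\pi e)^r|\mb{K}_{\hat{\mb{X}}}-\tilde{\mb{B}}^{*}_2|)-\frac{1}{2}\log((2\pi e)^r|\mb{K}_{\hat{\mb{Y}}}-\tilde{\mb{B}}^{*}_2|)$; substituting into the decomposition above gives the ``$\le$'' direction of (\ref{eq:Oohama4}).

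For the matching bound I would exhibit the Gaussian test channel $\mb{V}=\hat{\mb{Y}}+\mb{Z}$ with $\mb{Z}$ a zero-mean Gaussian independent of $\hat{\mb{Y}}$ chosen so that $\mb{K}_{\hat{\mb{Y}}|\mb{V}}=\mb{K}_{\hat{\mb{Y}}}-\tilde{\mb{B}}^{*}_2=\frac{1}{2}\mb{I}_r$, which is feasible because $\mb{0}\prec\frac{1}{2}\mb{I}_r\preccurlyeq\mb{K}_{\hat{\mb{Y}}}$ (as $\tilde{\mb{B}}^{*}_2\succcurlyeq\mb{0}$) and automatically makes $\hat{\mb{X}}\leftrightarrow\hat{\mb{Y}}\leftrightarrow\mb{V}$ hold; then $(\hat{\mb{X}},\hat{\mb{Y}},\mb{V})$ is jointly Gaussian with $\mb{K}_{\hat{\mb{X}}|\mb{V}}=\mb{K}_{\hat{\mb{Y}}|\mb{V}}+\mb{K}_{\hat{\mb{N}}}=\frac{\mu}{2}\mb{I}_r=\mb{K}_{\hat{\mb{X}}}-\tilde{\mb{B}}^{*}_2$, and evaluating the two mutual informations in closed form reproduces exactly the right-hand side of (\ref{eq:Oohama4}), proving both optimality of this $\mb{V}$ and the formula. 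I expect the vector EPI to be the only substantive step: \emph{source enhancement} was set up precisely so that $\mb{K}_{\hat{\mb{N}}}\propto\mb{I}_r$, which is what makes the EPI bound tight and collapses the matrix problem to the one-variable minimization of $g$; the main care required will be the conditional-independence bookkeeping needed to invoke the conditional EPI and the check that its minimizer $u=\pi e$ matches the enhanced covariances in (\ref{eq:SrcEnh6})--(\ref{eq:SrcEnh7}).
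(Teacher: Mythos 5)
Your proof is correct and follows essentially the same route as the paper's Appendix G: both reduce $(\hat P_2)$ to bounding $\mu h(\hat{\mb{X}}|\mb{V})-h(\hat{\mb{Y}}|\mb{V})$ via the conditional vector EPI, exploit the fact that \emph{source enhancement} has made $\mb{K}_{\hat{\mb{N}}}$ proportional to $\mb{I}_r$ so the EPI is tight, and then optimize the resulting scalar function (your substitution $u=2^{2h(\hat{\mb{Y}}|\mb{V})/r}$ is just a reparametrization of the paper's concavity-in-$h(\hat{\mb{Y}}|\mb{V})$ argument). Your explicit check that the Markov chain $\hat{\mb{X}}\leftrightarrow\hat{\mb{Y}}\leftrightarrow\mb{V}$ together with $\hat{\mb{N}}\perp\hat{\mb{Y}}$ makes $\hat{\mb{N}}$ independent of $(\hat{\mb{Y}},\mb{V})$ is a cleaner justification of the conditional-EPI hypothesis than the paper's terse statement.
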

\begin{proof}
See Appendix G.
\end{proof}
Substituting (\ref{eq:Oohama3}) and (\ref{eq:Oohama4}) into (\ref{eq:Oohama2}), we obtain
\begin{align}
v(\hat P) &\ge \frac{\mu}{2} \log \frac{\bigr | \mb{K}_{\hat{\mb{X}}}-\tilde{\mb{B}}^{*}_2 \bigr |}{\bigr |\hat {\mb{D}}\bigr|}+\frac{1}{2} \log \frac {\bigr | \mb{K}_{\hat{\mb{Y}}} \bigr |}{\bigr | \mb{K}_{\hat{\mb{Y}}}-\tilde{\mb{B}}^{*}_2 \bigr |} \nonumber\\
&=\frac{\mu}{2} \log \frac{\bigr | \mb{K}_{\tilde{\mb{X}}}-\tilde{\mb{B}}^{*}_2 \bigr |}{\bigr |\tilde {\mb{D}}\bigr|}+\frac{1}{2} \log \frac {\bigr | \mb{K}_{\tilde{\mb{Y}}} \bigr |}{\bigr | \mb{K}_{\tilde{\mb{Y}}}-\tilde{\mb{B}}^{*}_2 \bigr |}\label{eq:Oohama5}\\
&= v\bigr (P_{G2} \bigr), \label{eq:Oohama6}
\end{align}
where
\begin{enumerate}
\item[(\ref{eq:Oohama5})] follows from (\ref{eq:SrcEnh3}), (\ref{eq:SrcEnh10}), (\ref{eq:SrcEnh11}), and (\ref{eq:SrcEnh12}), and
\item[(\ref{eq:Oohama6})] follows from (\ref{eq:ConvIngr9}).
\end{enumerate}
We conclude from (\ref{eq:DistProj8}), (\ref{eq:SrcEnh17}), and (\ref{eq:Oohama6}) that
\begin{align*}
v\left (P\right) &\ge v\left (P_{G2} \right).
\end{align*}
It now follows from this and (\ref{eq:ConvIngr4}) that
\begin{align}
v\left (P\right) &= v\left (P_{G1} \right)=v\left (P_{G2} \right), \label{eq:Oohama7}
\end{align}
which proves that a Gaussian $(\mb{U,V})$ is optimal for the \emph{main optimization problem} $\left (P\right)$. This completes the proof of Theorem \ref{thm:MainOptProb}.

\section{Converse Proof of Theorem \ref{thm:MainThm}}\label{sec:ConvProof}
Liu and Viswanath gave a single-letter outer bound to the rate region in \cite{Liu}. We shall use a similar outer bound that is reminiscent of the Berger-Tung outer bound~\cite{Berger,Tung}.
\begin{Lem} \label{lem:SinLetterOuterBound}
If the rate-distortion vector $(R_1, R_2, \mb{D})$ is achievable, then there exist random vectors $\mb{U}$ and $\mb{V}$ such that
\begin{align*}
R_1 &\ge I(\mb{X};\mb{U}|\mb{V}), \\
R_2 &\ge I(\mb{Y};\mb{V}), \\
\mb{D} &\succcurlyeq \mb{K}_{\mb{X}|\mb{U,V}}, \hspace{0.1in} \textrm{and} \\
\mb{X} &\leftrightarrow \mb{Y} \leftrightarrow \mb{V}.
\end{align*}
\end{Lem}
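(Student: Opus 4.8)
The approach is Oohama's single-letterization, as in the Berger--Tung converse. Fix an achievable code with messages $C_1 \triangleq f_1^{(n)}(\mb{X}^n)$ and $C_2 \triangleq f_2^{(n)}(\mb{Y}^n)$; since the distortion is quadratic we may take the decoder to be the MMSE estimator, so $\hat{\mb{X}}^n(i) = E[\mb{X}^n(i) \mid C_1, C_2]$. Let $Q$ be uniform on $\{1, \dots, n\}$ and independent of $(\mb{X}^n, \mb{Y}^n)$, and set $\mb{X} \triangleq \mb{X}^n(Q)$, $\mb{Y} \triangleq \mb{Y}^n(Q)$, which have the generic joint law because the source is i.i.d.\ and $Q$ is independent of it. I would then take
\[
\mb{U} \triangleq C_1, \qquad \mb{V} \triangleq (C_2,\ \mb{X}^n(1:Q-1),\ \mb{Y}^n(1:Q-1),\ Q).
\]
The one design choice that matters is folding the ``past'' blocks $\mb{X}^n(1:Q-1), \mb{Y}^n(1:Q-1)$ into $\mb{V}$: this is exactly what is needed to single-letterize the $R_1$ bound, and one then has to check it does not spoil the $R_2$ bound or the Markov condition.

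I would verify the four conclusions in turn. For the Markov chain, conditioning on $Q = i$ reduces the claim to $\mb{X}^n(i) \leftrightarrow \mb{Y}^n(i) \leftrightarrow (C_2, \mb{X}^n(1:i-1), \mb{Y}^n(1:i-1))$, which holds because $(\mb{X}^n(i), \mb{Y}^n(i))$ is independent of $(\mb{X}^n(1:i-1), \mb{Y}^n(1:i-1), \mb{Y}^n(i+1:n))$ by the i.i.d.\ assumption while $C_2$ is a deterministic function of $\mb{Y}^n$. For $R_2$, standard steps give $nR_2 \ge \log M_2^{(n)} \ge H(C_2) = I(\mb{Y}^n; C_2)$, and one computes $nI(\mb{Y}; \mb{V}) = \sum_{i=1}^n I(\mb{Y}^n(i); C_2 \mid \mb{X}^n(1:i-1), \mb{Y}^n(1:i-1))$; writing each summand as a difference of conditional entropies of $C_2$ and discarding $\mb{X}^n(i)$ from one of the conditionings (which cannot increase entropy) makes the sum telescope to at most $H(C_2)$, hence $I(\mb{Y}; \mb{V}) \le R_2$. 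For $R_1$, $nR_1 \ge \log M_1^{(n)} \ge H(C_1 \mid C_2) \ge I(\mb{X}^n; C_1 \mid C_2)$, and since $C_1$ is a function of $\mb{X}^n$ this equals $I(\mb{X}^n, \mb{Y}^n; C_1 \mid C_2) = \sum_{i=1}^n I(\mb{X}^n(i), \mb{Y}^n(i); C_1 \mid C_2, \mb{X}^n(1:i-1), \mb{Y}^n(1:i-1)) \ge \sum_{i=1}^n I(\mb{X}^n(i); C_1 \mid C_2, \mb{X}^n(1:i-1), \mb{Y}^n(1:i-1)) = nI(\mb{X}; \mb{U} \mid \mb{V})$. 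For the distortion, conditioning on extra data cannot increase an MMSE error covariance, so $\mb{K}_{\mb{X}^n(i) \mid C_1, C_2, \mb{X}^n(1:i-1), \mb{Y}^n(1:i-1)} \preccurlyeq \mb{K}_{\mb{X}^n(i) \mid C_1, C_2} = E[(\mb{X}^n(i) - \hat{\mb{X}}^n(i))(\mb{X}^n(i) - \hat{\mb{X}}^n(i))^T]$; averaging over $Q$ gives $\mb{K}_{\mb{X} \mid \mb{U}, \mb{V}} \preccurlyeq \frac{1}{n} \sum_{i=1}^n E[(\mb{X}^n(i) - \hat{\mb{X}}^n(i))(\mb{X}^n(i) - \hat{\mb{X}}^n(i))^T] \preccurlyeq \mb{D}$ by achievability.

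The only genuine obstacle is getting $\mb{V}$ to satisfy three competing requirements at once: it must not ``see past $\mb{Y}$'' (so that $\mb{X} \leftrightarrow \mb{Y} \leftrightarrow \mb{V}$), it must be small enough that $R_2 \ge I(\mb{Y}; \mb{V})$, and it must be rich enough to absorb the cross terms produced when passing from $I(\mb{X}^n; C_1 \mid C_2)$ to a single-letter mutual information. Putting the past $\mb{X}$-blocks into $\mb{V}$ handles the last requirement, and the one slightly delicate point is confirming the second requirement survives that enlargement --- which is exactly where the telescoping estimate for $R_2$ is needed. Everything else is routine.
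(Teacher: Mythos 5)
Your proof is correct and is precisely the standard Berger--Tung-type single-letterization that the paper itself invokes by reference (the paper omits this proof, citing similarity to the corresponding lemma in the scalar-help-vector paper); the choice $\mb{V} = (C_2, \mb{X}^n(1{:}Q{-}1), \mb{Y}^n(1{:}Q{-}1), Q)$ is exactly what is needed to make all four conclusions go through simultaneously, and you verified each one. One small wording slip: in the telescoping estimate for $R_2$ you are \emph{adding} $\mb{X}^n(i)$ to the conditioning of $H(C_2 \mid \mb{X}^n(1{:}i{-}1), \mb{Y}^n(1{:}i))$ to get $H(C_2\mid \mb{X}^n(1{:}i), \mb{Y}^n(1{:}i))$, not discarding it---conditioning on more cannot increase entropy, which is what makes the sum collapse to $H(C_2) - H(C_2\mid \mb{X}^n,\mb{Y}^n) = H(C_2)$.
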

The proof of the lemma is similar to \cite[Lemma 2]{Rahman1} and is omitted. We are now ready to prove the converse of Theorem \ref{thm:MainThm}. If $(R_1,R_2, \mb{D})$ is achievable, then
\begin{align}
\mu R_1 + R_2 &\ge v\left (P\right) \label{eq:Conv1}\\
&= \left\{
\begin{array}{l l}
  v\left (P_{pt-pt} \right) & \quad \mbox{if $0 \le \mu \le 1$} \\
  v\left (P_{G1} \right) & \quad \mbox{if $\mu > 1$}\\ \end{array} \right.  \label{eq:Conv2}\\
  &= \mathcal{R}^{*}(\mb{D},\mu),  \label{eq:Conv3}
\end{align}
where
\begin{enumerate}
\item[(\ref{eq:Conv1})] follows from Lemma \ref{lem:SinLetterOuterBound}, and
\item[(\ref{eq:Conv2})] follows from (\ref{eq:OptVal1}) and (\ref{eq:Oohama7}).
\end{enumerate}
And if $(R_1,R_2,\mb{D}) \in \overline {\mathcal{RD}}$, then (\ref{eq:Conv3}) again holds because $\mathcal{R}^{*}(\mb{D},\mu)$ is continuous in $\mb{D}$. So, (\ref{eq:Conv3}) is a lower bound for any $(R_1, R_2)$ in the rate region $\mathcal{R}(\mb{D})$. Hence,
\begin{align*}
\mathcal{R}(\mb{D},\mu) &= \inf_{(R_1,R_2) \in \mathcal{R}(\mb{D}) } \mu R_1 + R_2\\
&\ge \mathcal{R}^{*}(\mb{D},\mu).
\end{align*}
This completes the converse proof of Theorem \ref{thm:MainThm}.

\emph{Remark 3:} It follows from Theorem \ref{thm:MainThm} and Lemma \ref{lem:BTRegion} that one
can add the constraints
\begin{align*}
& \mb{U} \leftrightarrow \mb{X} \leftrightarrow \mb{Y} \leftrightarrow
        \mb{V} \hspace{0.1in} \textrm{and} \\
& (\mb{U},\mb{V},\mb{X},\mb{Y}) \ \text{are jointly Gaussian}
\end{align*}
to the optimization problem
\begin{align*}
\left (P\right) \hspace {0.2in}\triangleq \hspace {0.5in}\min_{\mb{U,V}} \hspace {0.1in} & \mu I(\mb{X};\mb{U}|\mb{V})+I(\mb{Y};\mb{V}) \nonumber\\
\textrm{subject to} \hspace{0.1in} &\mb{K}_{\mb{X}|\mb{U,V}} \preccurlyeq \mb{D} \hspace{0.1in} \textrm{and} \nonumber\\
&\mb{X}\leftrightarrow \mb{Y}\leftrightarrow \mb{V},
\end{align*}
without changing its optimal value.


\section{Solution for the General Case}\label{sec:Extension}
In this section, we lift the assumptions on $\mb{K_X}, \mb{K_Y},$ and $\mb{D}$ and allow them to be any positive semidefinite matrices. We shall show that the Gaussian achievable scheme is optimal for this general problem. For this section, we denote the rate region of the problem by $\mathcal{R}\left(\mb{K_X}, \mb{K_Y}, \mb{D}\right)$. Note that $\mb{K_X}$ and $\mb{K_Y}$ completely specify the joint distribution of $\mb{X}$ and $\mb{Y}$ because we continue to assume that $\mb{X} = \mb{Y} + \mb{N}$. Similarly, $\mathcal{R}_G\left(\mb{K_X}, \mb{K_Y}, \mb{D}\right)$ is used to denote the rate region achieved by the Gaussian achievable scheme. We use $\mathcal{R}\left(\mb{K_X}, \mb{K_Y}, \mb{D}, \mu\right)$ and $\mathcal{R}_G\left(\mb{K_X}, \mb{K_Y}, \mb{D}, \mu\right)$ to denote the two minimum weighted sum-rates. Likewise, we denote the set $\mathcal{S}$ defined in Section \ref{subsec:scheme} by $\mathcal{S}(\mb{K}_{ {\mb{X}}}, \mb{K}_{{\mb{Y}}},\mb{D})$. We use similar notation later in the section. We start with the following extension.

\begin{Thm} \label{thm:Extension1}
If $\mb{K_X}$ and $\mb{D}$ are positive definite, and $\mb{K_Y}$ is positive semidefinite, then
\[
\mathcal{R}\left(\mb{K_X}, \mb{K_Y}, \mb{D}, \mu\right)=\mathcal{R}_G\left(\mb{K_X}, \mb{K_Y}, \mb{D}, \mu\right).
\]
\end{Thm}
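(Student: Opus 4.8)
The strategy is to derive the case of a singular $\mb{K_Y}$ from the positive-definite case already settled in Theorem~\ref{thm:MainThm}, by perturbing the helper source and passing to a limit. Achievability is free: the Berger--Tung scheme of Section~\ref{subsec:scheme} achieves $\mathcal{R}_G\left(\mb{K_X},\mb{K_Y},\mb{D},\mu\right)$ irrespective of positive definiteness, exactly as in the first part of Lemma~\ref{lem:BTRegion}, so $\mathcal{R}\left(\mb{K_X},\mb{K_Y},\mb{D},\mu\right)\le\mathcal{R}_G\left(\mb{K_X},\mb{K_Y},\mb{D},\mu\right)$. The work is the converse $\mathcal{R}\left(\mb{K_X},\mb{K_Y},\mb{D},\mu\right)\ge\mathcal{R}_G\left(\mb{K_X},\mb{K_Y},\mb{D},\mu\right)$.

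For $\epsilon\in(0,1)$ set $\mb{K}_{\mb{Y}}^{(\epsilon)}\triangleq\mb{K_Y}+\epsilon\mb{K_N}=(1-\epsilon)\mb{K_Y}+\epsilon\mb{K_X}$. Then $\mb{K}_{\mb{Y}}^{(\epsilon)}\succcurlyeq\epsilon\mb{K_X}\succ\mb{0}$ and $\mb{K_X}-\mb{K}_{\mb{Y}}^{(\epsilon)}=(1-\epsilon)\mb{K_N}\succcurlyeq\mb{0}$, so $\left(\mb{K_X},\mb{K}_{\mb{Y}}^{(\epsilon)},\mb{D}\right)$ is a legitimate instance with all three matrices positive definite, and Theorem~\ref{thm:MainThm} applies to it. Moreover the original helper observation $\mb{Y}$ is a stochastically degraded version of the perturbed one: if $\mb{W}_\epsilon$ is a zero-mean Gaussian vector with covariance $\epsilon(1-\epsilon)\mb{K_N}$, independent of $(\mb{Y},\mb{N})$, then $\mb{Y}^{(\epsilon)}\triangleq\mb{Y}+\epsilon\mb{N}+\mb{W}_\epsilon$ has covariance $\mb{K}_{\mb{Y}}^{(\epsilon)}$, satisfies $\mb{X}=\mb{Y}^{(\epsilon)}+\mb{N}^{(\epsilon)}$ with $\mb{N}^{(\epsilon)}$ independent of $\mb{Y}^{(\epsilon)}$ and $\mb{K}_{\mb{N}^{(\epsilon)}}=(1-\epsilon)\mb{K_N}$, and a short computation confirms the Markov chain $\mb{X}\leftrightarrow\mb{Y}^{(\epsilon)}\leftrightarrow\mb{Y}$. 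Thus $\left(\mb{X},\mb{Y}^{(\epsilon)}\right)$ is, in canonical form, the perturbed instance, and $\mb{Y}$ can be regenerated from $\mb{Y}^{(\epsilon)}$ by a fixed memoryless channel.

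Next I would compare the main optimization problem $(P)$ of Section~\ref{sec:ConvIngre} across the two instances and prove $v\left(P;\mb{K_X},\mb{K_Y},\mb{D}\right)\ge v\left(P;\mb{K_X},\mb{K}_{\mb{Y}}^{(\epsilon)},\mb{D}\right)$. Given $(\mb{U},\mb{V})$ feasible for the left instance, work on the enlarged space carrying $\left(\mb{X},\mb{Y},\mb{Y}^{(\epsilon)}\right)$ and re-couple: generate $\mb{Y}$ from $\mb{Y}^{(\epsilon)}$ through the degrading channel, then $\mb{V}$ from $\mb{Y}$ and $\mb{U}$ from $(\mb{X},\mb{V})$ using the original conditional laws. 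The triple $(\mb{X},\mb{U},\mb{V})$ retains its original joint law, so $\mb{K}_{\mb{X}|\mb{U},\mb{V}}\preccurlyeq\mb{D}$ and $I(\mb{X};\mb{U}|\mb{V})$ are unchanged; $\mb{X}\leftrightarrow\mb{Y}^{(\epsilon)}\leftrightarrow\mb{V}$ holds since $\mb{V}$ is now a function of $\mb{Y}^{(\epsilon)}$ and fresh randomness; and $I\left(\mb{Y}^{(\epsilon)};\mb{V}\right)\le I(\mb{Y};\mb{V})$ by the data-processing inequality. Hence $(\mb{U},\mb{V})$ is feasible for the perturbed $(P)$ with no larger objective. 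Combining this with Lemma~\ref{lem:SinLetterOuterBound} (whose proof needs no positive definiteness) and with the identity $v\left(P;\mb{K_X},\mb{K}_{\mb{Y}}^{(\epsilon)},\mb{D}\right)=\mathcal{R}_G\left(\mb{K_X},\mb{K}_{\mb{Y}}^{(\epsilon)},\mb{D},\mu\right)$ valid for the positive-definite perturbed instance (by \eqref{eq:OptVal1}, \eqref{eq:Oohama7}, the definition of $\mathcal{R}^{*}$, and Lemma~\ref{lem:BTRegion}), every achievable $(R_1,R_2,\mb{D})$ for the original instance obeys
\[
\mu R_1+R_2\;\ge\;v\left(P;\mb{K_X},\mb{K_Y},\mb{D}\right)\;\ge\;v\left(P;\mb{K_X},\mb{K}_{\mb{Y}}^{(\epsilon)},\mb{D}\right)\;=\;\mathcal{R}_G\left(\mb{K_X},\mb{K}_{\mb{Y}}^{(\epsilon)},\mb{D},\mu\right)
\]
for every $\epsilon\in(0,1)$. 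Letting $\epsilon\downarrow0$ and using continuity of $\mb{K_Y}\mapsto\mathcal{R}_G\left(\mb{K_X},\mb{K_Y},\mb{D},\mu\right)$ at $\mb{K_Y}$ then gives the converse $\mu R_1+R_2\ge\mathcal{R}_G\left(\mb{K_X},\mb{K_Y},\mb{D},\mu\right)$.

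The hard part is this continuity, because $\mb{K}_{\mb{Y}}^{(\epsilon)}\to\mb{K_Y}$ is a degenerate limit and $I\left(\mb{Y}^{(\epsilon)};\mb{V}\right)=\tfrac12\log\bigl(\bigl|\mb{K}_{\mb{Y}}^{(\epsilon)}\bigr|/\bigl|\mb{K}_{\mb{Y}^{(\epsilon)}|\mb{V}}\bigr|\bigr)$ is a ratio of determinants both tending to $0$. For $\mu\le1$ there is nothing to prove, since $\mathcal{R}_G=v\left(P_{pt-pt}\right)$ does not involve $\mb{K_Y}$. For $\mu>1$ I would use the $\left(P_{G2}\right)$ parametrization by $(\mb{B}_1,\mb{B}_2)$, which ranges over a set that is compact uniformly in $\epsilon$ (feasibility forces $\mb{0}\preccurlyeq\mb{B}_1,\mb{B}_2$ and $\mb{B}_1+\mb{B}_2\preccurlyeq\mb{K_X}$). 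The degradation argument already gives the monotonicity $\mathcal{R}_G\left(\mb{K_X},\mb{K}_{\mb{Y}}^{(\epsilon)},\mb{D},\mu\right)\le\mathcal{R}_G\left(\mb{K_X},\mb{K_Y},\mb{D},\mu\right)$, so only the reverse inequality for the limit remains. Splitting $\mathbb{R}^m$ into the column space $\mathcal{V}$ of $\mb{K_Y}$ and its orthogonal complement $\mathcal{V}^{\perp}$, one has $\bigl(\mb{K}_{\mb{Y}}^{(\epsilon)}\bigr)_{\mathcal{V}^{\perp}}=\epsilon\bigl(\mb{K_X}\bigr)_{\mathcal{V}^{\perp}}\succ\mb{0}$, and for any near-optimal $\mb{B}_2$ the blocks of $\mb{B}_2$ meeting $\mathcal{V}^{\perp}$ are $O(\epsilon)$ (the helper carries only $O(\epsilon)$ useful power there); a Schur-complement expansion then shows the $\epsilon^{m-\dim\mathcal{V}}$ factors cancel between numerator and denominator, so $I\left(\mb{Y}^{(\epsilon)};\mb{V}\right)$ converges to its value computed on $\mathcal{V}$, while the first term of the objective is plainly continuous. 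Passing to a convergent subsequence of near-optimal $\bigl(\mb{B}_1^{(\epsilon)},\mb{B}_2^{(\epsilon)}\bigr)$ yields a point feasible for $\left(P_{G2};\mb{K_X},\mb{K_Y},\mb{D}\right)$ whose objective is at most $\liminf_{\epsilon\downarrow0}\mathcal{R}_G\bigl(\mb{K_X},\mb{K}_{\mb{Y}}^{(\epsilon)},\mb{D},\mu\bigr)$, giving the needed inequality. I expect the delicate point to be verifying that near-optimal schemes keep the $\mathcal{V}^{\perp}$-component of $\mb{B}_2$ of order exactly $\epsilon$ (not vanishing faster), so that the determinant ratio stays bounded away from $0$ and $\infty$.
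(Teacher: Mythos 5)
Your strategy is the same one the paper uses at the top level (perturb $\mb{K_Y}$ to make it positive definite, invoke Theorem~\ref{thm:MainThm} on the perturbed instance, pass to the limit), but the implementation differs in two ways worth noting. First, the paper does not perturb $\mb{K_Y}$ directly; it first applies an invertible transformation $\mb{A}$ that block-diagonalizes $\mb{K_{\bar{Y}}}$ into $\mathrm{diag}(\mb{\Sigma}_1, \mb{0})$ and $\mb{K_{\bar{N}}}$ into $\mathrm{diag}(\cdot, \mb{G})$, and then perturbs \emph{only} the degenerate block, setting $\mb{K}_{\bar{\mb{Y}}^{(n)}} = \mathrm{diag}(\mb{\Sigma}_1, \tfrac{1}{n}\mb{G})$. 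Your choice $\mb{K_Y} + \epsilon\mb{K_N}$ is simpler to state but leaves cross-terms between the column space of $\mb{K_Y}$ and its complement (the corresponding off-diagonal block of $\mb{K_N}$ does not vanish), which is exactly what makes your Schur-complement bookkeeping in the limit harder; the paper's pre-transformation eliminates these cross-terms for free. Second, the paper establishes the monotonicity $\mathcal{R}(\text{perturbed}) \le \mathcal{R}(\text{original})$ operationally (encoder 2 observing $(\bar{\mb{Y}}, \bar{\mb{N}}_1^{(n)})$ instead of $\bar{\mb{Y}}$ is a strict relaxation of the coding problem), while you argue on the single-letter problem $(P)$ via a re-coupling through the degrading channel $\mb{Y}^{(\epsilon)} \to \mb{Y}$; these two routes are equivalent in content, though the operational one is more direct. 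On the ``delicate point'' you identify: you are right that continuity of $\mathcal{R}_G$ at a singular $\mb{K_Y}$ is the technical crux. However, you do not need to show that near-optimal $\mb{B}_2$ keeps its $\mathcal{V}^\perp$-block of order exactly $\epsilon$. What is actually required is only the one-sided bound $\liminf_{\epsilon\downarrow 0} \mathcal{R}_G(\mb{K}_{\mb{Y}}^{(\epsilon)}) \ge \mathcal{R}_G(\mb{K_Y})$, and this is ``free'': decomposing the determinant ratio by Schur complements, the $\mathcal{V}^\perp$-block and cross-terms can only \emph{add} nonnegative contributions to $I(\mb{Y}^{(\epsilon)};\mb{V})$, and the surviving $\mathcal{V}$-block term converges to the objective at the subsequential limit point, which is feasible for the unperturbed $(P_{G2})$. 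Combined with your degradation inequality $\mathcal{R}_G(\mb{K}_{\mb{Y}}^{(\epsilon)}) \le \mathcal{R}_G(\mb{K_Y})$, the two-sided limit follows. The paper's own treatment of this step (asserting continuity of the objective and that the subsequential limit lies in $\mathcal{S}$) is actually more cavalier than your discussion; the inequality-only argument is what closes the gap cleanly.
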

\begin{proof}
It suffices to prove that
\[
\mathcal{R}\left(\mb{K_X}, \mb{K_Y}, \mb{D}, \mu\right)\ge\mathcal{R}_G\left(\mb{K_X}, \mb{K_Y}, \mb{D}, \mu\right).
\]
If $\mb{K_Y}$ is positive definite (hence nonsingular), then the result follows from Theorem \ref{thm:MainThm}. We therefore assume that $\mb{K_Y}$ is singular and has a rank $p < m$.  The eigen decomposition of $\mb{K_Y}$ is
\[
\mb{K_Y} = \mb{Q \Sigma}\mb{Q}^T,
\]
where $\mb{Q}$ is an orthogonal matrix and
\[
\mb{\Sigma} = \textrm{Diag} (\alpha_1,\dots,\alpha_p,0,\dots,0).
\]
Let us partition $\mb{Q}$ as
\[
\mb{Q} = [\mb{Q}_1, \mb{Q}_2 ],
\]
where $\mb{Q}_1$ is an $m \times p$ matrix. Let us define
\begin{align*}
\mb{Q}^T \mb{K_N Q} &\triangleq \mat{\mb{E} & \mb{F}^T \\ \mb{F} & \mb{G}},
\end{align*}
where $\mb{E}$, $\mb{F},$ and $\mb{G}$ are submatrices of dimensions $p \times p$, $ (m-p) \times p,$ and $(m-p) \times (m-p)$, respectively. Since $\mb{Q}_2^T\mb{K_YQ_2} = \mb{0}$ and $\mb{X}=\mb{Y}+\mb{N},$ we have that
\[
\mb{G}=\mb{Q}_2^T\mb{K_NQ_2}=\mb{Q}_2^T\mb{K_XQ_2} \succ \mb{0},
\]
i.e., $\mb{G}$ is positive definite. Using this, we define
\begin{align*}
\mb{A} &\triangleq  \mat{\mb{I}_p & -\mb{F}^T \mb{G}^{-1} \\ \mb{0} & \mb{I}_{m-p}} \mb{Q}^T.
\end{align*}
$\mb{A} $ defines a transformed problem in which the transformed sources are
\begin{align*}
\bar {\mb{X}} &\triangleq \mb{A} \mb{X} \hspace{0.05in} \textrm{and} \\
\bar {\mb{Y}} &\triangleq \mb{A} \mb{Y},
\end{align*}
which satisfy
\[
\bar {\mb{X}} = \bar {\mb{Y}} + \bar {\mb{N}},
\]
where $\bar {\mb{N}} \triangleq \mb{A} \mb{N},$ and the transformed distortion matrix is
\begin{align*}
\bar {\mb{D}} &\triangleq\mb{A} \mb{D} \mb{A}^T.
\end{align*}
The covariance matrix of the transformed source $\bar {\mb{Y}}$ is
\begin{align*}
\mb{K}_{\bar {\mb{Y}}} &= \mb{A} \mb{K}_{\mb{Y}} \mb{A}^T = \mb{\Sigma}= \mat{\mb{\Sigma}_1 & \mb{0} \\ \mb{0} & \mb{0}},
\end{align*}
where
\begin{align*}
\mb{\Sigma}_1 &\triangleq \textrm{Diag} (\alpha_1,\dots,\alpha_p),
\end{align*}
and the covariance matrix of $\bar {\mb{N}}$ is
\begin{align*}
\mb{K}_{\bar {\mb{N}}} &= \mb{A} \mb{K}_{\mb{N}} \mb{A}^T \nonumber\\
&= \mat{\mb{I}_p & -\mb{F}^T \mb{G}^{-1} \\ \mb{0} & \mb{I}_{m-p}} \mat{\mb{E} & \mb{F}^T \\ \mb{F} & \mb{G}} \mat{\mb{I}_p & \mb{0} \\-\mb{G}^{-1}\mb{F} & \mb{I}_{m-p}} \nonumber\\
&=\mat{\mb{E}-\mb{F}^T\mb{G}^{-1}\mb{F} & \mb{0} \\ \mb{0} & \mb{G}}.
\end{align*}
Using these, the covariance matrix of the transformed source $\bar {\mb{X}}$ can be expressed as
\begin{align*}
\mb{K}_{\bar {\mb{X}}} &= \mb{K}_{\bar {\mb{Y}}}+\mb{K}_{\bar {\mb{N}}} \nonumber\\
&=\mat{\mb{\Sigma}_1+\mb{E}-\mb{F}^T\mb{G}^{-1}\mb{F} & \mb{0} \\ \mb{0} & \mb{G}}.
\end{align*}
Since $\mb{A}$ is invertible, the above transformation is information lossless, and hence the transformed problem is equivalent to the original problem. Therefore,
\begin{align*}
\mathcal{R}\left(\mb{K_X}, \mb{K_Y}, \mb{D}, \mu\right)&=\mathcal{R}\left(\mb{K}_{\bar {\mb{X}}}, \mb{K}_{\bar {\mb{Y}}}, \bar {\mb{D}}, \mu\right) \hspace{0.05in} \textrm{and} \\
\mathcal{R}_G\left(\mb{K_X}, \mb{K_Y}, \mb{D}, \mu\right)&=\mathcal{R}_G\left(\mb{K}_{\bar {\mb{X}}}, \mb{K}_{\bar {\mb{Y}}}, \bar {\mb{D}}, \mu\right).
\end{align*}
So, it is sufficient to prove that
\[
\mathcal{R}\left(\mb{K}_{\bar {\mb{X}}}, \mb{K}_{\bar {\mb{Y}}}, \bar {\mb{D}}, \mu\right) \ge \mathcal{R}_G\left(\mb{K}_{\bar {\mb{X}}}, \mb{K}_{\bar {\mb{Y}}}, \bar {\mb{D}}, \mu\right).
\]

Let us define the following matrices
\begin{align*}
\mb{K}_{\bar {\mb{N}}_1^{(n)}} &\triangleq \mat{\mb{0} & \mb{0} \\ \mb{0} & \frac{1}{n}\mb{G}} \hspace{0.05in} \textrm{and}\\
\mb{K}_{\bar {\mb{N}}_2^{(n)}} &\triangleq \mat{\mb{E}-\mb{F}^T\mb{G}^{-1}\mb{F} & \mb{0} \\ \mb{0} & \left(1-\frac{1}{n}\right) \mb{G}},
\end{align*}
where $n$ is a positive integer. It is clear that these matrices are positive semidefinite and they satisfy
\[
\mb{K}_{\bar {\mb{N}}}=\mb{K}_{\bar {\mb{N}}_1^{(n)}}+\mb{K}_{\bar {\mb{N}}_2^{(n)}}.
\]
Let ${\bar {\mb{N}}_1}^{(n)}$ and ${\bar {\mb{N}}_2}^{(n)}$ be zero-mean vector Gaussian sources with covariance matrices $\mb{K}_{\bar {\mb{N}}_1^{(n)}}$ and $\mb{K}_{\bar {\mb{N}}_2^{(n)}},$ respectively. In addition, suppose they are independent of each other and all other vector Gaussian sources. We can then write
\[
{\bar {\mb{X}}}={\bar {\mb{Y}}}+{\bar {\mb{N}}_1}^{(n)}+{\bar {\mb{N}}_2}^{(n)}.
\]
Let us consider a new problem in which encoder 1 has access to ${\bar {\mb{X}}},$ encoder 2 has access to $\left({\bar {\mb{Y}}},{\bar {\mb{N}}_1}^{(n)}\right),$ and the distortion constraint on ${\bar {\mb{X}}}$ is $\bar {\mb{D}}.$ This problem is clearly a relaxation to the original problem because encoder 2 has access to more information about ${\bar {\mb{X}}}$ than the original problem. In other words, any feasible scheme for the original problem is also feasible for this new problem. Now since there is no distortion constraint on $\bar {\mb{Y}}$ and the sufficient statistic of ${\bar {\mb{X}}}$ in $\left({\bar {\mb{Y}}},{\bar {\mb{N}}_1}^{(n)}\right)$ is ${\bar {\mb{Y}}}+{\bar {\mb{N}}_1}^{(n)}$, this new problem is equivalent to the problem in which encoder 2, instead of $\left({\bar {\mb{Y}}},{\bar {\mb{N}}_1}^{(n)}\right)$, has access to the sum ${\bar {\mb{Y}}}+{\bar {\mb{N}}_1}^{(n)}.$ Let us denote this sum by $\bar {\mb{Y}}^{(n)}$, i.e.,
\begin{align*}
\bar {\mb{Y}}^{(n)} &\triangleq  {\bar {\mb{Y}}}+{\bar {\mb{N}}_1}^{(n)},
\end{align*}
which has a positive definite covariance matrix
\[
\mb{K}_{\bar {\mb{Y}}^{(n)}}=\mb{K}_{\bar {\mb{Y}}}+\mb{K}_{\bar {\mb{N}}_1^{(n)}}=\mat{\mb{\Sigma}_1 & \mb{0} \\ \mb{0} & \frac{1}{n}\mb{G}}.
\]
It follows that
\begin{align*}
\mathcal{R}\left(\mb{K}_{\bar {\mb{X}}}, \mb{K}_{\bar {\mb{Y}}^{(n)}}, \bar {\mb{D}}, \mu\right) \le \mathcal{R}\left(\mb{K}_{\bar {\mb{X}}}, \mb{K}_{\bar {\mb{Y}}}, \bar {\mb{D}}, \mu\right).
\end{align*}
Since this is true for all $n$ and $\mathcal{R}\left(\mb{K}_{\bar {\mb{X}}}, \mb{K}_{\bar {\mb{Y}}^{(n)}}, \bar {\mb{D}}, \mu\right)$ is nondecreasing in $n$, we obtain
\begin{align}
\lim_{n \to \infty} \mathcal{R}\left(\mb{K}_{\bar {\mb{X}}}, \mb{K}_{\bar {\mb{Y}}^{(n)}}, \bar {\mb{D}}, \mu\right) \le \mathcal{R}\left(\mb{K}_{\bar {\mb{X}}}, \mb{K}_{\bar {\mb{Y}}}, \bar {\mb{D}}, \mu\right). \label{eq:Extension1}
\end{align}
Since $\mb{K}_{\bar {\mb{X}}}, \mb{K}_{\bar {\mb{Y}}^{(n)}},$ and $\bar {\mb{D}}$ are positive definite, the conclusion of Theorem \ref{thm:MainThm} holds for this sequence of relaxed problems, i.e., for each $n$
\begin{align*}
\mathcal{R}\left(\mb{K}_{\bar {\mb{X}}}, \mb{K}_{\bar {\mb{Y}}^{(n)}}, \bar {\mb{D}}, \mu\right) = \mathcal{R}_G\left(\mb{K}_{\bar {\mb{X}}}, \mb{K}_{\bar {\mb{Y}}^{(n)}}, \bar {\mb{D}}, \mu\right).
\end{align*}
This and (\ref{eq:Extension1}) together imply that
\begin{align}
\lim_{n \to \infty} \mathcal{R}_G\left(\mb{K}_{\bar {\mb{X}}}, \mb{K}_{\bar {\mb{Y}}^{(n)}}, \bar {\mb{D}}, \mu\right) \le \mathcal{R}\left(\mb{K}_{\bar {\mb{X}}}, \mb{K}_{\bar {\mb{Y}}}, \bar {\mb{D}}, \mu\right). \label{eq:Extension2}
\end{align}
Now for each $n$, there exists $\left(\mb{U}^{(n)},\mb{V}^{(n)} \right)$ in $\mathcal{S}\left(\mb{K}_{\bar {\mb{X}}}, \mb{K}_{\bar {\mb{Y}}^{(n)}},\bar {\mb{D}}\right)$ such that
\begin{align}
\mathcal{R}_G\left(\mb{K}_{\bar {\mb{X}}}, \mb{K}_{\bar {\mb{Y}}^{(n)}}, \bar {\mb{D}}, \mu\right) = \mu I \left({\bar {\mb{X}}};\mb{U}^{(n)}|\mb{V}^{(n)} \right) + I \left(\bar {\mb{Y}}^{(n)};\mb{V}^{(n)} \right). \label{eq:Extension3}
\end{align}
Since ${\bar {\mb{X}}},\bar {\mb{Y}}^{(n)},\mb{U}^{(n)},$ and $\mb{V}^{(n)}$ are jointly Gaussian, we can without loss of generality parameterize them by positive semidefinite matrices $\mb{B}_1$ and $\mb{B}_2$ as in the definition $(P_{G2})$. These matrices lie in a compact set because they satisfy the KKT conditions that are continuous, and they are bounded as  $\mb{B}_1+\mb{B}_2 \prec \mb{K}_{\bar {\mb{X}}}$. Therefore, there exists a subsequence of $\mb{K}_{\bar {\mb{Y}}^{(n)}}$ along which $\left(\mb{U}^{(n)},\mb{V}^{(n)}\right)$ converges to $(\mb{U},\mb{V})$ in $\mathcal{S}\left(\mb{K}_{\bar {\mb{X}}}, \mb{K}_{\bar {\mb{Y}}},\bar {\mb{D}}\right)$. Since the right-hand-side of (\ref{eq:Extension3}) is continuous in $\left(\bar {\mb{Y}}^{(n)},\mb{U}^{(n)},\mb{V}^{(n)}\right)$, this implies
\begin{align}
\lim_{n \to \infty} \mathcal{R}_G\left(\mb{K}_{\bar {\mb{X}}}, \mb{K}_{\bar {\mb{Y}}^{(n)}}, \bar {\mb{D}}, \mu\right) &= \mu I \left({\bar {\mb{X}}};\mb{U}|\mb{V} \right) + I \left(\bar {\mb{Y}};\mb{V}\right) \nonumber\\
&\ge \mathcal{R}_G\left(\mb{K}_{\bar {\mb{X}}}, \mb{K}_{\bar {\mb{Y}}}, \bar {\mb{D}}, \mu\right). \label{eq:Extension4}
\end{align}
It now follows from (\ref{eq:Extension2}) and (\ref{eq:Extension4}) that
\begin{align*}
\mathcal{R}\left(\mb{K}_{\bar {\mb{X}}}, \mb{K}_{\bar {\mb{Y}}}, \bar {\mb{D}}, \mu\right) \ge \mathcal{R}_G\left(\mb{K}_{\bar {\mb{X}}}, \mb{K}_{\bar {\mb{Y}}}, \bar {\mb{D}}, \mu\right).
\end{align*}
This proves Theorem \ref{thm:Extension1}.
\end{proof}

We next use Theorem \ref{thm:Extension1} to prove our result for the most general case of the problem.
\begin{Thm} \label{thm:Extension2}
For any positive semidefinite $\mb{K_X}, \mb{K_Y},$ and $\mb{D}$, we have
\[
\mathcal{R}\left(\mb{K_X}, \mb{K_Y}, \mb{D}, \mu\right)=\mathcal{R}_G\left(\mb{K_X}, \mb{K_Y}, \mb{D}, \mu\right).
\]
\end{Thm}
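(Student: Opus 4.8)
The plan is to peel off the two degeneracies not allowed in Theorem~\ref{thm:Extension1} --- a singular $\mb{K_X}$ and a singular $\mb{D}$ --- and thereby reduce the general case to that theorem. As always the easy direction is $\mathcal{R}(\mb{K_X},\mb{K_Y},\mb{D},\mu) \le \mathcal{R}_G(\mb{K_X},\mb{K_Y},\mb{D},\mu)$, since the Gaussian scheme is a valid scheme; so only the reverse inequality is at issue, and the goal is to show it coincides with the reverse inequality for a lower-dimensional, nondegenerate problem to which Theorem~\ref{thm:Extension1} already applies.

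First I would remove the singularity of $\mb{K_X}$ by projection. From $\mb{X} = \mb{Y}+\mb{N}$ with $\mb{N}$ independent of $\mb{Y}$ we get $\mb{K_X} = \mb{K_Y}+\mb{K_N}$, so the null space of $\mb{K_X}$ is contained in those of $\mb{K_Y}$ and $\mb{K_N}$; hence $\mb{X}$ and $\mb{Y}$ both lie, almost surely, in the column space of $\mb{K_X}$. Choosing a matrix $\mb{\Pi}$ whose columns form an orthonormal basis of that column space and setting $\mb{X}' \triangleq \mb{\Pi}^T\mb{X}$, $\mb{Y}' \triangleq \mb{\Pi}^T\mb{Y}$, one has $\mb{X} = \mb{\Pi}\mb{X}'$ and $\mb{Y} = \mb{\Pi}\mb{Y}'$ almost surely, $\mb{X}' = \mb{Y}' + \mb{\Pi}^T\mb{N}$ with $\mb{\Pi}^T\mb{N}$ independent of $\mb{Y}'$, $\mb{K}_{\mb{X}'} = \mb{\Pi}^T\mb{K_X}\mb{\Pi} \succ \mb{0}$, and $\mb{K}_{\mb{Y}'} \succcurlyeq \mb{0}$. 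Since the MMSE reconstruction of $\mb{X}$ from the messages again lies in the column space of $\mb{\Pi}$, the reconstruction error does too, and I would check that both the operational rate region and the Gaussian region $\mathcal{R}_G$ are unchanged if we pass to the problem with sources $\mb{X}',\mb{Y}'$ and distortion matrix $\mb{D}' \triangleq \sup\{\mb{M} \succcurlyeq \mb{0} : \mb{\Pi}\mb{M}\mb{\Pi}^T \preccurlyeq \mb{D}\}$, a generalized Schur complement of $\mb{D}$ with respect to the directions orthogonal to the column space of $\mb{K_X}$. Thus we may assume $\mb{K_X} \succ \mb{0}$, with $\mb{D}$ replaced by $\mb{D}'$, which is still only positive semidefinite.

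Next I would dispose of singular $\mb{D}$. If $\mb{D}$ is singular, take $\mb{z} \ne \mb{0}$ with $\mb{D}\mb{z} = \mb{0}$; then $\mb{z}^T\mb{D}\mb{z} = 0 < \mb{z}^T\mb{K_X}\mb{z}$, so the distortion constraint forces the i.i.d.\ nondegenerate Gaussian process $\mb{z}^T\mb{X}^n$ to be reproduced with zero mean-squared error. Because $R_1 + R_2$ is finite, the point-to-point rate-distortion converse for $\mb{z}^T\mb{X}$ makes this impossible, and indeed shows that $(R_1,R_2,\mb{D})$ cannot lie in $\overline{\mathcal{RD}}$; hence $\mathcal{R}(\mb{K_X},\mb{K_Y},\mb{D},\mu) = +\infty$. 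The same $\mb{z}$ forces $\mb{K}_{\mb{X}|\mb{U,V}}\mb{z} = \mb{0}$, hence an infinite weighted sum rate, for every $(\mb{U},\mb{V}) \in \mathcal{S}$, so $\mathcal{R}_G(\mb{K_X},\mb{K_Y},\mb{D},\mu) = +\infty$ as well and the identity holds trivially. We may therefore also assume $\mb{D} \succ \mb{0}$. At this point the reduced problem satisfies $\mb{K_X} \succ \mb{0}$, $\mb{D} \succ \mb{0}$, $\mb{K_Y} \succcurlyeq \mb{0}$, so Theorem~\ref{thm:Extension1} applies and, unwinding the two reductions, gives $\mathcal{R}(\mb{K_X},\mb{K_Y},\mb{D},\mu) = \mathcal{R}_G(\mb{K_X},\mb{K_Y},\mb{D},\mu)$ in general.

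The hard part will be the bookkeeping in the projection step: one must verify that projecting onto the column space of $\mb{K_X}$ yields an \emph{exact} equivalence of the operational rate region \emph{and} of $\mathcal{R}_G$, and that the induced distortion constraint is precisely the generalized Schur complement $\mb{D}'$ --- which requires a pseudoinverse when $\mb{D}$ itself is singular, and whose singularity is exactly the case handled trivially in the previous step. The remaining ingredients --- the point-to-point converse invoked for singular $\mb{D}$ and the appeal to Theorem~\ref{thm:Extension1} --- are immediate.
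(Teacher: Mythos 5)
Your proposal is correct and follows essentially the same route as the paper: change basis so that the column space of $\mb{K_X}$ (which automatically contains those of $\mb{K_Y}$ and $\mb{K_N}$) is split off, reduce to a $p$-dimensional problem with nonsingular $\mb{K}_{\mb{X}'}$, identify the reduced distortion matrix as the generalized Schur complement $\mb{E}-\mb{F}^T\mb{G}^{+}\mb{F}$, treat the singular-$\mb{D}'$ case trivially by noting both sides are infinite, and then invoke Theorem~\ref{thm:Extension1}. The only presentational difference is that you characterize the projected distortion matrix implicitly as a Loewner supremum and then identify it as the Schur complement, while the paper obtains the same matrix $\mb{D}_1$ directly by an explicit invertible block transformation involving the pseudo-inverse $\mb{G}^{+}$.
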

\begin{proof}
Let us suppose that the rank of $\mb{K_X}$ is $p \le m$.  Since $\mb{K_X}$ is positive semidefinite, its eigen decomposition is
\[
\mb{K_X} = \mb{Q \Sigma}\mb{Q}^T,
\]
where $\mb{Q}$ is an orthogonal matrix and
\[
\mb{\Sigma} = \textrm{Diag} (\alpha_1,\dots,\alpha_p,0,\dots,0).
\]
Let us partition $\mb{Q} $ as
\[
\mb{Q} \triangleq [\mb{Q}_1, \mb{Q}_2 ],
\]
where $\mb{Q}_1$ is an $m \times p$ matrix. Since $\mb{Q}_2^T\mb{K_XQ_2} = \mb{0}$ and $\mb{X}=\mb{Y}+\mb{N},$ we have
\[
\mb{Q}_2^T\mb{K_YQ_2}=\mb{Q}_2^T\mb{K_NQ_2} = \mb{0},
\]
which implies that
\begin{align*}
\mb{Q}^T\mb{K_YQ} &= \mat{\mb{Q}_1^T \mb{K_Y} \mb{Q}_1 & \mb{0} \\ \mb{0} & \mb{0}} \hspace{0.1in} \textrm{and}\\
\mb{Q}^T\mb{K_NQ} &= \mat{\mb{Q}_1^T \mb{K_N} \mb{Q}_1 & \mb{0} \\ \mb{0} & \mb{0}}.
\end{align*}
Let us define
\begin{align*}
\mb{Q}^T \mb{D Q} &\triangleq \mat{\mb{E} & \mb{F}^T \\ \mb{F} & \mb{G}},
\end{align*}
where $\mb{E}$, $\mb{F},$ and $\mb{G}$ are submatrices of dimensions $p \times p$, $ (m-p) \times p,$ and $(m-p) \times (m-p)$, respectively. We need the following lemma.
\begin{Lem}  \label{lem:pseudo} \cite[Appendix A.5.5, p. 651]{Boyd} $\mb{Q}^T \mb{D Q} \succcurlyeq \mb{0}$ if and only if
\begin{align*}
\mb{G} &\succcurlyeq \mb{0}, \\
\mb{E}-\mb{F}^T\mb{G}^{+}\mb{F} &\succcurlyeq \mb{0}, \hspace{0.05in} \textrm{and}\\
(\mb{I}_{m-p}-\mb{GG}^{+})\mb{F} &= \mb{0},
\end{align*}
where $\mb{G}^{+}$ is the pseudo-inverse or Moore-Penrose inverse of $\mb{G}$ \cite[Appendix A.5.4, p. 649]{Boyd}.
\end{Lem}

Let
\begin{align*}
\mb{T} &\triangleq \mat{\mb{T}_1 \\ \mb{T}_2}\triangleq \mat{\mb{I}_p & -\mb{F}^T \mb{G}^{+} \\ \mb{0} & \mb{I}_{m-p}} \mb{Q}^T,
\end{align*}
where $\mb{T}_1$ is a $p \times m$ matrix. Using this, we obtain a transformed problem in which the transformed sources are
\begin{align*}
\bar {\mb{X}} &\triangleq \mat{\mb{X}_1\\  \mb{X}_2} \triangleq \mat{\mb{T}_1 \mb{X}\\  \mb{T}_2 \mb{X}}= \mb{T} \mb{X} \hspace{0.1in} \textrm{and}\\
\bar {\mb{Y}} &\triangleq \mat{\mb{Y}_1\\  \mb{Y}_2} \triangleq \mat{\mb{T}_1 \mb{Y}\\  \mb{T}_2 \mb{Y}}= \mb{T} \mb{Y}.
\end{align*}
Using Lemma \ref{lem:pseudo}, we obtain the transformed distortion matrix
\begin{align}
\bar {\mb{D}} &\triangleq\mb{T} \mb{D} \mb{T}^T \nonumber\\
&= \mat{\mb{I}_p & -\mb{F}^T \mb{G}^{+} \\ \mb{0} & \mb{I}_{m-p}} \mb{Q}^T \mb{D} \mb{Q} \mat{\mb{I}_p & \mb{0} \\-\mb{G}^{+}\mb{F} & \mb{I}_{m-p}} \nonumber\\
&= \mat{\mb{I}_p & -\mb{F}^T \mb{G}^{+} \\ \mb{0} & \mb{I}_{m-p}}  \mat{\mb{E} & \mb{F}^T \\ \mb{F} & \mb{G}} \mat{\mb{I}_p & \mb{0} \\-\mb{G}^{+}\mb{F} & \mb{I}_{m-p}} \nonumber\\
&= \mat{\mb{E}-\mb{F}^T \mb{G}^{+}\mb{F} & \mb{0} \\\mb{0} & \mb{G}} \nonumber\\
&= \mat{ \mb{D}_1 & \mb{0} \\\mb{0} & \mb{D}_2}, \label{eq:Extension5}
\end{align}
where
\begin{align*}
\mb{D}_1&\triangleq \mb{E}-\mb{F}^T \mb{G}^{+}\mb{F} \hspace{0.1in} \textrm{and}\\
\mb{D}_2&\triangleq\mb{G}.
\end{align*}
The covariance matrix of the transformed source $\bar {\mb{X}}$ is
\begin{align*}
\mb{K}_{\bar {\mb{X}}} &= \mb{T} \mb{K}_{\mb{X}} \mb{T}^T \nonumber\\
&= \mat{\mb{I}_p & -\mb{F}^T \mb{G}^{+} \\ \mb{0} & \mb{I}_{m-p}} \mb{Q}^T\mb{K_XQ} \mat{\mb{I}_p & \mb{0} \\-\mb{G}^{+}\mb{F} & \mb{I}_{m-p}} \nonumber\\
&= \mat{\mb{I}_p & -\mb{F}^T \mb{G}^{+} \\ \mb{0} & \mb{I}_{m-p}} \mat{\mb{\Sigma}_1 & \mb{0} \\\mb{0} & \mb{0}} \mat{\mb{I}_p & \mb{0} \\-\mb{G}^{+}\mb{F} & \mb{I}_{m-p}} \nonumber\\
&= \mat{\mb{\Sigma}_1 & \mb{0} \\\mb{0} & \mb{0}},
\end{align*}
where
\begin{align*}
\mb{\Sigma}_1 &\triangleq \textrm{Diag} (\alpha_1,\dots,\alpha_p),
\end{align*}
and the covariance matrix of the transformed source $\bar {\mb{Y}}$ is
\begin{align*}
\mb{K}_{\bar {\mb{Y}}} &= \mb{T} \mb{K}_{\mb{Y}} \mb{T}^T \nonumber\\
&= \mat{\mb{I}_p & -\mb{F}^T \mb{G}^{+} \\ \mb{0} & \mb{I}_{m-p}} \mb{Q}^T\mb{K_YQ} \mat{\mb{I}_p & \mb{0} \\-\mb{G}^{+}\mb{F} & \mb{I}_{m-p}} \nonumber\\
&= \mat{\mb{I}_p & -\mb{F}^T \mb{G}^{+} \\ \mb{0} & \mb{I}_{m-p}} \mat{\mb{Q}^T_1\mb{K_Y} \mb{Q}_1 & \mb{0} \\\mb{0} & \mb{0}} \mat{\mb{I}_p & \mb{0} \\-\mb{G}^{+}\mb{F} & \mb{I}_{m-p}} \nonumber\\
&= \mat{\mb{Q}^T_1\mb{K_Y} \mb{Q}_1 & \mb{0} \\\mb{0} & \mb{0}}.
\end{align*}
It follows that $\mb{X}_2$ and $\mb{Y}_2$ are deterministic, i.e.,
\[
\mb{X}_2 = \mb{Y}_2 = \mb{0}.
\]
Since $\mb{T}$ is invertible, the distortion constraint is equivalent to
\begin{align}
\mb{T} \mb{D} \mb{T}^T&\succcurlyeq \frac{1}{n} \sum_{i=1}^n E\left [ \left (\bar {\mb{X}}^n(i) - \hat {\bar{\mb{X}}}^n(i) \right  ) \left (\bar {\mb{X}}^n(i) - \hat {\bar{\mb{X}}}^n(i) \right )^T \right ] \nonumber\\
&=\frac{1}{n} \sum_{i=1}^n E\left [ \mat{{\mb{X}}^n_1(i) - \hat {\mb{X}}^n_1(i) \\ \mb{0}} \mat{{\mb{X}}^n_1(i) - \hat {\mb{X}}^n_1(i) \\ \mb{0}}^T \right] \nonumber\\
&=\mat{\frac{1}{n} \sum_{i=1}^n E\left [ \left ({\mb{X}}^n_1(i) - \hat {\mb{X}}^n_1(i) \right) \left ({\mb{X}}^n_1(i) - \hat {\mb{X}}^n_1(i) \right)^T \right] & \mb{0} \\ \mb{0} & \mb{0}}.\label{eq:Extension6}
\end{align}
Since $\mb{D}_1$ and $\mb{D}_2$ are positive semidefinite from Lemma \ref{lem:pseudo}, (\ref{eq:Extension5}) and (\ref{eq:Extension6}) imply that the distortion constraint is equivalent to
\[
\mb{D}_1 \succcurlyeq \frac{1}{n} \sum_{i=1}^n E\left [ \left ({\mb{X}}^n_1(i) - \hat {\mb{X}}^n_1(i) \right) \left ({\mb{X}}^n_1(i) - \hat {\mb{X}}^n_1(i) \right)^T \right].
\]

Since $\mb{T}$ is invertible, the above transformation is information lossless, and hence the transformed problem is equivalent to the original problem. Moreover, the transformed problem is effectively $p$-dimensional with the sources $\mb{X}_1$ and $\mb{Y}_1$, and the distortion matrix $\mb{D}_1$ such that
\begin{align*}
\mb{K}_{\mb{X}_1} &=\mb{\Sigma}_1 \succ \mb{0} \hspace{0.1in} \textrm{and}\\
\mb{X}_1&=\mb{Y}_1+\mb{N}_1,
\end{align*}
where $\mb{N}_1 \triangleq \mb{T}_1\mb{N}$. We therefore have that
\begin{align}
\mathcal{R}\left(\mb{K_X}, \mb{K_Y}, \mb{D}, \mu\right)&=\mathcal{R}\left(\mb{K}_{\mb{X}_1}, \mb{K}_{\mb{Y}_1}, \mb{D}_1, \mu\right) \hspace{0.1in} \textrm{and} \label{eq:Extension7}\\
\mathcal{R}_G\left(\mb{K_X}, \mb{K_Y}, \mb{D}, \mu\right)&=\mathcal{R}_G\left(\mb{K}_{\mb{X}_1}, \mb{K}_{\mb{Y}_1}, \mb{D}_1, \mu\right). \label{eq:Extension77}
\end{align}
Since $\mb{K}_{\mb{X}_1}$ is positive definite, if $\mb{D}_1$ is
singular, then the right-hand side of (\ref{eq:Extension7}) and
(\ref{eq:Extension77}) are both infinite, so the conclusion
trivially holds.
Otherwise, we have that
$\mb{K}_{\mb{X}_1}$ and ${\mb{D}}_1$ are positive definite and $\mb{K}_{{\mb{Y}}_1}$ is positive semidefinite. In that case Theorem \ref{thm:Extension1}
implies that
\begin{align*}
\mathcal{R}\left(\mb{K}_{ {\mb{X}}_1} , \mb{K}_{ {\mb{Y}}_1}, {\mb{D}}_1, \mu\right) = \mathcal{R}_G\left(\mb{K}_{ {\mb{X}}_1} , \mb{K}_{ {\mb{Y}}_1}, {\mb{D}}_1, \mu\right).
\end{align*}
This together with (\ref{eq:Extension7}) and (\ref{eq:Extension77}) establishes the desired equality
\begin{align*}
\mathcal{R}\left(\mb{K}_{{\mb{X}}} , \mb{K}_{{\mb{Y}}},{\mb{D}}, \mu\right) = \mathcal{R}_G\left(\mb{K}_{{\mb{X}}} , \mb{K}_{{\mb{Y}}},{\mb{D}}, \mu\right).
\end{align*}
Theorem \ref{thm:Extension2} is thus proved.
\end{proof}

\hspace{0.1in}\\
\Large
{\textbf{Appendix A: \hspace{0.1in} \textbf{Proof of Equality in Lemma \ref{lem:BTRegion}} }\newline
\normalsize \\
Suppose $\mu$ is in $[0, 1]$. Then for any $(\mb{U,V})$ in $\mathcal{S}$, we have
\begin{align}
\mu I(\mb{X};\mb{U|V})+I(\mb{Y};\mb{V}) &= \mu I(\mb{X};\mb{U,V})-\mu I(\mb{X},\mb{V})+I(\mb{Y};\mb{V})\nonumber\\
&= \mu I(\mb{X};\mb{U})+\mu I(\mb{X};\mb{V}|\mb{U})+\mu [I(\mb{Y},\mb{V})- I(\mb{X};\mb{V})] + (1-\mu) I(\mb{Y};\mb{V})\nonumber\\
&\ge \mu I(\mb{X};\mb{U}) \label{eq:AppenA1}\\
&= \frac{\mu}{2} \log \frac{|\mb{K_X}|}{|\mb{K}_{\mb{X}|\mb{U}}|}\nonumber
\end{align}
where (\ref{eq:AppenA1}) follows because of the facts that $$I(\mb{Y};\mb{V}) \ge 0$$ and $$I(\mb{X};\mb{V}|\mb{U}) \ge 0,$$ and we have
\[
I(\mb{Y},\mb{V})- I(\mb{X};\mb{V}) \ge 0
\]
because of the data processing inequality \cite[Theorem 2.8.1]{Cover} and the Markov chain $\mb{X}\leftrightarrow \mb{Y}\leftrightarrow \mb{V}.$ The inequality (\ref{eq:AppenA1}) is achieved by any $(\mb{U,V})$ in $\mathcal{S}$ such that $\mb{V}$ is independent of $(\mb{X},\mb{Y},\mb{U})$, and the conditional covariance of $\mb{X}$ given $(\mb{U,V})$ satisfies
\begin{align*}
\mb{0} \preccurlyeq \mb{K}_{\mb{X}|\mb{U,V}}=\mb{K}_{\mb{X}|\mb{U}} \preccurlyeq \mb{D}.
\end{align*}
Since conditioning reduces covariance in a positive semidefinite sense, we have an additional constraint
$$\mb{K}_{\mb{X}|\mb{U}} \preccurlyeq \mb{K_X}.$$
We therefore have the following
\begin{align*}
\mathcal{R}_G(\mb{D}, \mu) &= \min_{(R_1,R_2) \in \mathcal{R}_G(\mb{D})} \mu R_1+R_2 \\
&=\min_{(\mb{U,V}) \in \mathcal{S}} \mu I(\mb{X};\mb{U|V}) +I(\mb{Y};\mb{V}) \\
&=\hspace{0.25in}\min_{\mb{K}_{\mb{X}|\mb{U}}} \hspace {0.2in} \frac{\mu}{2} \log \frac{ \left |\mb{K_X} \right |}{ \left|\mb{K}_{\mb{X}|\mb{U}}\right|} \nonumber\\
&\hspace{0.3in}\textrm{subject to} \hspace{0.12in} \mb{K}_{\mb{X}} \succcurlyeq \mb{K}_{\mb{X}|\mb{U}}  \succcurlyeq \mb{0}\hspace{0.1in} \textrm{and}\nonumber\\
&\hspace{0.95in}\mb{D} \succcurlyeq \mb{K}_{\mb{X}|\mb{U}}\\
&=v\left (P_{pt-pt} \right).
\end{align*}
Suppose now that $\mu > 1$. Then any $(\mb{U,V})$ in $\mathcal{S}$ can be characterized by positive semidefinite conditional covariance matrices $\mb{K}_{\mb{Y}|\mb{V}}$ and $\mb{K}_{\mb{X}|\mb{U,V}}$ such that
\begin{align*}
&\mb{K}_{\mb{Y}} \succcurlyeq \mb{K}_{\mb{Y}|\mb{V}}  \succcurlyeq \mb{0},\\
&\mb{K}_{\mb{Y}|\mb{V}}+\mb{K_N} \succcurlyeq \mb{K}_{\mb{X}|\mb{U,V}} \succcurlyeq \mb{0},\\
&\mb{D} \succcurlyeq \mb{K}_{\mb{X}|\mb{U,V}},
\end{align*}
and
\begin{align*}
I(\mb{X};\mb{U|V})&=\frac{1}{2} \log \frac{|\mb{K}_{\mb{Y}|\mb{V}}+\mb{K_N}|}{|\mb{K}_{\mb{X}|\mb{U,V}}|}, \\
I(\mb{Y};\mb{V})&=\frac{1}{2} \log \frac{|\mb{K}_{\mb{Y}} |}{|\mb{K}_{\mb{Y}|\mb{V}}|}.
\end{align*}
In this case, we have
\begin{align*}
\mathcal{R}_G(\mb{D}, \mu) &= \min_{(R_1,R_2) \in \mathcal{R}_G(\mb{D})} \mu R_1+R_2 \\
&=\min_{(\mb{U,V}) \in \mathcal{S}} \mu I(\mb{X};\mb{U|V}) +I(\mb{Y};\mb{V}) \\
&=\min_{\mb{K}_{\mb{Y}|\mb{V}},\mb{K}_{\mb{X}|\mb{U,V}}} \hspace {0.2in} \frac{\mu}{2} \log \frac{ \left |\mb{K}_{\mb{Y}|\mb{V}}+\mb{K_N} \right |}{ \left|\mb{K}_{\mb{X}|\mb{U,V}}\right|} + \frac{1}{2} \log \frac{ \left |\mb{K_Y} \right |}{ \left|\mb{K}_{\mb{Y}|\mb{V}}\right|}\nonumber\\
&\hspace{0.3in}\textrm{subject to} \hspace{0.3in} \mb{K}_{\mb{Y}} \succcurlyeq \mb{K}_{\mb{Y}|\mb{V}}  \succcurlyeq \mb{0},\\
&\hspace{1.13in}\mb{K}_{\mb{Y}|\mb{V}}+\mb{K_N} \succcurlyeq \mb{K}_{\mb{X}|\mb{U,V}} \succcurlyeq \mb{0},\hspace{0.1in} \textrm{and}\nonumber\\
&\hspace{1.13in}\mb{D} \succcurlyeq \mb{K}_{\mb{X}|\mb{U,V}}\\
&=v\left (P_{G1} \right).
\end{align*}
\\
\Large
{\textbf{Appendix B:\hspace{0.1in} \textbf{Proof of Lemma \ref{lem:ExistenceKKT}} }\newline
\normalsize \\
We will be using several results and terms from Bertsekas \emph{et al.} \cite{Bertsekas}. The book contains all of the background that these results need. The proof of the lemma is partially similar to that of Lemma 5 in \cite{Wein}. Let us first introduce some notation used in the proof. We use  vec$(\mb{A}_1,\mb{A}_2)$ to denote the column vector created by the concatenation of the columns of $m \times m$ matrices $\mb{A}_1$ and $\mb{A}_2.$ If  $\mb{a} = \textrm{vec}(\mb{A}_1,\mb{A}_2)$, then we use the notation mat$(\mb{a})$ to denote the inverse operation to get back the pair $(\mb{A}_1,\mb{A}_2),$ i.e.,
$$\textrm{mat}(\mb{a}) = (\mb{A}_1,\mb{A}_2).$$
The set of all column vectors created by the concatenation of the columns of $m \times m$ symmetric matrices $\mb{A}_1$ and $\mb{A}_2$ is denoted by $\mathcal{A}$, i.e.,
$$\mathcal{A} \triangleq \{ \textrm{vec}(\mb{A}_1,\mb{A}_2) : \mb{A}_i = \mb{A}_i^T \hspace{0.05in} \textrm{for all} \hspace{0.05in} i  \in \{1,2\}\}.$$
ri$(\mathcal{B})$ is used to denote the relative interior of the set $\mathcal{B}$. The sum of the two vector sets $\mathcal{V}_1$ and $\mathcal{V}_1$ is denoted by $\mathcal{V}_1+\mathcal{V}_2$ and is defined as $$\mathcal{V}_1+\mathcal{V}_2 \triangleq \{\mb{v}_1+\mb{v}_2 :  \mb{v}_i \in \mathcal{V}_i \hspace{0.05in} \textrm{for all}  \hspace{0.05in} i \in \{1,2\} \}.$$
We also need the following facts from linear algebra.
\begin{Lem} \label{lem:LinearAlgFact}
\begin{enumerate}
\item[(a)] If $\mb{E}$ is an $m \times n$  matrix and $\mb{F}$ is an $n \times m$ matrix, then $\textrm{\emph{Tr}} (\mb{EF})=\textrm{\emph{Tr}} (\mb{FE})$.
\item[(b)] If $\mb{E}$ and $\mb{F}$ are positive semidefinite, then $\mb{EF}=\mb{0}$ if and only if $\textrm{\emph{Tr}} (\mb{EF})={0}$.
\end{enumerate}
\end{Lem}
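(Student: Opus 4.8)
The plan is to dispatch the two parts independently, each by an elementary argument; neither requires machinery beyond the existence of symmetric positive semidefinite square roots.

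For part (a), I would simply write out both traces entrywise. With $\mb{E}$ of size $m \times n$ and $\mb{F}$ of size $n \times m$, one has $\textrm{Tr}(\mb{EF}) = \sum_{i=1}^{m}\sum_{j=1}^{n} E_{ij}F_{ji}$ and $\textrm{Tr}(\mb{FE}) = \sum_{j=1}^{n}\sum_{i=1}^{m} F_{ji}E_{ij}$; these finite double sums coincide because scalar multiplication commutes and the order of summation may be swapped. This is nothing but the usual cyclic property of the trace, so there is nothing to overcome.

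For part (b), the forward implication is immediate, since $\mb{EF} = \mb{0}$ gives $\textrm{Tr}(\mb{EF}) = \textrm{Tr}(\mb{0}) = 0$. For the converse, I would introduce the symmetric positive semidefinite square roots $\mb{E}^{1/2}$ and $\mb{F}^{1/2}$ of $\mb{E}$ and $\mb{F}$ and rewrite, using part (a),
\[
0 = \textrm{Tr}(\mb{EF}) = \textrm{Tr}(\mb{E}^{1/2}\mb{E}^{1/2}\mb{F}^{1/2}\mb{F}^{1/2}) = \textrm{Tr}( (\mb{F}^{1/2}\mb{E}^{1/2})^{T}(\mb{F}^{1/2}\mb{E}^{1/2}) ).
\]
Since $\textrm{Tr}(\mb{M}^{T}\mb{M})$ equals the sum of the squares of the entries of $\mb{M}$, this forces $\mb{F}^{1/2}\mb{E}^{1/2} = \mb{0}$; taking transposes gives $\mb{E}^{1/2}\mb{F}^{1/2} = \mb{0}$ as well, whence $\mb{EF} = \mb{E}^{1/2}(\mb{E}^{1/2}\mb{F}^{1/2})\mb{F}^{1/2} = \mb{0}$. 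The only point that needs a little care, and the closest thing to an obstacle, is that one should not try to argue directly with $\mb{E}$ and $\mb{F}$ but instead pass through their square roots so that the trace becomes a manifest sum of squares; this is entirely routine.
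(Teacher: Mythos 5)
Your proof is correct and uses essentially the same machinery the paper points to: the paper says part (b) "can be proved using the eigen decompositions of $\mb{E}$ and $\mb{F}$," and the symmetric PSD square roots $\mb{E}^{1/2}$, $\mb{F}^{1/2}$ that you invoke are exactly what the spectral theorem delivers, so routing the argument through $\textrm{Tr}\bigl((\mb{F}^{1/2}\mb{E}^{1/2})^T(\mb{F}^{1/2}\mb{E}^{1/2})\bigr) = 0$ is the same idea packaged slightly more cleanly. Part (a) is the standard entrywise computation, matching the paper's "follows immediately from the definition."
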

\begin{proof}
Part (a) immediately follows from the definition of $\textrm{Tr}(\cdot)$ function. Part (b) can be proved using the eigen decompositions of $\mb{E}$ and $\mb{F}$.
\end{proof}

We can express the problem $(P_{G2})$ as
\begin{align*}
\min_{\mb{b}} \hspace {0.2in} &h(\mb{b})\nonumber\\
\hspace{0.3in}\textrm{subject to} \hspace{0.2in} &\mb{b} \in \mathcal{B},
\end{align*}
where $\mb{b} \triangleq \textrm{vec}(\mb{B}_1,\mb{B}_2)$, $$h(\mb{b}) \triangleq \frac{\mu}{2} \log \frac{ \left |\mb{K_X}- \mb{B}_2 \right |}{ \left|\mb{K_X}- \mb{B_1}- \mb{B}_2\right|} + \frac{1}{2} \log \frac{ \left |\mb{K_Y} \right |}{ \left|\mb{K_Y}-\mb{B}_2\right|},$$ and
the feasible set $\mathcal{B}$ is written as
$$\mathcal{B} \triangleq \mathcal{B}_1 \cap \mathcal{B}_2 \cap \mathcal{B}_{12},$$ where for $i \in \{1,2\}$
$$\mathcal{B}_i \triangleq \{\textrm{vec}(\mb{B}_1,\mb{B}_2): \mb{B}_i \succcurlyeq \mb{0}\} \cap \mathcal{A}$$
and
$$\mathcal{B}_{12} \triangleq \{\textrm{vec}(\mb{B}_1,\mb{B}_2): \mb{B}_1+ \mb{B}_2 \succcurlyeq \mb{K_X}-\mb{D}\} \cap \mathcal{A}.$$ Since $h(\cdot)$ is continuously differentiable, it follows from \cite[Proposition 4.7.1, p. 255]{Bertsekas} that any local minima $\mb{b}^*$ must satisfy
\begin{align}
-\nabla h(\mb{b}^*) \in T_{\mathcal{B}} (\mb{b}^*)^*, \label{eq:nabla}
\end{align}
where $\nabla h(\mb{b}^*)$ is the gradient of $h(\cdot)$ at $\mb{b}^*,$ and $T_{\mathcal{B}} (\mb{b}^*)^*$ is the polar cone of the tangent cone $T_{\mathcal{B}} (\mb{b}^*)$ of $\mathcal{B}$ at $\mb{b}^*$. Now since $\mathcal{B}_i$ for all $i \in \{1,2\}$ and $\mathcal{B}_{12}$ are nonempty convex sets and $\textrm{ri}(\mathcal{B}_1) \cap \textrm{ri}(\mathcal{B}_2) \cap \textrm{ri}(\mathcal{B}_{12})$ is nonempty, it follows from \cite[Problem 4.23, p. 267]{Bertsekas} and \cite[Proposition 4.6.3, p. 254]{Bertsekas} that
\begin{align}
T_{\mathcal{B}} (\mb{b}^*)^* = T_{\mathcal{B}_1} (\mb{b}^*)^*+T_{\mathcal{B}_2} (\mb{b}^*)^*+T_{\mathcal{B}_{12}} (\mb{b}^*)^*. \label{eq:Tangent}
\end{align}
We next show that
\begin{align}
-\nabla h(\mb{b}^*) \in T_{\mathcal{B}_1} (\mb{b}^*)^* \cap \mathcal{A} + T_{\mathcal{B}_2} (\mb{b}^*)^* \cap \mathcal{A} + T_{\mathcal{B}_{12}} (\mb{b}^*)^* \cap \mathcal{A}. \label{eq:nabla1}
\end{align}
Note that $-\nabla h(\mb{b}^*)$ is a column concatenation of two $m \times m$ symmetric matrices. This together with (\ref{eq:nabla}) and (\ref{eq:Tangent}) yields
\begin{align}
-\nabla h(\mb{b}^*) = \mb{z}_1 + \mb{z}_2 + \mb{z}_{12} \in \mathcal{A}, \label{eq:nabla50}
\end{align}
where for $i \in \{1,2\}$
\begin{align*}
\mb{z}_i &\in  T_{\mathcal{B}_i} (\mb{b}^*)^*  \hspace{0.05in} \textrm{and} \\
\mb{z}_{12} &\in  T_{\mathcal{B}_{12}} (\mb{b}^*)^*.
\end{align*}
Let us now define
\begin{align*}
(\mb{K}_i,\mb{L}_i) &\triangleq \textrm{mat}(\mb{z}_i), \forall i \in \{1,2\} \hspace{0.05in}\textrm{and}\\
(\mb{K}_{12},\mb{L}_{12}) &\triangleq \textrm{mat}(\mb{z}_{12}).
\end{align*}
Using this, we define
\begin{align*}
\bar {\mb{z}}_i &\triangleq \textrm{vec}\left(\frac{1}{2}\left(\mb{K}_i+\mb{K}_i^T\right), \frac{1}{2}\left(\mb{L}_i+\mb{L}_i^T\right)\right), \forall i \in \{1,2\} \hspace{0.05in}\textrm{and}\\
\bar {\mb{z}}_{12} &\triangleq \textrm{vec}\left(\frac{1}{2}\left(\mb{K}_{12}+\mb{K}_{12}^T\right), \frac{1}{2}\left(\mb{L}_{12}+\mb{L}_{12}^T\right)\right).
\end{align*}
Since $\mathcal{B}_1$ is a nonempty convex set, it follows from \cite[Proposition 4.6.3, p. 254]{Bertsekas} that
\begin{align}
\mb{z}_1^T (\mb{b}-\mb{b}^*) \le 0, \hspace{0.1in} \forall \mb{b} \in \mathcal{B}_1. \label{eq:nabla52}
\end{align}
Consider any $\mb{b} \in \mathcal{B}_1$. Let
\[
(\mb{E}_1,\mb{F}_1) \triangleq \textrm{mat}(\mb{b}-\mb{b}^*).
\]
We now obtain
\begin{align}
\bar {\mb{z}}_1^T (\mb{b}-\mb{b}^*) &= \frac{1}{2} \textrm{Tr} \left(\left(\mb{K}_1+\mb{K}_1^T\right)\mb{E}_1 \right) + \frac{1}{2}\textrm{Tr} \left(\left(\mb{L}_1+\mb{L}_1^T\right)\mb{F}_1 \right) \nonumber\\
&= \textrm{Tr} \left(\mb{K}_1\mb{E}_1 \right) + \textrm{Tr} \left(\mb{L}_1\mb{F}_1 \right) \label{eq:nabla53}\\
&= {\mb{z}}_1^T (\mb{b}-\mb{b}^*) \nonumber\\
&\le 0, \label{eq:nabla54}
\end{align}
where
\begin{enumerate}
\item[(\ref{eq:nabla53})] follows because $\mb{E}_1$ and $\mb{F}_1$ are symmetric, and
\item[(\ref{eq:nabla54})] follows from (\ref{eq:nabla52}).
\end{enumerate}
By definition, $\bar {\mb{z}}_1 \in \mathcal{A}$. This and (\ref{eq:nabla54}) imply that
\begin{align}
\bar {\mb{z}}_1 \in T_{\mathcal{B}_1}(\mb{b}^*)^* \cap \mathcal{A}. \label{eq:nabla57}
\end{align}
We can similarly show that
\begin{align}
\bar {\mb{z}}_2 &\in T_{\mathcal{B}_2}(\mb{b}^*)^* \cap \mathcal{A} \hspace{0.05in} \textrm{and} \label{eq:nabla58}\\
\bar {\mb{z}}_{12} &\in T_{\mathcal{B}_{12}}(\mb{b}^*)^* \cap \mathcal{A}. \label{eq:nabla59}
\end{align}
Now
\begin{align}
&\bar {\mb{z}}_1 + \bar {\mb{z}}_2 + \bar {\mb{z}}_{12} \nonumber\\
&= \textrm{vec}\left(\frac{1}{2}\left(\mb{K}_1+\mb{K}_2+\mb{K}_{12}+\mb{K}_1^T+\mb{K}_2^T+\mb{K}_{12}^T\right), \frac{1}{2}\left(\mb{L}_1+\mb{L}_2+\mb{L}_{12}+\mb{L}_1^T+\mb{L}_2^T+\mb{L}_{12}^T\right)\right) \nonumber \\
&= \textrm{vec}\left( \left(\mb{K}_1+\mb{K}_2+\mb{K}_{12} \right), \left(\mb{L}_1+\mb{L}_2+\mb{L}_{12}\right)\right) \label{eq:nabla55} \\
&= {\mb{z}}_1 +  {\mb{z}}_2 + {\mb{z}}_{12} \nonumber \\
&= -\nabla h(\mb{b}^*), \label{eq:nabla56}
\end{align}
where
\begin{enumerate}
\item[(\ref{eq:nabla55})] follows because $\mb{K}_1+\mb{K}_2+\mb{K}_{12}$ and $\mb{L}_1+\mb{L}_2+\mb{L}_{12}$ are symmetric from (\ref{eq:nabla50}), and
\item[(\ref{eq:nabla56})] follows from the equality in (\ref{eq:nabla50}).
\end{enumerate}
This together with (\ref{eq:nabla57}) -- (\ref{eq:nabla59}) implies (\ref{eq:nabla1}). 

We now proceed to characterize the right-hand side of (\ref{eq:nabla1}). Consider any $\mb{z} \in T_{\mathcal{B}_1} (\mb{b}^*)^* \cap \mathcal{A}.$  It again follows from \cite[Proposition 4.6.3, p. 254]{Bertsekas} that
\begin{align}
\mb{z}^T (\mb{b}-\mb{b}^*) \le 0, \hspace{0.1in} \forall \mb{b} \in \mathcal{B}_1. \label{eq:Tangent1}
\end{align}
Let us define
\begin{align*}
(\mb{M}_1,\mb{M}_2) &\triangleq \textrm{mat}(\mb{z}), \\
(\mb{B}_1,\mb{B}_2) &\triangleq \textrm{mat}(\mb{b}), \hspace{0.05in}\textrm{and}\\
(\mb{B}_1^*,\mb{B}_2^*) &\triangleq \textrm{mat}(\mb{b}^*).
\end{align*}
Then (\ref{eq:Tangent1}) can be re-written as
\begin{align}
\sum_{i=1}^2\textrm{Tr}(\mb{M}_i(\mb{B}_i-\mb{B}_i^*)) \le 0, \hspace{0.1in} \forall \hspace{0.02in}  \textrm{vec}(\mb{B}_1,\mb{B}_2) \in \mathcal{B}_1. \label{eq:Tangent2}
\end{align}
We first show that $\mb{M}_2 = \mb{0}$. Let us pick $(\mb{B}_1,\mb{B}_2) = (\mb{B}_1^*,\mb{B}_2^*+\mb{M}_2)$. This means that
$$\textrm{Tr}(\mb{M}_2\mb{M}_2) \le 0,$$
which implies that $\mb{M}_2 = \mb{0}$ because $\mb{M}_2$ is symmetric. We next prove that $\mb{M}_1$ is negative semidefinite. Suppose there exists $\mb{w} \neq \mb{0}$ such that $\mb{w}^T \mb{M}_1 \mb{w} > 0.$ We then have
\[
0 < \mb{w}^T \mb{M}_1 \mb{w}  = \textrm{Tr}(\mb{w}^T \mb{M}_1 \mb{w}) = \textrm{Tr}( \mb{M}_1 \mb{w} \mb{w}^T),
\]
where the last equality follows from Lemma \ref{lem:LinearAlgFact}(a). But this contradicts (\ref{eq:Tangent2}) because $\textrm{vec}(\mb{B}_1^*+\mb{w} \mb{w}^T ,\mb{B}_2^*) \in \mathcal{B}_1,$ and hence $\mb{M}_1\preccurlyeq \mb{0}.$ We finally show that $\mb{M}_1\mb{B}_1^* = \mb{0}.$ Let $(\mb{B}_1,\mb{B}_2) = (\alpha \mb{B}_1^*,\mb{B}_2^*),$ where $\alpha > 1.$ Then (\ref{eq:Tangent2}) implies that
\[
\textrm{Tr}(\mb{M}_1\mb{B}_1^*) \le 0.
\]
Likewise, on picking $0 < \alpha < 1$, we obtain
\[
\textrm{Tr}(\mb{M}_1\mb{B}_1^*) \ge 0.
\]
Both together establish
\[
\textrm{Tr}(\mb{M}_1\mb{B}_1^*) = 0,
\]
which together with Lemma \ref{lem:LinearAlgFact}(b) implies that
\[
\mb{M}_1\mb{B}_1^*  = \mb{0}
\]
because $-\mb{M}_1$ and $\mb{B}_1^*$ are positive semidefinite. We therefore have that
\begin{align}
T_{\mathcal{B}_1} (\mb{b}^*)^* \cap \mathcal{A} \subseteq \{ \textrm{vec}(\mb{M}_1, \mb{0}) : \mb{M}_1 \preccurlyeq \mb{0} \hspace{0.05in}\textrm{and} \hspace{0.05in} \mb{M}_1\mb{B}_1^* = \mb{0}\}. \label{eq:Tangent4}
\end{align}
Similarly, we can show that
\begin{align}
T_{\mathcal{B}_2} (\mb{b}^*)^* \cap \mathcal{A} \subseteq \{ \textrm{vec}(\mb{0}, \mb{M}_2) : \mb{M}_2 \preccurlyeq \mb{0} \hspace{0.05in}\textrm{and} \hspace{0.05in} \mb{M}_2\mb{B}_2^* = \mb{0}\}. \label{eq:Tangent5}
\end{align}

Consider any $\mb{z} \in T_{\mathcal{B}_{12}} (\mb{b}^*)^* \cap \mathcal{A}.$ As before, we obtain
\begin{align}
\sum_{i=1}^2\textrm{Tr}(\mb{\Lambda}_i(\mb{B}_i-\mb{B}_i^*)) \le 0, \hspace{0.1in} \forall \hspace{0.02in}  \textrm{vec}(\mb{B}_1,\mb{B}_2) \in \mathcal{B}_{12}, \label{eq:Tangent3}
\end{align}
where
\begin{align*}
(\mb{\Lambda}_1,\mb{\Lambda}_2) &\triangleq \textrm{mat}(\mb{z}).
\end{align*}
On picking $(\mb{B}_1,\mb{B}_2) = (\mb{B}_1^*+\mb{\Lambda}_1,\mb{B}_2^*-\mb{\Lambda}_1),$ (\ref{eq:Tangent3}) yields
\[
\textrm{Tr}(\mb{\Lambda}_1\mb{\Lambda}_1) - \textrm{Tr}(\mb{\Lambda}_2\mb{\Lambda}_1) \le 0.
\]
Similarly, picking $(\mb{B}_1,\mb{B}_2) = (\mb{B}_1^*-\mb{\Lambda}_2,\mb{B}_2^*+\mb{\Lambda}_2)$ gives
\[
\textrm{Tr}(\mb{\Lambda}_2\mb{\Lambda}_2) - \textrm{Tr}(\mb{\Lambda}_1\mb{\Lambda}_2) \le 0.
\]
Both together imply that
\[
\textrm{Tr}((\mb{\Lambda}_1-\mb{\Lambda}_2) (\mb{\Lambda}_1-\mb{\Lambda}_2) ) \le 0,
\]
and therefore
\[
\mb{\Lambda}_1-\mb{\Lambda}_2 = \mb{0},
\]
because $\mb{\Lambda}_1$ and $\mb{\Lambda}_2$ are symmetric. Let us denote $\mb{\Lambda}_1$ and $\mb{\Lambda}_2$ by $\mb{\Lambda}$. As before, we can show that $\mb{\Lambda} \preccurlyeq \mb{0}$. We next prove that
\[
\textrm{Tr}(\mb{\Lambda}(\mb{B}_1^*+\mb{B}_2^*-\mb{K_X}+\mb{D})) = 0.
\]
Observe that $(\mb{B}_1,\mb{B}_2) = \bigr(\alpha (\mb{B}_1^*+\mb{B}_2^*-\mb{K_X}+\mb{D})+\mb{K_X}-\mb{D}-\mb{B}_2^*, \mb{B}_2^*\bigr),$ where $\alpha > 0,$ is a valid choice of $(\mb{B}_1,\mb{B}_2)$ in (\ref{eq:Tangent3}). For $\alpha > 1$, this implies
\[
\textrm{Tr}(\mb{\Lambda}(\mb{B}_1^*+\mb{B}_2^*-\mb{K_X}+\mb{D})) \le 0,
\]
and for $0 <\alpha < 1$, it gives
\[
\textrm{Tr}(\mb{\Lambda}(\mb{B}_1^*+\mb{B}_2^*-\mb{K_X}+\mb{D})) \ge 0.
\]
Therefore
\[
\textrm{Tr}(\mb{\Lambda}(\mb{B}_1^*+\mb{B}_2^*-\mb{K_X}+\mb{D})) = 0.
\]
This and Lemma \ref{lem:LinearAlgFact}(b) imply that
\[
\mb{\Lambda}(\mb{B}_1^*+\mb{B}_2^*-\mb{K_X}+\mb{D}) = \mb{0}.
\]
We thus have that
\begin{align}
T_{\mathcal{B}_{12}} (\mb{b}^*)^* \cap \mathcal{A} \subseteq \{ \textrm{vec}(\mb{\Lambda}, \mb{\Lambda}) | \mb{\Lambda} \preccurlyeq \mb{0} \hspace{0.05in}\textrm{and} \hspace{0.05in} \mb{\Lambda}(\mb{B}_1^*+\mb{B}_2^*-\mb{K_X}+\mb{D}) = \mb{0}\}. \label{eq:Tangent6}
\end{align}
It now follows from (\ref{eq:nabla1}), (\ref{eq:Tangent4}), (\ref{eq:Tangent5}), and (\ref{eq:Tangent6}) that $\nabla h(\mb{b}^*)$ can be written as
\begin{align*}
\nabla h(\mb{b}^*) = \textrm{vec}\left(\mb{M}_1+\mb{\Lambda},\mb{M}_2+\mb{\Lambda}\right)
\end{align*}
for some $\mb{M}_1, \mb{M}_2,$ and $\mb{\Lambda}$ such that
\begin{align*}
\mb{M}_i \mb{B}^{*}_i &= \mb{0}, \hspace{0.1in} \textrm{for all} \hspace{0.1in} i \in \{1,2\}\\
\mb{\Lambda} (\mb{B}^{*}_1+\mb{B}^{*}_2 -\mb{K_X}+ \mb{D}) &= \mb{0}, \hspace {0.1in} \textrm{and}\\
\mb{M}_1, \mb{M}_2,\mb{\Lambda}  &\succcurlyeq \mb{0}.
\end{align*}
Lemma \ref{lem:ExistenceKKT} now follows because
\begin{align*}
\nabla h(\mb{b}^*) = \textrm{vec}\left( \frac{\mu}{2}(\mb{K_X}-\mb{B}^{*}_1-\mb{B}^{*}_2)^{-1}, \frac{\mu}{2}(\mb{K_X}-\mb{B}^{*}_1-\mb{B}^{*}_2)^{-1} -\frac{\mu}{2}(\mb{K_X}-\mb{B}^{*}_2)^{-1}+\frac{1}{2}(\mb{K_Y}-\mb{B}^{*}_2)^{-1} \right).
\end{align*}
\\
\Large
{\textbf{Appendix C:\hspace{0.1in} \textbf{Proof of Lemma \ref{lem:PosSemiDefDelta}} }\newline
\normalsize \\
Using (\ref{eq:Delta}), we obtain
\begin{align*}
\mb{\Delta}^{*} &= \frac{\mu}{2}(\mb{K_X}-\mb{B}^{*}_2)^{-1}- \mb{M}_1^{*}\\
&= (\mb{K_X}-\mb{B}^{*}_2)^{-1} \left[\frac{\mu}{2}(\mb{K_X}-\mb{B}^{*}_2)- (\mb{K_X}-\mb{B}^{*}_2)\mb{M}_1^{*}(\mb{K_X}-\mb{B}^{*}_2)\right](\mb{K_X}-\mb{B}^{*}_2)^{-1}.
\end{align*}
It is hence sufficient to show that
\begin{align*}
\frac{\mu}{2}(\mb{K_X}-\mb{B}^{*}_2)- (\mb{K_X}-\mb{B}^{*}_2)\mb{M}_1^{*}(\mb{K_X}-\mb{B}^{*}_2)
\end{align*}
is positive semidefinite. On pre- and post-multiplying (\ref{eq:KKT1}) by $\mb{K_X}-\mb{B}^{*}_1-\mb{B}^{*}_2$, we obtain
\begin{align}
\frac{\mu}{2}(\mb{K_X}-\mb{B}^{*}_1-\mb{B}^{*}_2)- (\mb{K_X}-\mb{B}^{*}_1-\mb{B}^{*}_2)(\mb{M}_1^{*}+\mb{\Lambda}^{*})(\mb{K_X}-\mb{B}^{*}_1-\mb{B}^{*}_2)=\mb{0}.\label{eq:AppenC1}
\end{align}
Using (\ref{eq:KKT3}) and (\ref{eq:KKT4}), we have
\begin{align}
(\mb{K_X}-\mb{B}^{*}_1-\mb{B}^{*}_2)\mb{M}_1^{*}(\mb{K_X}-\mb{B}^{*}_1-\mb{B}^{*}_2)&=(\mb{K_X}-\mb{B}^{*}_2)\mb{M}_1^{*}(\mb{K_X}-\mb{B}^{*}_2) \hspace{0.1in} \textrm{and} \label{eq:AppenC2}\\
(\mb{K_X}-\mb{B}^{*}_1-\mb{B}^{*}_2)\mb{\Lambda}^{*}(\mb{K_X}-\mb{B}^{*}_1-\mb{B}^{*}_2)&=\mb{D}\mb{\Lambda}^{*}\mb{D}.\label{eq:AppenC3}
\end{align}
Now (\ref{eq:AppenC1}) through (\ref{eq:AppenC3}) together imply that
\begin{align*}
\frac{\mu}{2}(\mb{K_X}-\mb{B}^{*}_2)- (\mb{K_X}-\mb{B}^{*}_2)\mb{M}_1^{*}(\mb{K_X}-\mb{B}^{*}_2)=\frac{\mu}{2}\mb{B}^{*}_1+\mb{D}\mb{\Lambda}^{*}\mb{D},
\end{align*}
which is a positive semidefinite matrix.

We next show that $\mb{\Delta}^{*}$ is nonzero. Suppose otherwise that $$\mb{\Delta}^{*}=\mb{0}.$$ This together with (\ref{eq:Delta}) implies that
\begin{align*}
\mb{M}_1^{*}&= \frac{\mu}{2}(\mb{K_X}-\mb{B}^{*}_2)^{-1} \succ \mb{0} \hspace{0.1in} \textrm{and}\\
\mb{M}_2^{*}&=\frac{1}{2}(\mb{K_Y}-\mb{B}^{*}_2)^{-1} \succ \mb{0},
\end{align*}
i.e., $\mb{M}_1^{*}$ and $\mb{M}_2^{*}$ are positive definite. It now follows from (\ref{eq:KKT3}) that
\[
\mb{B}^{*}_1=\mb{B}^{*}_2=\mb{0},
\]
which is a contradiction because $(\mb{0},\mb{0})$ is not feasible for the optimization problem $\left (P_{G2} \right)$ by~(\ref{eq:assumption}).\\ \\
\Large
{\textbf{Appendix D: \hspace{0.1in} \textbf{Proof of Lemma \ref{lem:SrcEnh1}} }\newline
\normalsize \\
It is clear by definition that $\tilde {\mb{B}}^{*}_1, \tilde {\mb{B}}^{*}_2, \tilde {\mb{M}}_1^{*},$ and $\tilde {\mb{M}}_2^{*}$ are positive semidefinite matrices. To prove (\ref{eq:SrcEnh1}), we use the first equality in (\ref{eq:Delta}) and obtain
\begin{align}
\mb{SS}^T&=\mb{\Delta}^{*}\nonumber\\
&=\frac{\mu}{2} \left(\mb{K_X}-\mb{B}^{*}_2 \right)^{-1} - \mb{M}_1^{*}\nonumber\\
&=\frac{\mu}{2} [\mb{S,T}]\Bigr([\mb{S,T}]^T\left(\mb{K_X}-\mb{B}^{*}_2 \right)[\mb{S,T}]\Bigr)^{-1}[\mb{S,T}]^T - \mb{M}_1^{*} \label{eq:AppenD1}\\
&=\frac{\mu}{2} [\mb{S,T}]\mat{\mb{S}^T \left(\mb{K_X}-\mb{B}^{*}_2\right) \mb{S} & \mb{0} \\ \mb{0} & \mb{T}^T \left(\mb{K_X}-\mb{B}^{*}_2\right) \mb{T}}^{-1}[\mb{S,T}]^T - \mb{M}_1^{*} \label{eq:AppenD2}\\
&=\frac{\mu}{2} [\mb{S,T}]\mat{\left(\mb{S}^T \left(\mb{K_X}-\mb{B}^{*}_2\right) \mb{S}\right)^{-1} & \mb{0} \\ \mb{0} & \left(\mb{T}^T \left(\mb{K_X}-\mb{B}^{*}_2\right) \mb{T}\right)^{-1}}[\mb{S,T}]^T - \mb{M}_1^{*}\nonumber\\
&=\frac{\mu}{2} \mb{S}\left(\mb{S}^T \left(\mb{K_X}-\mb{B}^{*}_2\right)\mb{S}\right)^{-1} \mb{S}^T + \frac{\mu}{2} \mb{T} \left(\mb{T}^T \left(\mb{K_X}-\mb{B}^{*}_2\right) \mb{T}\right)^{-1}\mb{T}^T - \mb{M}_1^{*},\label{eq:AppenD3}
\end{align}
where
\begin{enumerate}
\item[(\ref{eq:AppenD1})] follows because $[\mb{S,T}]$ is invertible, and
\item[(\ref{eq:AppenD2})] follows because $\mb{S}$ and $\mb{T}$ are cross $\left(\mb{K_X}-\mb{B}^{*}_2\right)$-orthogonal.
\end{enumerate}
On pre- and post-multiplying (\ref{eq:AppenD3}) by $\mb{S}^T \left(\mb{K_X}-\mb{B}^{*}_2\right)$ and $\left(\mb{K_X}-\mb{B}^{*}_2\right)\mb{S}$, respectively, and again using the fact that $\mb{S}$ and $\mb{T}$ are cross $\left(\mb{K_X}-\mb{B}^{*}_2\right)$-orthogonal, we obtain
\begin{align*}
\left(\mb{S}^T \left(\mb{K_X}-\mb{B}^{*}_2\right)\mb{S}\right)\left(\mb{S}^T \left(\mb{K_X}-\mb{B}^{*}_2\right)\mb{S}\right) &=\frac{\mu}{2} \left(\mb{S}^T \left(\mb{K_X}-\mb{B}^{*}_2\right)\mb{S}\right)  - \mb{S}^T \left(\mb{K_X}-\mb{B}^{*}_2\right)\mb{M}_1^{*} \left(\mb{K_X}-\mb{B}^{*}_2\right)\mb{S},
\end{align*}
which is equivalent to
\begin{align}
\mb{I}_r &=\frac{\mu}{2} \left(\mb{S}^T \left(\mb{K_X}-\mb{B}^{*}_2\right)\mb{S}\right)^{-1} - \left(\mb{S}^T\left(\mb{K_X}-\mb{B}^{*}_2\right)\mb{S}\right)^{-1}\mb{S}^T \left(\mb{K_X}-\mb{B}^{*}_2\right)\mb{M}_1^{*}\left(\mb{K_X}-\mb{B}^{*}_2\right)\mb{S}\left(\mb{S}^T\left(\mb{K_X}-\mb{B}^{*}_2\right)\mb{S}\right)^{-1}.\label{eq:AppenD4}
\end{align}
Similarly, using the second equality in (\ref{eq:Delta}) together with the facts that $[\mb{S,W}]$ is invertible and $\mb{S}$ and $\mb{W}$ are cross $\left(\mb{K_Y}-\mb{B}^{*}_2\right)$-orthogonal, we obtain
\begin{align}
\mb{I}_r &=\frac{1}{2} \left(\mb{S}^T \left(\mb{K_Y}-\mb{B}^{*}_2\right)\mb{S}\right)^{-1} - \left(\mb{S}^T\left(\mb{K_Y}-\mb{B}^{*}_2\right)\mb{S}\right)^{-1}\mb{S}^T \left(\mb{K_Y}-\mb{B}^{*}_2\right)\mb{M}_2^{*}\left(\mb{K_Y}-\mb{B}^{*}_2\right)\mb{S}\left(\mb{S}^T\left(\mb{K_Y}-\mb{B}^{*}_2\right)\mb{S}\right)^{-1}.\label{eq:AppenD5}
\end{align}
Now (\ref{eq:AppenD4}) and (\ref{eq:AppenD5}) together can be written as
\begin{align}
\mb{I}_r &=\frac{\mu}{2} \bigr(\mb{K}_{\tilde{\mb{X}}}-\tilde{\mb{B}}^{*}_2\bigr)^{-1} - \tilde {\mb{M}}_1^{*}=\frac{1}{2} \bigr( \mb{K}_{\tilde{\mb{Y}}}-\tilde{\mb{B}}^{*}_2\bigr)^{-1} - \tilde {\mb{M}}_2^{*}. \label{eq:AppenD6}
\end{align}
This proves (\ref{eq:SrcEnh1}).

To prove (\ref{eq:SrcEnh2}), we have from (\ref{eq:KKT3}) and (\ref{eq:EigDecom2}) that
\[
\mb{B}^{*}_1 \mb{a}_i=\mb{0},
\]
for all $i$ in $\{1,2,\dots,p\}$. Since the columns of $\mb{T}$ are in $\textrm{span}\{\mb{a}_i\}_{i=1}^p$, we have
\[
\mb{B}^{*}_1 \mb{T} =\mb{0}.
\]
This and (\ref{eq:KKT3}) together imply
\[
\mb{B}^{*}_1 \left(\mb{M}^{*}_1- \frac{\mu}{2} \mb{T} \left(\mb{T}^T \left(\mb{K_X}-\mb{B}^{*}_2\right) \mb{T}\right)^{-1}\mb{T}^T \right) =\mb{0}.
\]
We now use (\ref{eq:AppenD3}) and obtain
\[
\mb{B}^{*}_1 \left(\frac{\mu}{2} \mb{S}\left(\mb{S}^T \left(\mb{K_X}-\mb{B}^{*}_2\right)\mb{S}\right)^{-1} \mb{S}^T  - \mb{SS}^T \right) =\mb{0},
\]
which can be re-written as
\[
\mb{B}^{*}_1 \mb{S}\left(\frac{\mu}{2} \bigr(\mb{K}_{\tilde{\mb{X}}}-\tilde{\mb{B}}^{*}_2\bigr)^{-1} - \mb{I}_r\right)\mb{S}^T =\mb{0}.
\]
Using the first equality in (\ref{eq:AppenD6}) yields
\[
\mb{B}^{*}_1 \mb{S} \tilde{\mb{M}}_1^{*} \mb{S}^T =\mb{0}.
\]
We next invoke Lemma \ref{lem:LinearAlgFact}(b) to obtain
\[
\textrm{Tr} \bigr(\mb{B}^{*}_1 \mb{S} \tilde{\mb{M}}_1^{*} \mb{S}^T \bigr) ={0}.
\]
Using Lemma \ref{lem:LinearAlgFact}(a) gives
\[
\textrm{Tr} \bigr( \mb{S}^T \mb{B}^{*}_1 \mb{S} \tilde{\mb{M}}_1^{*}\bigr) ={0},
\]
which is equivalent to
\[
\textrm{Tr} \bigr(\tilde{\mb{B}}_1^{*}\tilde{\mb{M}}_1^{*}\bigr)={0}.
\]
Since $\tilde{\mb{B}}_1^{*}$ and $\tilde{\mb{M}}_1^{*}$ are positive semidefinite, by invoking Lemma \ref{lem:LinearAlgFact}(b) again, we obtain
\[
\tilde{\mb{B}}_1^{*}\tilde{\mb{M}}_1^{*}=\mb{0}.
\]
The proof of
\[
\tilde{\mb{B}}_2^{*}\tilde{\mb{M}}_2^{*}=\mb{0}.
\]
is exactly similar. This proves (\ref{eq:SrcEnh2}). The proof of (\ref{eq:SrcEnh3}) is immediate from (\ref{eq:ActiveDistConst}). \\ \\
\Large
{\textbf{Appendix E: \hspace{0.1in} \textbf{Proof of Lemma \ref{lem:SrcEnh2}} }\newline
\normalsize \\
The proofs of (\ref{eq:SrcEnh6}) and (\ref{eq:SrcEnh7}) are easy. They follow from (\ref{eq:SrcEnh1}), (\ref{eq:SrcEnh4}), and (\ref{eq:SrcEnh5}). Since $\mu > 1$, (\ref{eq:SrcEnh6}) and (\ref{eq:SrcEnh7}) imply that $$\mb{K}_{\hat{\mb{X}}} \succ \mb{K}_{\hat{\mb{Y}}}.$$ $\mb{K}_{\tilde{\mb{X}}}$ and $\mb{K}_{\tilde{\mb{Y}}}$ are positive definite by definition. Since $\tilde {\mb{M}}_1^{*}$ and $\tilde {\mb{M}}_2^{*}$ are positive semidefinite,
\begin{align*}
\mb{K}_{\hat{\mb{X}}} &\succcurlyeq \mb{K}_{\tilde{\mb{X}}} \hspace{0.1in} \textrm{and} \\
\mb{K}_{\hat{\mb{Y}}} &\succcurlyeq \mb{K}_{\tilde{\mb{Y}}}
\end{align*}
follow from (\ref{eq:SrcEnh4}) and (\ref{eq:SrcEnh5}), respectively. This proves (\ref{eq:SrcEnh8}) and (\ref{eq:SrcEnh9}). To prove (\ref{eq:SrcEnh10}), we have
\begin{align}
\frac{\bigr|\mb{K}_{\tilde{\mb{Y}}}\bigr|}{\bigr|\mb{K}_{\tilde{\mb{Y}}}-\tilde{\mb{B}}^{*}_2\bigr|} &=\frac{\bigr|\mb{K}_{\tilde{\mb{Y}}}-\tilde{\mb{B}}^{*}_2+\tilde{\mb{B}}^{*}_2\bigr|}{\bigr|\mb{K}_{\tilde{\mb{Y}}}-\tilde{\mb{B}}^{*}_2\bigr|}\nonumber\\
&=\frac{\bigr|\mb{I}_r+\tilde{\mb{B}}^{*}_2\bigr(\mb{K}_{\tilde{\mb{Y}}}-\tilde{\mb{B}}^{*}_2\bigr)^{-1}\bigr|}{\bigr|\mb{I}_r\bigr|}\nonumber\\
&=\frac{\bigr|\mb{I}_r+\tilde{\mb{B}}^{*}_2\bigr[\bigr(\mb{K}_{\tilde{\mb{Y}}}-\tilde{\mb{B}}^{*}_2\bigr)^{-1}-2\tilde {\mb{M}}_2^{*}\bigr]\bigr|}{\bigr|\mb{I}_r\bigr|} \label{eq:AppenE1}\\
&=\frac{\bigr|\mb{I}_r+\tilde{\mb{B}}^{*}_2\bigr(\mb{K}_{\hat{\mb{Y}}}-\tilde{\mb{B}}^{*}_2\bigr)^{-1}\bigr|}{\bigr|\mb{I}_r\bigr|} \label{eq:AppenE2}\\
&= \frac{\bigr|\mb{K}_{\hat{\mb{Y}}}\bigr|}{\bigr|\mb{K}_{\hat{\mb{Y}}}-\tilde{\mb{B}}^{*}_2\bigr|},\nonumber
\end{align}
where
\begin{enumerate}
\item[(\ref{eq:AppenE1})] follows from (\ref{eq:SrcEnh2}), and
\item[(\ref{eq:AppenE2})] follows from (\ref{eq:SrcEnh5}).
\end{enumerate}
To prove (\ref{eq:SrcEnh11}), we proceed similarly and obtain
\begin{align}
\frac{\bigr|\mb{K}_{\tilde{\mb{X}}}-\tilde{\mb{B}}^{*}_2\bigr|}{\bigr|\mb{K}_{\tilde{\mb{X}}}-\tilde{\mb{B}}^{*}_1-\tilde{\mb{B}}^{*}_2\bigr|}
&=\frac{\bigr|\mb{I}_r\bigr|}{\bigr|\mb{I}_r - \tilde{\mb{B}}^{*}_1 \bigr(\mb{K}_{\tilde{\mb{X}}}-\tilde{\mb{B}}^{*}_2\bigr)^{-1}\bigr|} \nonumber\\
&=\frac{\bigr|\mb{I}_r\bigr|}{\bigr|\mb{I}_r - \tilde{\mb{B}}^{*}_1 \bigr[\bigr(\mb{K}_{\tilde{\mb{X}}}-\tilde{\mb{B}}^{*}_2\bigr)^{-1}-\frac{2}{\mu}\tilde {\mb{M}}_1^{*}\bigr]\bigr|} \label{eq:AppenE3}\\
&=\frac{\bigr|\mb{I}_r\bigr|}{\bigr|\mb{I}_r - \tilde{\mb{B}}^{*}_1 \bigr(\mb{K}_{\hat{\mb{X}}}-\tilde{\mb{B}}^{*}_2\bigr)^{-1}\bigr|}\label{eq:AppenE4}\\
&=\frac{\bigr|\mb{K}_{\hat{\mb{X}}}-\tilde{\mb{B}}^{*}_2\bigr|}{\bigr|\mb{K}_{\hat{\mb{X}}}-\tilde{\mb{B}}^{*}_1-\tilde{\mb{B}}^{*}_2\bigr|},\nonumber
\end{align}
where
\begin{enumerate}
\item[(\ref{eq:AppenE3})] follows from (\ref{eq:SrcEnh2}), and
\item[(\ref{eq:AppenE4})] follows from (\ref{eq:SrcEnh4}). \\
\end{enumerate}
\Large
{\textbf{Appendix F:}\hspace{0.1in} \textbf{Proof of Lemma \ref{lem:Oohama1}}}\newline
\normalsize \\
We have
\begin{align}
h\bigr (\hat{\mb{X}} | \mb{U,V} \bigr ) &\le \frac{1}{2} \log \bigr ( \left (2 \pi e \right)^r \bigr |\mb{K}_{\hat{\mb{X}}|\mb{U,V}} \bigr| \bigr ) \label{eq:AppenF1}\\
&\le \frac{1}{2} \log \bigr ( \left (2 \pi e \right)^r \bigr |\hat{\mb{D}} \bigr| \bigr ),\label{eq:AppenF2}
\end{align}
where
\begin{enumerate}
\item[(\ref{eq:AppenF1})] follows from the fact the Gaussian distribution maximizes the differential entropy for a given covariance matrix \cite[Theorem 8.6.5]{Cover}, and
\item[(\ref{eq:AppenF2})] follows from the distortion constraint in the definition of $(\hat P_1)$ and the concavity of $\log |\cdot|$ function.
\end{enumerate}
Inequalities (\ref{eq:AppenF1}) and (\ref{eq:AppenF2}) are equalities if $\hat {\mb{X}}, \mb{U}$, and $\mb{V}$ are jointly Gaussian with the conditional covariance matrix $\mb{K}_{\hat{\mb{X}}|\mb{U,V}}$ such that
\begin{align}
\mb{K}_{\hat{\mb{X}}|\mb{U,V}} = \hat {\mb{D}} = \mb{K}_{\hat{\mb{X}}}-\tilde{\mb{B}}_1^{*}-\tilde{\mb{B}}_1^{*}, \label{eq:AppenF3}
\end{align}
where the last equality follows from (\ref{eq:SrcEnh12}). We thus conclude that a Gaussian $(\mb{U,V})$ with the conditional covariance matrix satisfying (\ref{eq:AppenF3}) is optimal for the subproblem $(\hat P_1)$, and the optimal value is
\begin{align}
v(\hat P_1) &= \mu h\bigr (\hat{\mb{X}} \bigr ) - \frac{\mu}{2} \log \bigr ( \bigr (2 \pi e \bigr)^r \bigr |\hat{\mb{D}} \bigr| \bigr ) \nonumber\\
&= \frac{\mu}{2} \log \bigr ( \bigr (2 \pi e \bigr)^r \bigr |\mb{K}_{\hat{\mb{X}}}\bigr | \bigr ) - \frac{\mu}{2} \log \bigr ( \bigr (2 \pi e \bigr)^r \bigr |\hat{\mb{D}} \bigr| \bigr ) \nonumber\\
&= \frac{\mu}{2} \log \frac{\bigr | \mb{K}_{\hat{\mb{X}}} \bigr |}{\bigr | \hat{\mb{D}}  \bigr |} \nonumber.
\end{align}
\Large
{\textbf{Appendix G:}\hspace{0.1in} \textbf{Proof of Lemma \ref{lem:Oohama2}}}\newline
\normalsize \\
Since conditioned on $\mb{V}$, $\hat{\mb{Y}}$ and $\hat{\mb{N}}$ are independent, we use the vector EPI \cite[Theorem 17.7.3]{Cover} to obtain
\begin{align}
h(\hat{\mb{Y}}|\mb{V}) - \mu h(\hat{\mb{X}}|\mb{V}) &= h(\hat{\mb{Y}}|\mb{V}) - \mu h(\hat{\mb{Y}}+\hat{\mb{N}}|\mb{V}) \nonumber\\
&\le h(\hat{\mb{Y}}|\mb{V}) - \frac{\mu r}{2} \log \bigr (2^{\frac{2}{r} h(\hat{\mb{Y}}|\mb{V})} + 2^{\frac{2}{m} h(\hat{\mb{N}})} \bigr ).\label{eq:AppenG1}
\end{align}
The inequality (\ref{eq:AppenG1}) is equality if $\hat{\mb{Y}}$ and $\mb{V}$ are jointly Gaussian and the conditioned covariance matrix $$\mb{K}_{\hat {\mb{Y}}|\mb{V}} = a\mb{K}_{\hat {\mb{N}}},$$ for some constant $a > 0$. By following standard calculus arguments, we can show that for $\mu > 1$ the right-hand side of (\ref{eq:AppenG1}) is concave in $h(\hat{\mb{Y}}|\mb{V})$ and has a global maximum at
\begin{align}
h(\hat{\mb{Y}}|\mb{V}) = h(\hat{\mb{N}}) - \frac{r}{2} \log(\mu -1).\label{eq:AppenG2}
\end{align}
Let $\mb{V}_G$ and $\hat{\mb{Y}}$ be jointly Gaussian such that the conditional covariance matrix of $\hat{\mb{Y}}$ given $\mb{V}_G$ is
\begin{align*}
\mb{K}_{\hat{\mb{Y}}|\mb{V}_G}=\mb{K}_{\hat{\mb{Y}}}-\tilde{\mb{B}}_2^{*}.
\end{align*}
We next show that this $\mb{V}_G$ achieves equality in (\ref{eq:AppenG1}) and satisfies (\ref{eq:AppenG2}) simultaneously. We have from (\ref{eq:SrcEnh6}) and (\ref{eq:SrcEnh7}) that
\begin{align}
\mb{K}_{\hat{\mb{Y}}}-\tilde{\mb{B}}^{*}_2 = (\mu -1)^{-1} \mb{K}_{\hat {\mb{N}}}, \label{eq:AppenG3}
\end{align}
i.e., the conditional covariance matrix $\mb{K}_{\hat{\mb{Y}}|\mb{V}_G}$ is proportional to $\mb{K}_{\hat {\mb{N}}}$. Hence, (\ref{eq:AppenG1}) is satisfied with equality. Moreover, for this $\mb{V}_G$, (\ref{eq:AppenG2}) and (\ref{eq:AppenG3}) are equivalent. Therefore,
\begin{align*}
h(\hat{\mb{Y}}|\mb{V}) - \mu h(\hat{\mb{X}}|\mb{V}) \le \frac{1}{2} \log \bigr ( \left (2 \pi e \right)^r \bigr |\mb{K}_{\hat{\mb{Y}}}-\tilde{\mb{B}}_2^{*} \bigr| \bigr )- \frac{\mu}{2}\log \bigr ( \left (2 \pi e \right)^r \bigr |\mb{K}_{\hat{\mb{X}}}-\tilde{\mb{B}}_2^{*} \bigr| \bigr ).
\end{align*}
We thus conclude that $\mb{V}_G$ is optimal for $(\hat P_2)$ and the optimal value is
\begin{align*}
v(\hat P_2) &= \mu h \bigr (\hat {\mb{X}}\bigr) - h \bigr (\hat {\mb{Y}}\bigr) + \frac{1}{2} \log \bigr ( \left (2 \pi e \right)^r \bigr |\mb{K}_{\hat{\mb{Y}}}-\tilde{\mb{B}}_2^{*} \bigr| \bigr )- \frac{\mu}{2}\log \bigr ( \left (2 \pi e \right)^r \bigr |\mb{K}_{\hat{\mb{X}}}-\tilde{\mb{B}}_2^{*} \bigr| \bigr )\\
&= \frac{\mu}{2}\log \bigr ( \left (2 \pi e \right)^r \bigr |\mb{K}_{\hat{\mb{X}}}\bigr| \bigr )-\frac{1}{2} \log \bigr ( \left (2 \pi e \right)^r \bigr |\mb{K}_{\hat{\mb{Y}}} \bigr| \bigr )+ \frac{1}{2} \log \bigr ( \left (2 \pi e \right)^r \bigr |\mb{K}_{\hat{\mb{Y}}}-\tilde{\mb{B}}_2^{*} \bigr| \bigr )- \frac{\mu}{2}\log \bigr ( \left (2 \pi e \right)^r \bigr |\mb{K}_{\hat{\mb{X}}}-\tilde{\mb{B}}_2^{*} \bigr| \bigr )\\
&= \frac{\mu}{2}\log \frac{\bigr |\mb{K}_{\hat{\mb{X}}}\bigr|}{\bigr |\mb{K}_{\hat{\mb{X}}}-\tilde{\mb{B}}_2^{*} \bigr|} - \frac{1}{2} \log \frac{\bigr |\mb{K}_{\hat{\mb{Y}}}\bigr|}{\bigr |\mb{K}_{\hat{\mb{Y}}}-\tilde{\mb{B}}_2^{*} \bigr|}.
\end{align*}

\end{document}